    \newtheorem{theorem}{Theorem}[section]
    \newtheorem{lemma}[theorem]{Lemma}
    \newtheorem{definition}[theorem]{Definition}
    \newtheorem{proposition}[theorem]{Proposition}
    \newtheorem{corollary}[theorem]{Corollary}
    \theoremstyle{remark}
    \newtheorem{remark}{Remark}[theorem]
    \newtheorem{example}{Example}[theorem]
    \numberwithin{equation}{section}
    \newenvironment{sqremark}{\begin{remark}}{\hfill \tiny $\blacksquare$ \end{remark}}
    \newcommand{\R}{\mathbb{R}}
    \newcommand{\N}{\mathbb{N}}
    \newcommand{\E}{\mathbb{E}}
    \def\P{\mathbb{P}}
    \newcommand\F{\mathcal{F}}
    \newcommand{\indic}[1]{\mathds{1}_{\left\{ #1 \right\}}}
    \newcommand{\alphabet}[1][d]{{A}_{#1}}
    \newcommand{\TA}[1][d]{T(\R^{#1})}
    \newcommand{\eTA}[1][d]{T((\R^{#1}))}
    \newcommand{\tTA}[2][d]{T^{#2}(\R^{#1})}
    \newcommand{\word}[1]{{\mathcolor{NavyBlue}{\mathbf{#1}}}}
    \newcommand{\emptyword}{{\color{NavyBlue}\textup{\textbf{\o{}}}}}
    \newcommand{\proj}[1]{|_{\word{#1}}}
    \newcommand{\conpow}[1]{^{\otimes #1}}
    \newcommand{\shuprod}{\mathrel{\sqcup \mkern -3.2mu \sqcup}}
    \newcommand{\shupow}[1]{^{\shuprod #1}}
    \NewDocumentCommand{\sig}{O{t} O{W}}{\widehat{\mathbb{#2}}_{#1}}
    \NewDocumentCommand{\sigE}{O{t} O{W}}{\E[\sig[#1][#2]]}
    \NewDocumentCommand{\sigX}{O{t} O{X}}{\mathbb{#2}_{#1}}
    \NewDocumentCommand{\bracketsigX}{O{t} O{X} m}{\left \langle #3, \sigX[#1][#2] \right \rangle} 
    \NewDocumentCommand{\bracketsig}{O{t} O{W} m}{\left \langle #3, \sig[#1][#2] \right \rangle}   
    \NewDocumentCommand{\bracketsigE}{O{t} O{W} m}{\left \langle #3, \sigE[#1][#2] \right \rangle} 
    \newcommand{\bsigma}{\bm{\sigma}}
    \newcommand{\bell}{\bm{\ell}}
    \newcommand{\bp}{\bm{p}}
    \newcommand{\sign}[1]{\textnormal{sign} \left( #1 \right)}
    \title{Martingale property and moment explosions in signature volatility models}
    \author[1]{Eduardo Abi Jaber\thanks{eduardo.abi-jaber@polytechnique.edu. The first author is grateful for the financial support from the Chaires FiME-FDD, Financial Risks, Deep Finance \& Statistics and Machine Learning and systematic methods in finance at École Polytechnique.}}
    \author[2]{Paul Gassiat\thanks{gassiat@ceremade.dauphine.fr}}
    \author[1,3]{Dimitri Sotnikov\thanks{dmitrii.sotnikov@polytechnique.edu. The author is grateful for the financial support provided by Engie Global Markets.}}
    \affil[1]{École Polytechnique, CMAP}
    \affil[2]{CEREMADE, Université Paris Dauphine-PSL $\&$ DMA, ENS Paris $\&$ Institut Universitaire de France}
    \affil[3]{Engie Global Markets}
\begin{document}

\maketitle

\begin{abstract}
We study the martingale property and moment explosions of a signature volatility model, where the volatility process of the log-price is given by a linear form of the signature of a time-extended Brownian motion. Excluding trivial cases, we demonstrate that the price process is a true martingale if and only if the order of the linear form is odd and a correlation parameter  is negative.  The proof involves a fine analysis of the explosion time of a signature stochastic differential equation. This result is of key practical relevance, as it highlights that, when used for approximation purposes, the linear combination of signature elements must be taken of odd order to preserve the martingale property. Once martingality is established, we also characterize the existence of higher moments of the price process in terms of  a condition on a correlation parameter.
\end{abstract}

\section{Introduction}
The martingale property and moment explosions of the asset price process are fundamental considerations  in stochastic volatility modeling, particularly due to their implications for derivative pricing and numerical methods. The martingale property is a key criterion for assessing the validity of a stochastic volatility model, as emphasized by \citet{sepp2023robust_SV}. In particular, strict local martingale models exhibit structural deficiencies, such as the failure of put-call parity and deviations in the asymptotic behavior of implied volatility wings \citep*{cox2005local, heston2007options, jacquier2015impliedvolatilitystrictlocal}.  Similarly, the existence of moments for the price process plays a crucial role in both theoretical and computational aspects of option pricing. In particular, the existence of finite moments is crucial for ensuring the numerical stability and convergence of Monte Carlo methods used to compute derivative prices and their sensitivities. Moments also play a key role in the proof of asymptotic formulae, allowing one to transition from large deviations of the price process to call price asymptotics, see for instance \cite[Section 4.2]{FGP21} { and \cite{lee2004moment}}.
 
Both of these problems have been extensively studied in the literature, dating back to the works of \citet{Girsanov1960} and \citet{Novikov1973}. However, Novikov's criterion is often not applicable to stochastic volatility models. { An alternative approach consists in using a change-of-measure argument to transform the problem into the study of SDE explosion; see, for instance, \citet[Chapter 3.7]{McKean1969}. This technique was applied to stochastic volatility modeling by \citet{Sin98}, who observed that in certain models}---such as the Hull--White model---the stock price can be a strict martingale for certain parameter values. In this context, \citet{Lions2007} present general sufficient conditions for martingality and strict local martingality when the volatility process follows a stochastic differential equation (SDE), and also address the problem of moment existence. Subsequently, \citet{Mijatovic2012} provide a more general criterion for martingality in SDE-based models, while \citet*{bernard2017martingale} extend this work to the general spot-volatility correlation case $-1 \leq \rho \leq 1$, offering a complete classification of convergence properties for general one-factor Markovian stochastic volatility models.

Most of these works, however, focus on Markovian models where both the asset price and volatility processes are solutions of SDEs. For example, \citet{Jourdain2004} proposes a martingality criterion, along with integrability conditions for prices in SABR-like models. Similarly, \citet{Andersen2007} address the moment problem and provide sufficient martingality conditions for SABR-like and Heston-like models.

{ For the non-Markovian setting, we refer to the works of \citet{Criens2020NoAI} and \citet{Criens2018}, which establish general results on the martingale property of stochastic exponential martingales for Itô processes. However, deriving such results for specific path-dependent models often requires careful, case-by-case analysis. A notable class of such models considers volatility processes governed by stochastic Volterra equations.} For the rough Bergomi model, \citet{Gassiat2018OnTM} establishes conditions for both martingality and non-martingality, as well as a sufficient condition for moment explosions. \citet*{gerhold2024} propose an alternative approach to verifying higher-order moment explosions by examining the integrability of the price supremum and apply this method to the rough Bergomi model. 
In the context of the Volterra Heston model, \citet*{jaber2019affinevolterraprocesses} establish conditions for the martingale property, while \citet*{Gerhold2018} study the problem of moment explosions. In the class of so-called path-dependent volatility (PDV) models \citep{Guyon02092023},  \citet{Nutz2024} analyze the martingale property in a two-factor PDV model. Additionally, \citet{Ruf_2015} provides a general martingality criterion for stochastic exponentials defined as non-anticipative path-dependent functionals of solutions to stochastic differential equations. {Despite its generality, this result cannot be directly applied to stochastic volatility models and requires a thorough analysis to verify the condition for each particular model.}

Recently, signature-based models have gained popularity in non-Markovian modeling. Such models were first studied by \citet*{arribas2020sig} and further developed by \citet*{cuchiero2022theocalib, cuchiero2023spvix}, and \citet{abijaber2024signature}. Their distinguishing feature is that the volatility process is represented as a (possibly infinite) linear combination of signature elements. These models can serve both for the approximation, see \citet{cuchiero2022theocalib}, and the exaxt representation, see \citet{abijaber2024signature}, of a wide range of  models, though in both cases, the linear combination of signature elements must be truncated for practical use. As we will demonstrate, the truncation order plays a crucial role in determining whether the price process is a true martingale. However, no results on either martingality or moment explosions have been established for these models so far. The aim of this paper is to fill this gap by providing necessary and sufficient conditions for the martingality of the price process and by studying the finiteness of its moments in signature volatility models. We adopt the multiplicative framework of \citet{abijaber2024signature, cuchiero2023spvix}, in which the price process is given by a stochastic exponential, making the question of martingality non-trivial.

The Markovian setting provides key insights into the martingality problem. In particular, for truncated exponential Ornstein-Uhlenbeck models with negative spot-volatility correlation, obtained by truncating the power series expansion of the exponential, the price process is a martingale if and only if   the truncation order is odd. A similar pattern appears in polynomial volatility models, where volatility is expressed as a polynomial function of a Brownian motion or an Ornstein–Uhlenbeck process, as in the Quintic model of \citet*{quintic2022}. {These results can be established using techniques similar to those in \citet[Theorem 2.4(i)]{Lions2007}.} Since path signatures play an analogous role to polynomials on path space, we expect similar characterizations to hold beyond the Markovian framework. 

In this work, we show that the order of the linear combination of signature elements plays the same role as the order of a polynomial and that a similar result holds for signature volatility models. Specifically, given two Brownian motions, $B$ and $W$, correlated with coefficient $\rho \in [-1, 1]$, we consider the following signature volatility model for a price process $S$:  
\begin{align}  
    \dfrac{dS_t}{S_t} &= \sigma_t dB_t, \quad t \geq 0, \\  
    \sigma_t &= \bracketsig{\bsigma}, 
\end{align}  
where the volatility $\sigma_t$ is a finite linear combination of order $N$ in the signature $ \sig $ of the extended Brownian motion $ \widehat{W} = (t, W) $, defined as a sequence of iterated Stratonovich integrals, with the first orders given by
\begin{equation}
    \sig^0 = 1,
    \quad
    \sig^1 =
    \begin{pmatrix}
        t \\
         W_t
    \end{pmatrix},
    \quad
    \sig^2 =
    \begin{pmatrix}
        \frac{t^2}{2!} & \int_0^t s d W_s \\
        \int_0^t W_s ds & \frac{W_t^2}{2!}
    \end{pmatrix}, \quad \ldots
\end{equation}

\textbf{The martingale property.} Our main result,  Theorem~\ref{T:main_martingality}, establishes a very simple necessary and sufficient condition for the martingality of the price process $S$ under the assumption that the leading coefficient $\sigma^{\word{2}\conpow{N}}$ in front of the signature element $\frac{W_t^N}{N!} $ is nonzero. More precisely, in the nontrivial case where $ N \geq 2$ and $\rho \neq 0 $, we have that 
\begin{center}  
\textit{the price process \( S \) is a true martingale if and only if \( N \) is odd and \( \rho\sigma^{\word{2}\conpow{N}} \leq 0 \).}  
\end{center}  

The standard technique for verifying the martingale property reduces to studying the explosion time of a certain  stochastic differential equation (SDE) following a change of probability measure. In our case, this yields a signature SDE whose explosion time is intricate to analyze, see Proposition~\ref{prop:sigODE} and Theorem~\ref{T:explosion_criterion}. To address this, we establish two key bounds for linear combinations of signature elements in Lemmas~\ref{lemma:linear_from_upper_bound} and \ref{lem:expYV}. These bounds exploit the fact that the shuffle algebra forms a polynomial algebra over the \citet{Lyndon1954} words, thanks to the \citet{Radford1979ANR} Theorem. Using these bounds, we then prove the finiteness or explosion of the solution of the signature SDE, depending on the parity of the order $N$ and the sign of $\rho$, leading to Theorem~\ref{T:main_martingality}. We stress that such result is of key practical relevance, as it highlights that, when used for approximation purposes, the linear combination of signature elements must be taken of odd order to preserve the martingale property. The non-positivity of $\rho \sigma^{\word{2}\conpow{N}}$ does not imply that the spot-volatility correlation is necessarily non-positive, see Remark~\ref{R:spotvolcorrel}. 

Additionally, we extend the martingale characterization to a volatility process constructed from the signature of a multidimensional Brownian motion, as presented in Theorem~\ref{T:main_martingality_multid}.

\textbf{Moment explosions.} Assuming the conditions for $S$ to be a martingale, we prove in Theorem~\ref{thm:moments}  that
\begin{center}
    \textit{the moment $\E[S_T^m]$ of order $m$ is finite for any $T>0$ if $|\rho| > \sqrt{1 - \frac{1}{m}}$. \\ Conversely, if $|\rho| <  \sqrt{1 - \frac{1}{m}}$, then $\E[S_T^m] = +\infty$ for all $T > 0$.}
\end{center}

This result aligns well with those obtained in the Markovian case, such as \cite[Theorem 2.3]{Lions2007} and \cite[Proposition 6]{Jourdain2004}. The same condition for moment explosions in the non-Markovian rough Bergomi model was proved in \cite{Gassiat2018OnTM}, {see also \cite{Gul20}}. {To prove Theorem~\ref{thm:moments}, {we rely on the \citet{BD98} formula}, which reduces the moment existence problem to the finiteness of an optimal control problem value. In the case of infinite moments, we explicitly construct a sequence of controls to show that the problem value is infinite, whereas in the case of finite moments, we establish the finiteness of the value function using the bounds on linear combinations derived earlier. { For the critical case   $|\rho| = \sqrt{1 - \frac{1}{m}}$,  we show in Proposition~\ref{prop:crit} that the finiteness of the $m$-th moment may or may not hold depending on the other coefficients of $\bsigma$ and the time horizon $T$.}
}

\paragraph{Outline.}
Section~\ref{sec:signature} introduces the necessary definitions and key concepts related to tensor algebra and signatures. In Section~\ref{section:main_res}, we present our main results concerning the martingality criterion and moment explosions for the signature volatility model. { Section~\ref{section:general_crit} demonstrates how the martingality problem can be reduced to the explosion problem of a signature SDE. Section~\ref{sect:bounds_proofs} establishes two essential bounds on signature words, which are extensively used in the subsequent proofs.} Sections~\ref{sect:martingality_expec} and~\ref{section:proof_local_martingality} are dedicated to proving the martingality and strict local martingality criteria, respectively. Finally, Section~\ref{sect:moments} contains the proof concerning moment explosions.

\section{Preliminary concepts of path signatures} \label{sec:signature}

\subsection{Tensor algebra}

    In this section, we introduce the notation and definitions necessary for dealing with signatures of continuous semimartingales. We also refer the reader to the introductory sections in \cite{cuchiero2023spvix, abijaber2024signature}.
    
    Fix $d \in \N$. For $n\in\N$, let $(\R^d) \conpow{n}$ denote the space of tensors of order $n$ and $(\R^d) \conpow{0} = \R$. 
    The path signatures defined below, will be defined as tensor sequences, i.e.~the elements of the extended tensor algebra space $\eTA $ over $\R^d$ defined by
    $$ \eTA := \left\{ \bell = (\bell^n)_{n=0}^\infty : \bell^n \in (\R^d) \conpow{n} \right\}. $$
    The truncated tensor algebra $\tTA{N}$ is defined as the space of sequences such that all their elements of order greater than $N\in\N$ are equal to zero. We will also denote by $\TA$ the space of all finite tensor sequences, i.e.~$\TA := \bigcup_{N \in \N } \tTA{N}$,
    and we note than $\TA \subset \eTA$.

    Let $\{ e_1, \dots, e_d \} \subset \R^d$ denote the canonical basis of $\mathbb{R}^d$. Then, $(e_{i_1} \otimes \cdots \otimes e_{i_n})_{(i_1, \dots, i_n) \in \{ 1, \dots, d \}^n}$ is a basis of $(\R^d) \conpow{n}$. To simplify the notations, we identify its elements $e_{i_1} \otimes \cdots \otimes e_{i_n}$ with the words $\word{i_1 \cdots i_n}$ of length $n$ in the alphabet $\alphabet = \{ \word{1}, \word{2}, \dots, \word{d} \}$. All the words of length $n$ will be denoted by
    \begin{equation} \label{eq:sig_basis}
        V_n := \{ \word{i_1 \cdots i_n}: \word{i_k} \in \alphabet \text{ for } k = 1, 2, \dots, n \}.
    \end{equation}
    
    We adopt the convention of denoting by $\emptyword$ the empty word and by $V_0 = \{ \emptyword \}$ identified with the basis of $(\R^d) \conpow{0} = \R$. The basis of $\eTA$ can be then identified with $V := \cup_{n \geq 0} V_n$, and each sequence $\bell \in \eTA$ can be written as
    \begin{equation} \label{eq:sig_expansion}
        \bell = \sum_{n=0}^\infty \sum_{\word{v} \in V_n} \bell^{\word{v}} \word{v},
    \end{equation}
    where $\bell^\word{v} = (\bell^{n})^\word{v} \in\R$ is the coefficient of $\bell$ corresponding to the word $\word{v}$ and $n=|\word{v}|$ stands for its length. The concatenation $\bell \word{v}$ of elements $\bell \in \eTA$ and the word $\word{v} = \word{i_1 \cdots i_n}$ means $\bell \otimes e_{i_1} \otimes \cdots \otimes e_{i_n}$.
    
    We also define the bracket between $\bell \in \TA$ and $\bp \in \eTA$ by
    \begin{align} \label{eq:bracket}
        \langle \bell, \bp \rangle 
        := \sum_{n=0}^{\infty} \sum_{\word{v} \in V_n} \bell^\word{v} \bp^\word{v}. 
    \end{align}
    Note that since $\bell$ is finite, the sum contains only a finite number of terms, so that the bracket is always well-defined.

    Another key operation on the space of tensor sequences is the \textit{shuffle product}, which generalizes the integration by parts formula and plays a crucial role in the signature linearization property stated in Proposition~\ref{prop:shufflepropertyextended} below.
    \begin{definition}[Shuffle product] \label{def:shuffleprod}
        For two words $\word{v}, \word{w} \in V$ and two letters $\word{i},\word{j}\in\alphabet$, the shuffle product $\shuprod: V \times V \to \TA$ is defined recursively by
        \begin{align*}
            (\word{v} \word{i}) \shuprod (\word{w} \word{j}) &
            = (\word{v} \shuprod (\word{w} \word{j})) \word{i} + ((\word{v} \word{i}) \shuprod \word{w}) \word{j},
            \\ \word{w} \shuprod \emptyword &
            = \emptyword \shuprod \word{w} = \word{w}.
        \end{align*}
        It extends to $\eTA$ by linearity: for $\bell = \sum\limits_{n=0}^\infty \sum\limits_{\word{v} \in V_n} \bell^{\word{v}} \word{v}$ and $\bp = \sum\limits_{n=0}^\infty \sum\limits_{\word{v} \in V_n} \bp^{\word{v}} \word{v}$, the shuffle product is given by 
        \begin{align*}
            \bell\shuprod\bp = \sum_{n=0}^\infty \sum_{\substack{\word{v}, \word{w} \in V \\ |\word{v}| + |\word{w}| = n}}\bell^{\word{v}}\bp^{\word{w}}\word{v}\shuprod\word{w}.
        \end{align*}
        
        Note that the shuffle product is commutative. 
    \end{definition}
        The shuffle product represents all possible riffle shuffles of two separate decks of cards, where the relative order within each individual deck is preserved. For instance,
    $
    \word{12} \shuprod \word{3} = \word{123} + \word{132} + \word{312}.
    $
    For more details on the shuffle product, we refer to \cite{reeshuffles} and \cite{gainesshuffle}. 
    
\subsection{Signatures}
    Let $X$ and $Y$ be two { continuous} semimartingales. Throughout the paper, we denote the Itô integral by $\int_0^\cdot Y_t dX_t$ and the Stratonovich integral by $\int_0^\cdot Y_t \circ dX_t.$ The following relation holds:
    $$\int_0^\cdot Y_t \circ dX_t = \int_0^\cdot Y_t dX_t  + \frac{1}{2} [X,Y]_\cdot.$$

    \begin{definition}[Signature] \label{def:sig}
        Let $(X_t)_{t \geq 0}$ be a continuous $\R^d$-valued semimartingale on a filtered probability space $(\Omega, \F, (\F_t)_{t \geq 0}, \P)$ and fix $T > 0$. The signature of $X$ is defined by  
        \begin{align*}
            \mathbb{X}: \Omega \times [0, T] &
            \to \eTA
            \\ (\omega, t) &
            \mapsto \sigX (\omega) := (1, \sigX^1(\omega), \dots, \sigX^n(\omega), \dots),
        \end{align*}
        where
        $$ \sigX^n := \int_{0 < u_1 < \cdots < u_n < t} \circ d X_{u_1} \otimes \cdots \otimes \circ d X_{u_n} \in (\R^d)^{\otimes n}, \quad n \geq 0, \quad t\in[0, T].$$
    \end{definition}
    
    The signature can be seen as a set of monomials in the path-space, while linear combinations of signature elements $\bracketsigX{\bell}$ play the role of polynomials. For instance, if $d=1$, the signature of $X$ is exactly the sequence of monomials $\left( \frac{1}{n!} (X_t - X_0)^n \right)_{n \in \N}$ and any finite combination $\bracketsigX{\bell}$ for $\bell \in \tTA{N}$ is a polynomial of degree $N$ in $X_t$.
    
    \begin{remark}\label{rmk:sig_iteration_def}
        Each element of the signature $\sigX^n = (\sigX^\word{i_1 \cdots i_n})_{(\word{i_1 \cdots i_n}) \in V_n}$ can be written coordinate-wise in an iterative form
        \begin{align} \label{eq:signdef2}
            \sigX^\word{i_1 \cdots i_n} = \int_0^t \sigX[s]^\word{i_1 \cdots i_{n-1}} \circ d X_s^{\word{i_n}}.
        \end{align}
    \end{remark}

    \subsection{Linear combinations of signature elements}

     For the signature volatility model, we will consider the signature $\sigX[t][\widehat{X}]$ of the time-extended path $\widehat{X}_t := (t, X_t) \in \R^{d + 1}$. The main object of our study is the finite linear combination in $\sigX[t][\widehat{X}]$ given by
    \begin{equation}\label{eq:linear_comb_def}
        \bracketsigX[t][\widehat{X}]{\bell} = \sum_{n=0}^N \sum_{\word{v} \in V_n} \bell^{\word{v}} \sigX[t][\widehat{X}]^{\word{v}}, \quad \bell\in\tTA[d+1]{N},
    \end{equation}
    for some $N \in\N$.
    
    Linear combinations of signature elements have a remarkable property: they allow for the linearization of polynomials formed from these combinations. This property is frequently emphasized in the literature when highlighting the signature’s linearization power.

    \begin{proposition}[Shuffle property] \label{prop:shufflepropertyextended}
        If $\bell_1, \bell_2 \in \tTA[d + 1]{}$, then 
        $$ \bracketsigX[t][\widehat{X}]{\bell_1} \bracketsigX[t][\widehat{X}]{\bell_2} = \bracketsigX[t][\widehat{X}]{\bell_1 \shuprod \bell_2}. $$
    \end{proposition}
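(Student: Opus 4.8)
The plan is to prove the shuffle property by reducing it to the classical shuffle identity for iterated Stratonovich integrals and then extending by bilinearity. Because the bracket $\bracketsigX[t][\widehat{X}]{\cdot}$ is linear in its tensor argument and the shuffle product is bilinear, it suffices to prove the identity when $\bell_1 = \word{v}$ and $\bell_2 = \word{w}$ are two words (basis elements) in the alphabet $\alphabet[d+1] = \{\word{1},\dots,\word{d+1}\}$; the general case then follows by expanding $\bell_1 = \sum_{\word{v}} \bell_1^{\word{v}} \word{v}$, $\bell_2 = \sum_{\word{w}} \bell_2^{\word{w}} \word{w}$ and using that $\bracketsigX[t][\widehat{X}]{\word{v}} = \sigX[t][\widehat{X}]^{\word{v}}$. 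So the core claim becomes
$$ \sigX[t][\widehat{X}]^{\word{v}} \, \sigX[t][\widehat{X}]^{\word{w}} = \bracketsigX[t][\widehat{X}]{\word{v} \shuprod \word{w}} = \sum_{\word{u}} (\word{v} \shuprod \word{w})^{\word{u}} \, \sigX[t][\widehat{X}]^{\word{u}}. $$

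First I would set up the induction on the total length $|\word{v}| + |\word{w}|$. The base case is when one of the words is empty, say $\word{w} = \emptyword$: then $\sigX[t][\widehat{X}]^{\emptyword} = 1$ and $\word{v} \shuprod \emptyword = \word{v}$, so both sides equal $\sigX[t][\widehat{X}]^{\word{v}}$. For the inductive step, write $\word{v} = \word{v'i}$ and $\word{w} = \word{w'j}$ with $\word{i},\word{j}$ letters. Using the iterative representation of signature coordinates from Remark~\ref{rmk:sig_iteration_def}, namely $\sigX[t][\widehat{X}]^{\word{v'i}} = \int_0^t \sigX[s][\widehat{X}]^{\word{v'}} \circ d\widehat{X}_s^{\word{i}}$, I would apply the Stratonovich integration-by-parts formula to the product $\sigX[t][\widehat{X}]^{\word{v'i}} \sigX[t][\widehat{X}]^{\word{w'j}}$. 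Since the Stratonovich integral obeys the ordinary (deterministic-looking) product rule $d(XY) = X \circ dY + Y \circ dX$, we get
$$ \sigX[t][\widehat{X}]^{\word{v'i}} \sigX[t][\widehat{X}]^{\word{w'j}} = \int_0^t \sigX[s][\widehat{X}]^{\word{v'i}} \, \sigX[s][\widehat{X}]^{\word{w'}} \circ d\widehat{X}_s^{\word{j}} + \int_0^t \sigX[s][\widehat{X}]^{\word{v'}} \, \sigX[s][\widehat{X}]^{\word{w'j}} \circ d\widehat{X}_s^{\word{i}}. $$
Each integrand product has total word-length strictly smaller, so the induction hypothesis applies: $\sigX[s][\widehat{X}]^{\word{v'i}} \sigX[s][\widehat{X}]^{\word{w'}} = \bracketsigX[s][\widehat{X}]{(\word{v'i}) \shuprod \word{w'}}$ and $\sigX[s][\widehat{X}]^{\word{v'}} \sigX[s][\widehat{X}]^{\word{w'j}} = \bracketsigX[s][\widehat{X}]{\word{v'} \shuprod (\word{w'j})}$. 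Plugging in, using linearity of the Stratonovich integral and the iterative definition once more to absorb the final $\circ d\widehat{X}^{\word{j}}$ and $\circ d\widehat{X}^{\word{i}}$ into concatenation, yields $\bracketsigX[t][\widehat{X}]{((\word{v'i}) \shuprod \word{w'})\word{j} + (\word{v'} \shuprod (\word{w'j}))\word{i}}$, which is exactly $\bracketsigX[t][\widehat{X}]{(\word{v'i}) \shuprod (\word{w'j})}$ by the recursive definition of $\shuprod$ in Definition~\ref{def:shuffleprod}. This closes the induction.

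The main obstacle, and the only genuinely delicate point, is the justification that the Stratonovich product rule holds without any correction term for the processes at hand. This is where it matters that the signature is defined via Stratonovich (rather than Itô) integrals: each $\sigX[\cdot][\widehat{X}]^{\word{u}}$ is a continuous semimartingale, and for continuous semimartingales $U, V$ one has $d(UV) = U \circ dV + V \circ dU$ precisely because $U \circ dV = U\,dV + \tfrac12 d[U,V]$ and $V \circ dU = V\,dU + \tfrac12 d[U,V]$, so the two quadratic-covariation halves sum to the full $d[U,V]$ appearing in the Itô product rule. One should also check that $\widehat{X}^{\word{1}}_t = t$ contributes no quadratic variation, so the formula is uniform across all letters including the time component. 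Everything else is bookkeeping: expanding words, using bilinearity of both $\shuprod$ and the bracket, and invoking Remark~\ref{rmk:sig_iteration_def}. I would also remark that the finiteness of $\bell_1, \bell_2$ (so that $\bell_1 \shuprod \bell_2 \in \TA$) guarantees all the brackets appearing are well-defined finite sums, consistent with the definition of $\langle \cdot, \cdot \rangle$ in~\eqref{eq:bracket}.
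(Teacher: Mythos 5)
Your proof is correct. The paper does not actually prove this proposition---it simply cites \cite[Theorem 2.15]{Lyons07DE}---so your argument is a self-contained derivation of the standard result behind that citation: reduction to words by bilinearity, induction on total word length, and the correction-free Stratonovich product rule combined with the recursive definition of $\shuprod$. You correctly identify the one genuinely probabilistic point (that $U \circ dV + V \circ dU = d(UV)$ for continuous semimartingales, with the two half-covariation terms reassembling the It\^o bracket); the only step you use implicitly without flagging it is the composition rule $\int U \circ d\bigl(\int H \circ dX\bigr) = \int UH \circ dX$ for Stratonovich integrals, which is needed to turn $\int_0^t \sigX[s][\widehat{X}]^{\word{v'i}} \circ d\sigX[s][\widehat{X}]^{\word{w'j}}$ into $\int_0^t \sigX[s][\widehat{X}]^{\word{v'i}}\, \sigX[s][\widehat{X}]^{\word{w'}} \circ d\widehat{X}^{\word{j}}_s$; this is standard for continuous semimartingales and does not affect correctness.
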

    
    \begin{proof}
       {See \cite*[Theorem 2.15.]{Lyons07DE}.} 
    \end{proof}

\section{Main results}\label{section:main_res}

Let $(\Omega, \mathcal{F}, (\mathcal{F}_t)_{t \geq 0}, \mathbb{P})$ be a filtered probability space supporting a two-dimensional correlated Brownian motion $(B, W)$ with correlation parameter $\rho \in [-1,\, 1]$ and with the filtration generated by this Brownian motion. We set $\bar\rho  = \sqrt{1 - \rho^2}$ and, for $\rho \in (-1, 1)$, we define
\begin{equation}
    B^\perp := \dfrac{W - \rho B}{\bar\rho}, \quad W^\perp := \dfrac{B - \rho W}{\bar\rho},
\end{equation}
the Brownian motions independent from $B$ and $W$ respectively. 

We consider a signature volatility model in the sense of \cite{abijaber2024signature}, see also \cite*{cuchiero2023spvix}. In this model, the stock price process $S$ under the risk-neutral probability $\mathbb{P}$ is given by a stochastic volatility model where the volatility process $\sigma_t$ is a finite linear combination of the elements of signature $\sig$ of the time extended Brownian motion $\widehat W_t := (t, W_t)$:
\begin{equation}\label{eq:price_model}
    \begin{aligned}
    \dfrac{dS_t}{S_t} &= \sigma_t dB_t, \quad t \geq 0, \\
    \sigma_t &= \bracketsig{\bsigma}.
    \end{aligned}
\end{equation}

Here $\bsigma \in \tTA[2]{N}$  is a tensor sequence of order $N \in \mathbb N$ in the form
\begin{equation}\label{eq:sigma_def}
    \bsigma = \sum_{k \geq 0}^{N}\sum_{|\word{v}| = k} \sigma^\word{v} \word{v}, \quad \sigma^{\word v} \in \R, 
\end{equation}

We note that 
\begin{equation}
    \E \left[\int_0^T \sigma_t^2\, dt\right] = \bracketsigE[T]{\bsigma\shupow{2}\word{1}} ,
\end{equation}
is finite, as the right-hand side involves a sum of a finite number of terms where the expected signature $\sigE[T]$ is given by Fawcett's formula \cite{fawcett} extended to the time-augmented Brownian motion in \cite[Proposition 4.10]{lyonsvictoir}.
This ensures the existence of a unique strong solution to \eqref{eq:price_model} given by the exponential local martingale
\begin{equation}
    S_t = S_0\exp\left\{-\dfrac12 \int_0^t \sigma_s^2\, ds + \int_0^t\sigma_s\,dB_s\right\}, \quad t \geq 0.
\end{equation}
We note that the volatility dynamics in the model \eqref{eq:price_model} are driven by the signature \( \sig \) of a one-dimensional Brownian motion. We will first characterize the martingale property in Subsection~\ref{sect:main_res_mart}, then study moment explosions in Subsection~\ref{sect:main_res_mom}, before extending the characterization of the martingale property to volatility dynamics driven by  a multidimensional Brownian motion in Subsection~\ref{sect:multid_extension}.

\subsection{Martingale property}\label{sect:main_res_mart}
Our main result in this section provides a necessary and sufficient condition for the price process $S$, defined by \eqref{eq:price_model}, to be a martingale. This condition is given in terms of the parity of the signature level $N$ and the correlation parameter $\rho$.  
\begin{theorem}\label{T:main_martingality}  Let $S$ be given by \eqref{eq:price_model}. 
    Suppose that the leading coefficient $\sigma^{\word{2}\conpow{N}} \not= 0$.
    \begin{enumerate}
        \item[(i)] If $\rho = 0$ or $N = 1$, then $S$   is a martingale.
        \item[(ii)] If $\rho \not= 0$ and $N \geq 2$, then $S$  is a martingale if and only if the order $N$ is odd and $\rho\sigma^{\word{2}\conpow{N}} \leq 0.$
    \end{enumerate}
    
\end{theorem}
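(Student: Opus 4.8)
The plan is to follow the classical Girsanov-reduction strategy for checking the martingale property of the stochastic exponential $S$, adapted to the signature setting. Since $S$ is a nonnegative local martingale, it is a true martingale on $[0,T]$ if and only if $\E[S_T] = S_0$, and the standard approach is to work under the measure $\mathbb{Q}$ (on each finite horizon, or via a localizing sequence as in \citet{Sin98,Lions2007,Mijatovic2012}) obtained by a formal change of measure using $S/S_0$ as density. Under $\mathbb{Q}$ the process $B$ acquires a drift $\sigma_t\,dt$, hence $W = \rho B + \bar\rho B^\perp$ acquires a drift $\rho\sigma_t\,dt$, while $B^\perp$ stays a $\mathbb{Q}$-Brownian motion. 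The loss of mass $S_0 - \E[S_T]$ is exactly the $\mathbb{Q}$-probability that the $\mathbb{Q}$-dynamics of the driving process — more precisely, the signature SDE satisfied by $(\sig)$ with the added drift — explode before time $T$. This is precisely what Proposition~\ref{prop:sigODE} and Theorem~\ref{T:explosion_criterion} are set up to formalize, so the bulk of the proof is reduced to: \emph{$S$ is a martingale on $[0,T]$ iff the $\mathbb{Q}$-signature SDE does not explode on $[0,T]$ a.s., and this holds for all $T$ iff $N$ odd and $\rho\sigma^{\word{2}\conpow{N}}\le 0$ (in the nontrivial case).}

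Granting the reduction, the heart of the matter is the explosion analysis of the perturbed signature SDE, which I would organize around the parity of $N$ and the sign of $\rho\sigma^{\word{2}\conpow{N}}$. For part (i): if $\rho=0$, then $B^\perp$ and $W$ are independent, the $\mathbb{Q}$-drift on $W$ vanishes, so the law of $\sigma$ under $\mathbb{Q}$ equals its law under $\mathbb{P}$, hence $\E^{\mathbb{Q}}[\int_0^T\sigma_t^2\,dt]<\infty$ (Fawcett's formula, as in the excerpt) and no explosion occurs — equivalently one checks a Novikov-type / linear-growth condition directly. If $N=1$, then $\sigma_t = \sigma^{\emptyword} + \sigma^{\word 1} t + \sigma^{\word 2} W_t$ is affine in $W_t$, the $\mathbb{Q}$-dynamics of $W$ are an Ornstein–Uhlenbeck-type (linear) SDE with Gaussian solution, so again all moments are finite and $S$ is a martingale. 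For part (ii), the dangerous term is the leading one: under $\mathbb{Q}$, $W$ solves $dW_t = \rho\,\sigma_t\,dt + \rho\,dB^\perp_t + \cdots$, and $\sigma_t$ contains $\sigma^{\word 2\conpow N} W_t^N/N!$, so heuristically $\dot W \gtrsim c\, W^N$ with $c$ having the sign of $\rho\sigma^{\word 2\conpow N}$. When $N$ is even, $W^N\ge 0$, so a positive-sign coefficient forces finite-time blow-up on a set of positive probability regardless of sign of $\rho$ (and one checks the boundary/degenerate subcases), giving strict local martingality — this is where Lemma~\ref{lem:expYV} and the Radford/Lyndon polynomial-algebra structure are used to control the genuinely non-Markovian lower-order signature terms and show they cannot save the blow-up. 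When $N$ is odd, $W^N$ can be negative, and if $\rho\sigma^{\word 2\conpow N}\le 0$ the leading drift pushes $W$ back toward zero, so one expects no explosion; here Lemma~\ref{lemma:linear_from_upper_bound} should furnish an a priori bound of the form $|\bracketsig{\bsigma}| \le C(1 + |W_t|^N)$ together with a one-sided (Osgood/comparison-type) estimate on the $\mathbb{Q}$-dynamics of $|W_t|$ — an SDE with super-linear but \emph{mean-reverting} drift of odd degree, whose moments grow at most like those of the unperturbed process on finite horizons — to conclude $\E^{\mathbb{Q}}[\int_0^T \sigma_t^2\,dt]<\infty$ and hence no explosion. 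For odd $N$ with $\rho\sigma^{\word 2\conpow N}>0$, the same comparison in reverse gives blow-up from one side with positive probability, hence strict local martingality.

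The structure of the write-up would therefore be: (1) invoke Proposition~\ref{prop:sigODE}/Theorem~\ref{T:explosion_criterion} to reduce martingality to non-explosion of the $\mathbb{Q}$-signature SDE; (2) dispatch the $\rho=0$ and $N=1$ cases by the Gaussian/affine argument above; (3) for $N\ge 2$, $\rho\ne 0$, set $\xi_t = \rho\,(W_t/|\rho|)$ or simply track $W_t$ under $\mathbb{Q}$ and extract the scalar comparison ODE/SDE governed by the leading term, citing Lemma~\ref{lemma:linear_from_upper_bound} for the upper bound $|\sigma_t|\le C(1+|W_t|^N)$ and Lemma~\ref{lem:expYV} for the matching lower bound on the drift that drives blow-up in the bad cases; (4) combine via an Osgood-type criterion: non-explosion iff the leading drift is (weakly) mean-reverting, i.e.\ iff $N$ odd and $\rho\sigma^{\word 2\conpow N}\le 0$, translating back to (strict local) martingality.

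I expect the main obstacle to be step (3)–(4): the signature SDE is only "approximately" a scalar ODE in $W_t$ because the lower-order Lyndon-word signature terms ($\int_0^t W_s\,ds$, iterated integrals, etc.) are genuinely path-dependent and of degree up to $N-1$ in $W$, so one cannot literally invoke a one-dimensional comparison theorem. The delicate work — precisely what Lemmas~\ref{lemma:linear_from_upper_bound} and \ref{lem:expYV} are designed for — is to show that these subleading terms are dominated by the leading $W^N$ term uniformly enough (up to the explosion time) that the blow-up / non-blow-up dichotomy is still governed solely by the sign and parity of the leading coefficient; handling the boundary cases ($\rho=\pm 1$, and $\rho\sigma^{\word 2\conpow N}=0$ with other coefficients nonzero) and making the "explosion on a set of positive probability" argument rigorous under $\mathbb{Q}$ will require care.
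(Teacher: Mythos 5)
Your overall architecture coincides with the paper's: the Girsanov reduction to non-explosion of a signature SDE is exactly Theorem~\ref{T:explosion_criterion} (with the technical point you flag handled there by transferring the law of the stopped process back to $\P$ via weak uniqueness, rather than constructing a global measure $\widehat\P$), the cases $\rho=0$ and $N=1$ are dispatched as you say, and your blow-up argument for the necessity direction --- restrict to a positive-probability event on which $W$ stays near $\lambda t$, isolate the leading term $\sigma^{\word{2}\conpow{N}}X_t^N/N!$, bound the remainder by $C(1+\sup_{s\le t}|X_s|^{N-1})$, and run an ODE comparison interval by interval between the hitting times $\tau_n^X$ --- is essentially the paper's proof in Subsection~\ref{section:proof_local_mart_1D}. (A minor attribution slip: it is Lemma~\ref{lemma:linear_from_upper_bound}, the almost-sure pathwise bound, that controls the subleading terms in the blow-up direction; Lemma~\ref{lem:expYV} is the probabilistic bound used in the other direction.)

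The gap is in the sufficiency half. An ``Osgood/comparison-type'' criterion is not the right tool for proving \emph{non}-explosion of the SDE: comparison transfers one-sided deterministic blow-up well, but to rule out explosion one must control the Brownian fluctuations that push $X$ outward against the mean-reverting drift, and the running-supremum term $C(1+\sup_{s\le t}|X_s|^{N-1})$ in your bound blocks any literal scalar comparison --- an obstacle you name but do not resolve. The paper's actual mechanism, which your sketch does not identify, is a Lyapunov-type moment estimate: apply It\^o's formula to $X^{2N}$ on $[0,T\wedge\tau_n^X]$, observe that the leading drift contribution is proportional to $\sigma^{\word{2}\conpow{N}}X_s^{3N-1}$ with $3N-1$ even (this is where the parity of $N$ enters) and with a strictly negative coefficient (this is where the sign condition enters; the boundary case $\rho\sigma^{\word{2}\conpow{N}}=0$ never arises in (ii) since both factors are assumed nonzero), absorb all subleading terms into it via Young's inequality and Lemma~\ref{lem:expYV}, and conclude that $\E[X_{T\wedge\tau_n^X}^{2N}]$ is bounded uniformly in $n$; Chebyshev then gives $\P(\tau_n^X\le T)\le n^{-2N}\E[X_{T\wedge\tau_n^X}^{2N}]\to 0$. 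Without this (or an equivalent stochastic Lyapunov argument), your step (4) does not close.
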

\begin{proof}
    \textit{(i)} and sufficiency in \textit{(ii)} are proved in Section~\ref{sect:martingality_expec} and necessity in \textit{(ii)} is proved in Subsection~\ref{section:proof_local_mart_1D}.
\end{proof}

{
\begin{sqremark}
Part (i) of the theorem has already been well understood and is covered in the existing literature. Indeed, when \(\rho = 0\), the volatility process is independent of the Brownian motion \(B\) driving the price, and a conditioning argument (see, for example, \citet[Lemma 4.8]{karatzas2007numeraire}) immediately yields the martingale property. In the case \(N = 1\), the volatility process reduces to a Brownian motion with drift, and martingality follows (see, for instance, \citet[Corollary 6.1]{Criens2018}). We nevertheless include this part of the statement for completeness and to illustrate that these particular cases are also naturally covered by our approach.
\end{sqremark}
}

{The results of Theorem~\ref{T:main_martingality} can be illustrated numerically on implied volatility surfaces. Namely, if the spot price $S$ is a strict local martingale, the put-call parity is not verified, and hence, put-option implied volatility is different from the call-option implied volatility which may be not even defined, as shown by \citet{jacquier2015impliedvolatilitystrictlocal}. More precisely, recall that the Black--Scholes Call and Put option prices $C_{\mathrm{BS}}(T, K, \sigma)$ and $P_{\mathrm{BS}}(T, K, \sigma)$ satisfy the put-call parity
$$
C_{\mathrm{BS}}(T, K, \sigma) - P_{\mathrm{BS}}(T, K, \sigma) = S_0 - K,
$$
while the model Call and Put option prices $C(T, K)$ and $P(T, K)$ satisfy only
$$
C(T, K) - P(T, K) = \E[S_T] - K.
$$
Thus, if $P(T, K) = P_{\mathrm{BS}}(T, K, \sigma_{\mathrm{BS}}^P)$, one obtains
\begin{equation}
    C(T, K) = C_{\mathrm{BS}}(T, K, \sigma_{\mathrm{BS}}^P) + \E[S_T] - S_0 =  C_{\mathrm{BS}}(T, K, \sigma_{\mathrm{BS}}^C).
\end{equation}
Hence, the call and put implied volatilities coincide if and only if the put-call parity is satisfied, which holds if and only if $\mathbb{E}[S_T] = S_0$.  In turn, this is equivalent to the martingality of $S$, since $S$ is a positive local martingale and therefore a supermartingale.

We compute the put and call implied volatility smiles via the Monte Carlo method using $10^6$ trajectories of the spot price given by \eqref{eq:price_model} for even ($N = 4$) and odd ($N = 5$) orders and for positive ($\rho = 0.9$) and negative ($\rho = -0.9$) {values of the correlation parameter}. The maturity is taken equal to $T = 1$. The coefficients $\bsigma$ are chosen randomly (same random numbers were used in all cases) from the uniform distribution on $[-0.5,\, 0.5]$, except for the coefficient $\sigma^{\word{2}\conpow{N}}$ fixed equal to $1$.

Simulation results are provided in Figure \ref{fig:sig_smiles}.
As expected, the only case where call and put implied volatilities coincide corresponds to the odd order $N = 5$ and negative correlation $\rho = -0.9$.}

\begin{figure}[h]
    \begin{center}
    \includegraphics[width=0.9\linewidth]{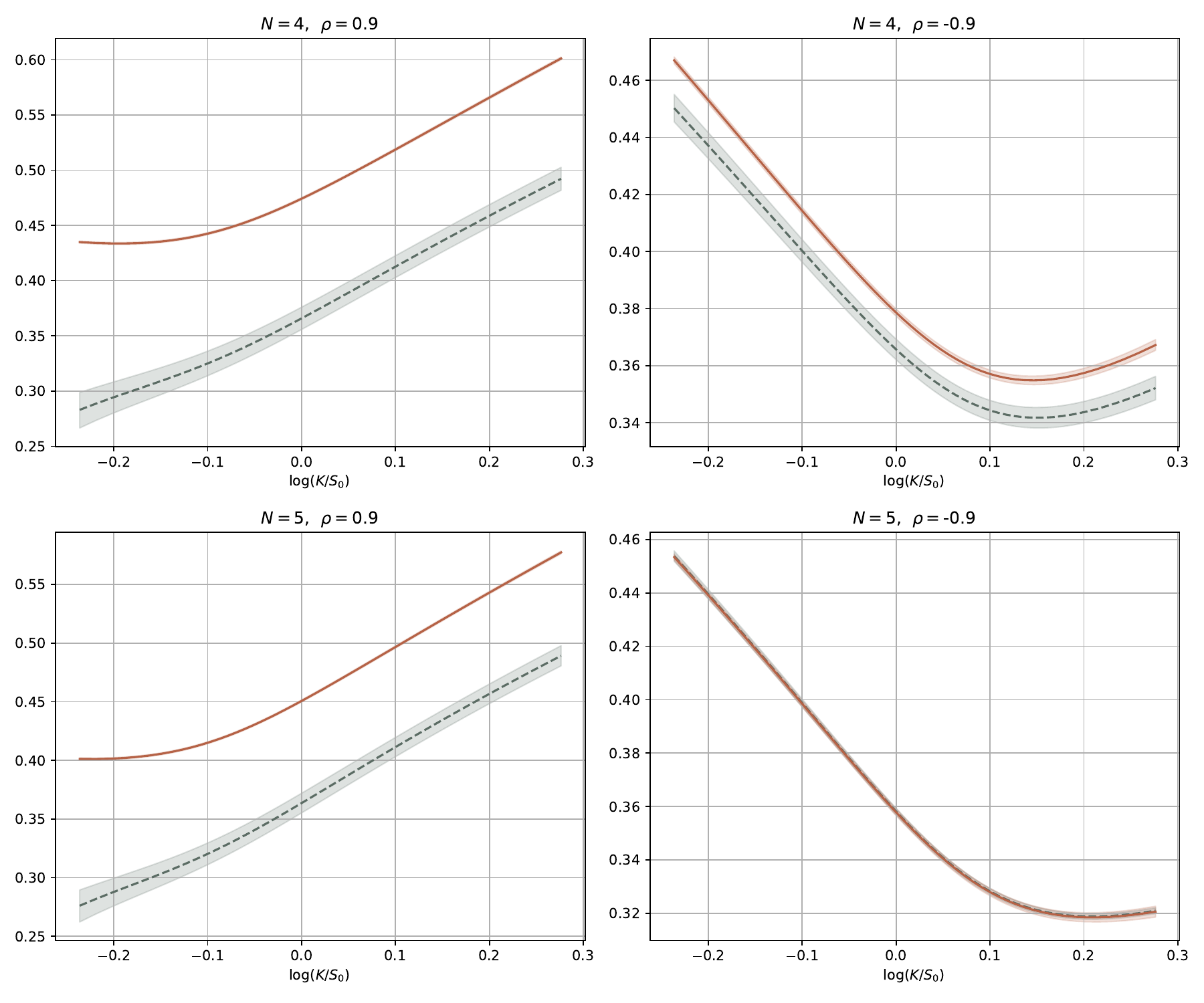}
    \caption{The put (brown) and call (dark green) implied volatility smiles for $N = 4,\ \rho = 0.9$ (upper-left), \\ $N = 4,\ \rho = -0.9$ (upper-right), $N = 5, \ \rho = 0.9$ (lower-left), and $N = 5,\  \rho = -0.9$ (lower-right). \\ The confidence intervals with confidence level of $95\%$ do not intersect anywhere except for the forth plot $N = 5,\  \rho = -0.9$, corresponding to the martingale spot price.}
    \label{fig:sig_smiles}
    \end{center}
\end{figure}

\begin{remark}
    The characterization of Theorem~\ref{T:main_martingality} still holds if, instead of $\sig$, we consider the signature of a time-extended Ornstein--Uhlenbeck process $\widehat{Y}_t = (t, Y_t)$ for the volatility process, where $Y$ satisfies  
    \[
    dY_t = a Y_t dt + b dW_t, \quad t \geq 0,
    \]
    for some $a, b \in \mathbb{R}$. Although slight modifications in the proof of Lemma~\ref{lem:expYV} are needed, the proof of this extension follows the same ideas presented later in this paper. The case of the Ornstein--Uhlenbeck process is important for practical applications as it includes the Quintic model of \citet*{quintic2022} and can be seen as a particular case of the multi-factor signature volatility model proposed by \citet*{cuchiero2023spvix}.
\end{remark}

\begin{remark}
Theorem~\ref{T:main_martingality}  provides useful insights into the practical use of the signature volatility model \eqref{eq:price_model}. Specifically, when performing a linear regression of the volatility process against the driving Brownian motion, using its exact signature representation, or calibrating the model, a suitable truncation order must be chosen. Our result offers clear and simple guidelines for selecting hyperparameters: $N$ should be chosen to be odd, while the condition on $ \rho\sigma^{\word{2}\conpow{N}} $ can be easily incorporated as a constraint in the optimization problem. Failing to do so could have dramatic consequences for martingality, as illustrated in Figure~\ref{fig:sig_smiles}.
\end{remark}

\begin{remark}\label{R:spotvolcorrel}
    We note that the condition $ \rho\sigma^{\word{2}\conpow{N}} \leq 0 $ does not imply that the spot-volatility correlation must be negative. Indeed, as shown in \cite[Subsection 3.2]{abijaber2024signature}, the spot-volatility correlation is given by
    $$
    \frac{d[\log S, |\sigma|]_t}{\sqrt{d[\log S]_t}\sqrt{d[|\sigma|]_t}} = \rho \cdot \sign{\bracketsig{\bsigma\proj{2}}},
    $$
    where $ \bsigma\proj{2} = \sum\limits_{n=0}^N \sum\limits_{|\word{v}| \leq n} \bsigma^{\word{v2}} \word{v} $. Since the sign depends on the other coefficients of $ \bsigma $, the spot-volatility correlation can be positive while still ensuring the price martingality. In this way, the model can be applied to markets with a positive implied volatility skew, such as FX or commodity markets.
\end{remark}

\subsection{Moments}\label{sect:main_res_mom}

In this subsection, we present our main result concerning the moment explosions of the process $S$ in the signature volatility model \eqref{eq:price_model}. Under the assumptions ensuring true martingality stated in Theorem~\ref{T:main_martingality}, we identify the threshold value of $m$ for which $\mathbb{E}[S_T^m]$ is finite.

\begin{theorem} \label{thm:moments} Let $S$ be given by \eqref{eq:price_model}. 
    Suppose that the order $N \geq 2$ is odd and $\rho\sigma^{\word{2}\conpow{N}} < 0$. 
    \begin{enumerate}
        \item[(i)] If $|\rho| < \sqrt{1 - \frac{1}{m}}$, then $\E[S_T^m] = + \infty$ for any $T>0$.
        \item[(ii)] If $|\rho| > \sqrt{1 - \frac{1}{m}}$, then $\E[S_T^m] < + \infty$ for any $T>0$.
    \end{enumerate}
\end{theorem}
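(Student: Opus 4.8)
\textbf{Proof proposal for Theorem~\ref{thm:moments}.}

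The plan is to follow the now-standard route of reducing the moment existence problem to a stochastic control problem via the \citet{BD98} variational formula, exactly as in the rough Bergomi analysis of \citet{Gassiat2018OnTM}. Write $S_T^m = S_0^m \exp\left\{-\frac{m}{2}\int_0^T \sigma_s^2\,ds + m\int_0^T \sigma_s\,dB_s\right\}$. Since $S$ is a true martingale under our standing assumptions, for $p>1$ one has $S_T^m \propto \exp\left\{m \int_0^T \sigma_s\,dB_s - \frac{m^2}{2}\int_0^T \sigma_s^2\,ds\right\}\cdot \exp\left\{\frac{m^2-m}{2}\int_0^T\sigma_s^2\,ds\right\}$, i.e.\ the $m$-th moment factorizes as a Girsanov density times $\exp\left\{\frac{m(m-1)}{2}\int_0^T \sigma_s^2\,ds\right\}$. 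After the change of measure that makes $dB_t = d\tilde B_t + m\sigma_t\,dt$, the process $W$ (and hence the signature $\sig$ driving $\sigma$) picks up a drift proportional to $\rho m \sigma_t$, and $\E[S_T^m]<\infty$ becomes equivalent to the finiteness of $\E\left[\exp\left\{\frac{m(m-1)}{2}\int_0^T \bracketsig{\bsigma}^2\,ds\right\}\right]$ under the new dynamics. The Boué--Dupuis formula then expresses the logarithm of this quantity as $\sup_u \E\left[\frac{m(m-1)}{2}\int_0^T \bracketsig[s][W^u]{\bsigma}^2\,ds - \frac12\int_0^T |u_s|^2\,ds\right]$, where $W^u = W + \int_0^\cdot u_s\,ds$ (with the correlation structure suitably incorporated so that the relevant noise coefficient is $\bar\rho = \sqrt{1-\rho^2}$, which is where the threshold $|\rho| = \sqrt{1-1/m}$ enters: the cost of steering the $W$-noise is scaled by $\bar\rho^2$ relative to the gain).

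For part \textit{(ii)} (finite moment when $|\rho| > \sqrt{1-1/m}$, equivalently $m\bar\rho^2 < m-1$, i.e.\ a smallness condition), I would bound the value of the control problem uniformly. The key input is Lemma~\ref{lem:expYV}: a Gaussian-type exponential integrability bound for $\exp$ of a linear combination of signature elements of a drifted Brownian motion. One shows that for any admissible control $u$, $\bracketsig[s][W^u]{\bsigma}^2$ grows at most like a polynomial in $\sup_{r\le s}|W_r|$ and $\int_0^s|u_r|\,dr$; since the $N$-th order term dominates and $\sigma_t$ is order $N$ in $W$, the gain term is controlled by something like $C(1 + \|W\|_\infty^{2N} + (\int_0^T|u|)^{2N})$. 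The delicate point is that a crude bound gives a term of order $(\int|u|)^{2N}$ which cannot be absorbed by the quadratic cost $\frac12\int|u|^2$ for $N\ge 2$. The resolution—this is the main obstacle—is that one must not estimate $\sigma$ pathwise but rather exploit that the drift enters $W^u$ only through $\int_0^\cdot u$, combined with the self-similarity/scaling of the problem in $T$: for small $T$ the gain is small, and one patches finitely many small intervals, using the Markov-type structure and the fact that $\bracketsig{\bsigma}$ restarted is again a (time-shifted) linear form in a signature, to conclude finiteness on $[0,T]$ for every $T$. Alternatively, one shows directly via Lemma~\ref{lem:expYV} and Lemma~\ref{lemma:linear_from_upper_bound} that the supremum over $u$ of (gain $-$ cost) is finite precisely when the quadratic-in-$W$ part of the exponent, with coefficient $\tfrac{m(m-1)}{2}\bar\rho^2 \cdot (\text{const})$, stays below the Gaussian integrability threshold, which after bookkeeping is exactly $|\rho|^2 > 1 - 1/m$.

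For part \textit{(i)} (explosion when $|\rho| < \sqrt{1-1/m}$), I would argue by explicitly exhibiting a sequence of controls making the Boué--Dupuis value $+\infty$, equivalently by a direct lower bound on $\E[S_T^m]$. Since $\sigma_t \approx \sigma^{\word{2}\conpow N} \tfrac{W_t^N}{N!}$ for $W$ large and $N$ is odd, on the event that $W$ is large and of the favorable sign, $\int_0^T \sigma_s^2\,ds$ is of order $\|W\|_\infty^{2N}$. One then conditions on $\{W_t \approx c\sqrt{t}x$ for $t\in[0,T]\}$ (cost of order $x^2$ in the large-deviations rate) and on a favorable realization of the independent component $B^\perp$, and computes that the gain $\tfrac{m(m-1)}{2}\bar\rho^{2}\|W\|_\infty^{2N} \sim x^{2N}$ beats the cost $\sim x^2$ as $x\to\infty$ since $N\ge 2$. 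This shows that even a single carefully chosen Gaussian tilt already yields an infinite contribution when the coupling constant exceeds the threshold; the threshold itself comes from comparing, at the linear (Gaussian) level, the variance generated against the quadratic exponent, and the sharp constant $\sqrt{1-1/m}$ drops out of the one-dimensional Gaussian computation $\E[\exp\{\lambda Z^2\}] < \infty \iff \lambda < 1/2$ applied to the correctly normalized increment. I expect part \textit{(ii)} to be the harder direction, the obstacle being the super-quadratic growth of $\sigma^2$ in the control, handled by the time-splitting/scaling argument together with the signature bounds of Section~\ref{sect:bounds}.
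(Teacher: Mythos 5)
Your reduction to a Bou\'e--Dupuis control problem after conditioning on $W$ is indeed the paper's route, but both halves of your analysis of that control problem contain gaps. For part \textit{(ii)}, your main device --- a time-splitting/scaling argument (``for small $T$ the gain is small, patch finitely many intervals'') --- cannot work: the dichotomy in the theorem is independent of $T$ (when $|\rho|<\sqrt{1-1/m}$ the moment is infinite for \emph{every} $T>0$, however small), so no smallness-in-$T$ argument can yield finiteness. The actual mechanism is a sign, not a size: after conditioning on $W$, the quadratic gain carries the coefficient $\frac{\bar\rho^2 m^2-m}{2}$, which is strictly negative exactly when $|\rho|>\sqrt{1-1/m}$ (your equivalence ``$m\bar\rho^2<m-1$'' is off; the correct one is $m\bar\rho^2<1$). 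Writing $U=\int_0^\cdot u_s\,ds$, the leading part of the gain is then $-c_1\E\bigl[\int_0^T U_t^{2N}dt\bigr]$ with $c_1>0$, the cross term $\rho m\int\langle\bsigma,\cdot\rangle u$ contributes $-c_4\E[U_T^{N+1}]$ with $c_4>0$ thanks to $\rho\sigma^{\word{2}\conpow{N}}<0$, and every lower-order signature term is absorbed into these negative leading terms via Lemma~\ref{lem:expYV}, Young's and the BDG inequalities. There is no residual $(\int|u|)^{2N}$ left to absorb into the quadratic cost, which is precisely the obstacle you flag but do not resolve.

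For part \textit{(i)}, your heuristic ``gain $\sim x^{2N}$ beats cost $\sim x^2$ since $N\ge 2$'' proves too much: it makes no use of $\rho$ and would give explosion for all correlations, contradicting \textit{(ii)}. The threshold does not come from a one-dimensional Gaussian second-moment computation but from completing the square in the control variable: with the feedback control $u=\lambda\langle\bsigma,\widehat{\mathbb{X}}^u\rangle$ the value equals $P(\lambda)\,\E\bigl[\int_0^T\langle\bsigma,\widehat{\mathbb{X}}^\lambda_s\rangle^2 ds\bigr]$ with $P(\lambda)=\frac{m^2-m}{2}-\frac{(\lambda-\rho m)^2}{2}$, and $|\rho|<\sqrt{1-1/m}$ is exactly $P(0)>0$, which permits choosing $\lambda>0$ with $P(\lambda)>0$. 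The value is then $+\infty$ not because of a large-deviations comparison but because the drifted signature SDE with $\lambda\sigma^{\word{2}\conpow{N}}>0$ \emph{explodes} with positive probability (the analysis of Section~\ref{section:proof_local_mart_1D}), and on the explosion event $\int\langle\bsigma,\widehat{\mathbb{X}}^\lambda_s\rangle^2 ds=+\infty$ by Proposition~\ref{prop:sigODE}; a truncation at $|X^\lambda|=R$ is needed to make the feedback control admissible in $\mathcal{U}_q$. You also omit the justification of the variational formula itself for this unbounded functional (the extension of Bou\'e--Dupuis in Appendix~\ref{sec:BD}).
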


\begin{proof}
    The proof is given in Section \ref{sect:moments}.
\end{proof}


{ 
Theorem~\ref{thm:moments} provides the threshold $\bar m$ for the finiteness of the price process moments, which determines the asymptotic behavior of the implied volatility smile as $K \to +\infty$ in the signature volatility model \eqref{eq:price_model}, using the celebrated formula by \citet{lee2004moment}.

\begin{corollary}\label{cor:lee} Let $S$ be given by \eqref{eq:price_model}. 
    Suppose that the order $N \geq 2$ is odd and $\rho\sigma^{\word{2}\conpow{N}} < 0$. For a maturity $T>0$ and a strike $K \in \mathbb R$ we denote by $\sigma_{\mathrm{BS}}^2(T, k)$ the implied volatility of vanilla (call or put) options  with $k = \log\left(\frac{K}{S_0}\right)$ the log-moneyness.
Then, the following asymptotic behavior of the implied volatility holds 
\begin{equation}\label{eq:lee_wings}
\limsup_{k \to +\infty}\dfrac{\sigma_{\mathrm{BS}}^2(T, k)T}{k} = \beta_R = 2\dfrac{1 - |\rho|}{1 + |\rho|}.
\end{equation}
\end{corollary}

\begin{proof}
    An application of the \citet[Theorem 3.2]{lee2004moment} formula yields the asymptotic behavior \eqref{eq:lee_wings} with $\beta_R = 2 - 4\left(\sqrt{\bar p^2 + \bar p} - \bar p\right)$, where $\bar p = \sup\left\{p\colon\ \E\left[S_T^{1+p}\right] < \infty\right\}$. Under the assumptions of Theorem~\ref{thm:moments}, a straightforward computation gives 
    \begin{equation*}
    \bar p = \dfrac{\rho^2}{1 - \rho^2}, \qquad \beta_R = 2\dfrac{1 - |\rho|}{1 + |\rho|}.
    \end{equation*}
\end{proof}

To illustrate Corollary~\ref{cor:lee}, we compute the implied volatilities for call options with $T = 1$ by simulating $10^7$ price trajectories in the signature volatility model with $N = 3$. The non-leading coefficients $\bsigma$ are again chosen randomly from the uniform distribution on $[-0.5,, 0.5]$, while the leading coefficient $\sigma^{\word{2}\conpow{N}}$ is fixed at $10$. The simulated total variance smiles $\sigma_{\mathrm{BS}}^2(T, k)T$ for $\rho = -0.7$ and $\rho = -0.8$, together with the asymptotics predicted by Lee’s formula~\eqref{eq:lee_wings}, are shown in Figure~\ref{fig:sig_wings}.
}

\begin{figure}[H]
    \begin{center}
    \includegraphics[width=0.7\linewidth]{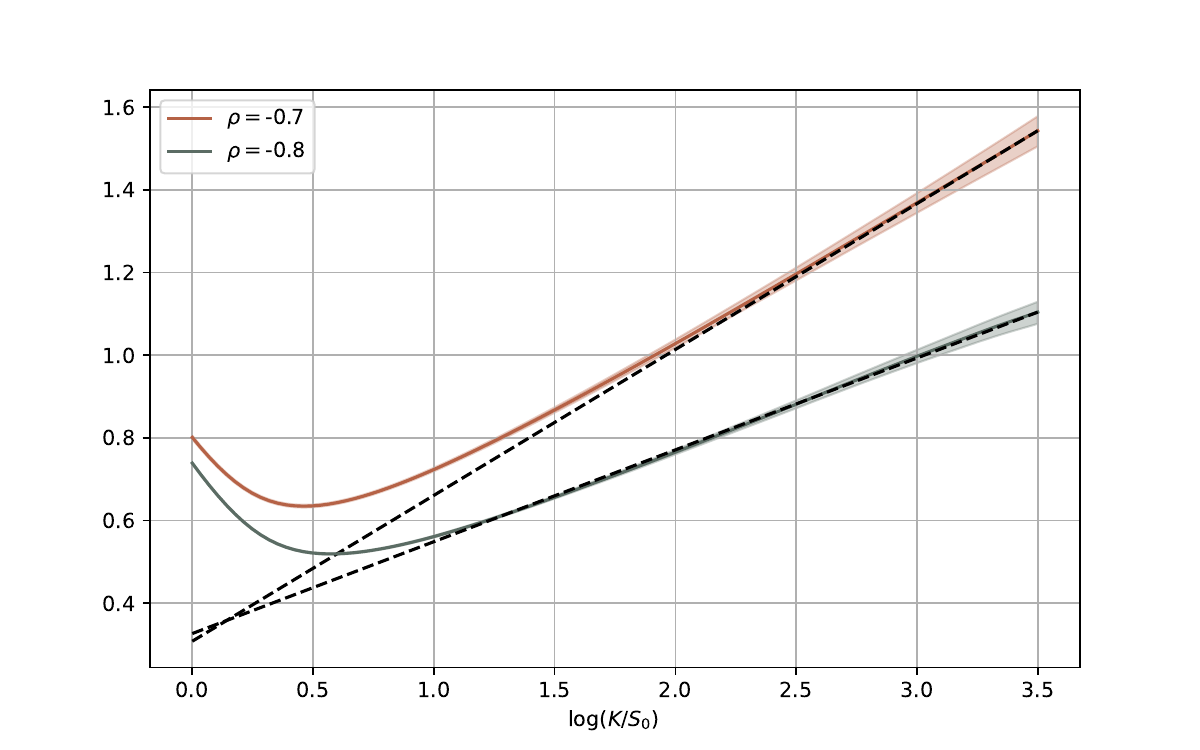}
    \caption{{ Implied total variance $\sigma_{\mathrm{BS}}^2(T, k)T$ for $N = 3$ and correlation parameters $\rho = -0.7$ (brown) and $\rho = -0.8$ (dark green). Shaded regions represent $95\%$ confidence intervals. The dashed black lines show the asymptotic slopes $\beta_R$ as given by~\eqref{eq:lee_wings}.}}
    \label{fig:sig_wings}
    \end{center}
\end{figure}

{
\begin{sqremark}
The proof of Theorem~\ref{thm:moments} crucially relies on the presence of a dominant term that determines the outcome of the moment explosion problem. The sign of this term is that of \( \tfrac{\bar{\rho}^2 m^2 - m}{2} \). In the critical case where \( |\rho| = \sqrt{1 - \tfrac{1}{m}} \), this term vanishes, and the finiteness of the $m$-th moment may or may not hold, depending on the other coefficients of~$\bsigma$ and on the time horizon~$T$, as we illustrate in the proposition below. Hence, we do not expect a simple general criterion to exist.
\end{sqremark}

\begin{proposition}\label{prop:crit}
Let \( \bsigma = \alpha \word{222} + \beta \word{221} \) with \( \alpha \neq 0 \) and \( \beta \in \R \). 
Let \( S \) be given by \eqref{eq:price_model}, and assume that \( \rho \alpha < 0 \). 
Then, for \( |\rho| = \sqrt{1 - \tfrac{1}{m}} \) and \( m > 1 \),
\begin{enumerate}
    \item If \( \beta = 0 \) and \( T^2 < \tfrac{2\pi^2}{\alpha\sqrt{m(m-1)}} \), then \( \E[S_T^m] < +\infty \).
    \item If \( \beta = 0 \) and \( T^2 > \tfrac{2\pi^2}{\alpha\sqrt{m(m-1)}} \), then \( \E[S_T^m] = +\infty \).
    \item If \( \beta \neq 0 \), then \( \E[S_T^m] = +\infty \) for all \( T > 0 \).
\end{enumerate}
\end{proposition}

\begin{proof}
The proof is given in Subsection~\ref{sect:proof_moments_prop}.
\end{proof}
}

\subsection{Extension of the martingality result to multidimensional setting}\label{sect:multid_extension}
Theorem~\ref{T:main_martingality} was stated for the signature volatility model \eqref{eq:price_model}, where the volatility process was constructed based on the signature of a one-dimensional Brownian motion $W$ correlated with $B$. Consider now a model
\begin{equation}
\begin{aligned}\label{eq:price_model_multid}
    \dfrac{dS_t}{S_t} &= \sigma_t dB_t, \quad t \geq 0, \\
    \sigma_t &= \bracketsigX[t][\widehat{Y}]{\bsigma},
\end{aligned}
\end{equation}

where $\sigX[t][\widehat Y]$ denotes the signature of $\widehat Y = (t, B, Z^1, \ldots, Z^d)$, $Z = (Z^1, \ldots, Z^d)$ is a $d$-dimensional Brownian motion independent from $B$ and $\bsigma\in\tTA[d+2]{N}$. Theorem~\ref{T:main_martingality} can be generalized as follows.

\begin{theorem}\label{T:main_martingality_multid}
 Let $S$ be given by \eqref{eq:price_model_multid}.      Suppose that the leading coefficient $\sigma^{\word{2}\conpow{N}} \not= 0$.
     \begin{enumerate}
         \item[(i)] If the order $N = 1$, then $S$ is a martingale. 
         \item[(ii)] If the order $N \geq 2$, then $S$ is a martingale if and only if $N$ is odd and $\sigma^{\word{2}\conpow{N}} \leq 0.$
     \end{enumerate}
\end{theorem}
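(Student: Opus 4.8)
The plan is to reduce Theorem~\ref{T:main_martingality_multid} to the one-dimensional Theorem~\ref{T:main_martingality} by an appropriate rotation of the driving Brownian motions. The key observation is that in the multidimensional model \eqref{eq:price_model_multid}, the volatility $\sigma_t = \bracketsigX[t][\widehat Y]{\bsigma}$ is a linear functional of the signature of $\widehat Y = (t, B, Z^1, \dots, Z^d)$, where $(B, Z^1, \dots, Z^d)$ is a $(d+1)$-dimensional standard Brownian motion with independent components, and the price is driven by the \emph{first} Brownian component $B$. I would first note that, since $S_t = S_0 \exp\{-\tfrac12\int_0^t\sigma_s^2 ds + \int_0^t \sigma_s dB_s\}$ is a nonnegative local martingale, hence a supermartingale, it is a true martingale on $[0,T]$ if and only if $\E[S_T] = S_0$. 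As in the one-dimensional case, this is analyzed via the Girsanov change of measure that would be legitimate on $[0, t]$ up to the explosion time of an associated signature SDE.

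The main step is to set up the change of measure carefully. Under the candidate measure $\mathbb{Q}$ (defined up to a localizing sequence), $B$ acquires a drift $\sigma_t\,dt$ while $Z^1, \dots, Z^d$ remain $\mathbb{Q}$-Brownian motions independent of the new $\mathbb{Q}$-Brownian motion $\widetilde B_t = B_t - \int_0^t\sigma_s ds$. The martingale property of $S$ is then equivalent to non-explosion before time $T$ of the signature SDE obtained by plugging the shifted dynamics of $B$ into the iterated Stratonovich integrals defining $\sigX[t][\widehat Y]$; this is the content of Proposition~\ref{prop:sigODE} and Theorem~\ref{T:explosion_criterion} applied in the $(d+2)$-letter alphabet. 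The crucial algebraic point is that the explosion of this system is governed by the leading nonlinearity coming from the highest-order term $\sigma^{\word{2}\conpow N}\frac{B_t^N}{N!}$ (here $\word 2$ is the letter corresponding to the $B$-component of $\widehat Y$), exactly as in the one-dimensional analysis — the extra letters $\word 3, \dots, \word{d+2}$ corresponding to $Z^1, \dots, Z^d$ contribute lower-order or subcritical terms since those components are not shifted. This is why the condition in Theorem~\ref{T:main_martingality_multid} involves only $\sigma^{\word{2}\conpow N}$ and not a correlation parameter: effectively $\rho = 1$ for the relevant pair (the price is driven by $B$ itself), so the condition $\rho\sigma^{\word{2}\conpow N}\le 0$ from Theorem~\ref{T:main_martingality}(ii) becomes simply $\sigma^{\word{2}\conpow N}\le 0$, and the parity obstruction for $N$ even persists.

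Concretely, I expect the proof to proceed as follows. First, establish the analogue of Proposition~\ref{prop:sigODE} in the multidimensional alphabet: after the change of measure, $\sigX[t][\widehat Y]$ solves a signature SDE in which only the $\word 2$-direction has acquired the drift $\sigma_t$, and $S$ is a true martingale on $[0,T]$ iff this SDE does not explode before $T$ under $\mathbb{Q}$. Second, for the sufficiency direction ($N$ odd, $\sigma^{\word{2}\conpow N}\le 0$), apply the bounds of Lemmas~\ref{lemma:linear_from_upper_bound} and~\ref{lem:expYV} — which are stated for general alphabets via the Radford/Lyndon polynomial-algebra structure — to control $\sigma_t$ and the drift, showing the drift has a favorable sign that prevents explosion, exactly mirroring Section~\ref{sect:martingality_expec}. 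For $N = 1$ the drift is affine and non-explosion is immediate. Third, for the necessity direction, when $N$ is even or $\sigma^{\word{2}\conpow N} > 0$, reuse the strict-local-martingale construction of Section~\ref{section:proof_local_martingality}: the highest-order term drives the $B$-dynamics toward a blow-up of $\int_0^\cdot\sigma_s^2 ds$ on a positive-probability event before any fixed $T$, giving $\E[S_T] < S_0$.

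The main obstacle will be verifying that the extra Brownian directions $Z^1, \dots, Z^d$ genuinely do not interfere with the explosion analysis — i.e.\ that the leading-order behavior of the signature SDE is still dictated by the single monomial $\frac{B_t^N}{N!}$ and its shuffle powers. This requires a careful bookkeeping of which words in the $(d+2)$-letter alphabet can appear in $\bsigma\shupow 2$ and in the drift term, and showing via Lemma~\ref{lemma:linear_from_upper_bound} (applied with the Lyndon basis of the larger alphabet) that any word involving the letters $\word 3,\dots,\word{d+2}$ contributes a term dominated, after the change of measure, by the pure-$\word 2$ contribution. Once this domination is in place, the one-dimensional arguments transfer essentially verbatim, with $\rho$ replaced by $1$.
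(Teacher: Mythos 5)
Your overall architecture is the paper's: reduce martingality to non-explosion of the signature SDE via Girsanov (Theorem~\ref{T:explosion_criterion}, which the paper proves directly in the $(d+2)$-letter alphabet), prove non-explosion for $N$ odd and $\sigma^{\word{2}\conpow{N}}\le 0$ by a moment bound using Lemma~\ref{lem:expYV}, and prove explosion with positive probability otherwise. Two corrections, one cosmetic and one substantive. Cosmetically, your opening claim that the multidimensional theorem is ``reduced to the one-dimensional one by a rotation'' is backwards: the $d$ independent Brownian motions $Z^1,\dots,Z^d$ cannot be rotated into a single correlated pair, and in the paper the logical dependence runs the other way (Theorem~\ref{T:main_martingality} with $\rho\neq 0$ is deduced from Theorem~\ref{T:main_martingality_multid} via Remark~\ref{rmk:multid_model}). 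Your later steps do not actually use any such reduction, so this does not damage the argument, but it should be dropped.

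The substantive gap is in the necessity direction. You propose to control the words containing the letters $\word{3},\dots,\word{d+2}$ ``via Lemma~\ref{lemma:linear_from_upper_bound} (applied with the Lyndon basis of the larger alphabet).'' That lemma is a deterministic, pathwise bound for the signature of the two-letter time-extension $(t,X_t)$ of a single continuous path; it cannot bound iterated Stratonovich integrals against the independent Brownian motions $Z^i$, which are not pathwise functions of $\sup_t|X_t|$. Lemma~\ref{lem:expYV} gives only $L^q$ control, which is useless for the pathwise ODE-comparison argument that produces blow-up. This is precisely the difficulty that makes the multidimensional strict-local-martingale proof (Subsection~\ref{section:proof_local_mart_mulitiD}) much heavier than the one-dimensional one: the paper constructs an explicit event $\mathcal{A}$ on which every signature word $\word{w}$ satisfies $\sup_{s\in[\tau_n^X,\tau_{n+1}^X]}|\langle\word{w},\widehat{\mathbb{Y}}^X_s\rangle|\le n^{N_{\word{2}}(\word{w})+\epsilon|\word{w}|}$ together with increment bounds on $B$, runs the ODE comparison conditionally on $\mathcal{A}$ to get $\tau_{n+1}^X-\tau_n^X\lesssim n^{-N}$, and then proves $\P(\mathcal{A})>0$ by the Stroock--Varadhan support theorem combined with Gaussian concentration for the stochastic integrals and an induction on word length. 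Without an argument of this type (or some substitute giving pathwise control of the $Z$-words on a positive-probability event), your necessity step does not go through.
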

\begin{proof}
    \textit{(i)} and sufficiency in \textit{(ii)} are proved in Section~\ref{sect:martingality_expec}, while necessity in \textit{(ii)} is proved in Subsection~\ref{section:proof_local_mart_mulitiD}.
\end{proof}

\begin{remark}\label{rmk:multid_model}
    The price model \eqref{eq:price_model} is indeed a special case of the model \eqref{eq:price_model_multid}, since the elements of the signature of $\widehat{W} = (t, W) = (t, \rho B + \sqrt{1 - \rho^2} B^\perp)$ can be written as a linear combination of elements of the signature of $\widehat{Y} = (t, B, B^\perp)$, such that $\bracketsig{\bsigma} = \bracketsigX[t][\widehat{Y}]{\tilde\bsigma}$ for some $\tilde\bsigma \in \tTA[3]{N}$ with the same order $N$. Moreover, the coefficient of $\tilde\bsigma$ corresponding to $\word{2}\conpow{N}$ equals $\rho^N\bsigma^{\word{2}\conpow{N}}$.
\end{remark}

\begin{remark}
    Although Theorem~\ref{T:main_martingality} directly follows from Theorem~\ref{T:main_martingality_multid} when $ \rho \neq 0 $, we chose to present the multidimensional framework after the one-dimensional one for several reasons. First, the model in \eqref{eq:price_model} is more natural and efficient in practice, as it uses a two-dimensional alphabet instead of the three-dimensional one used in its multidimensional counterpart. Moreover, the link between Theorems~\ref{T:main_martingality} and~\ref{thm:moments} and their Markovian counterparts, is much clearer when the one-dimensional model is introduced first. Finally, the methods used in the proof of Theorem~\ref{thm:moments} rely heavily on the structure of the equations in model \eqref{eq:price_model} and would be difficult to generalize to the multidimensional framework.
\end{remark}

\section{A general martingality criterion}\label{section:general_crit}

In this section, we derive a general criterion to establish the martingality of the price process in signature volatility models. Similarly to the Markovian setting, lack of martingality is seen by a Girsanov transform to be equivalent to the explosion of a certain SDE of signature drift form.

We let $S$ be given by \eqref{eq:price_model_multid}, i.e. the volatility process is given by
\begin{equation}
    \sigma_t = \left\langle \bsigma, \widehat{\mathbb{Y}} _t\right\rangle, \quad t \geq 0
\end{equation}
where $\widehat{\mathbb{Y}}_t$ is the signature of the process $\widehat Y_t = (t, B_t, Z^1_t, \ldots, Z^d_t)$, and $\bsigma \in \tTA[d+2]{N}$ for some $N \geq 1$.

The criterion will be stated, in Theorem~\ref{T:explosion_criterion},  in terms of the solution to the SDE
\begin{equation} \label{eq:sdeX}
    X_t =  \int_0^t \left\langle \bsigma, \widehat{\mathbb{Y}}^X_s \right\rangle ds + B_t,
\end{equation}
where $\widehat{\mathbb{Y}}^X_t$ denotes the signature of $\widehat{Y}^X_t = (t, X_t, Z^1_t, \ldots, Z^d_t)$.

\subsection{Signature SDE}

In this subsection, we show the signature SDE \eqref{eq:sdeX}
is well-defined (up to an explosion time) for arbitrary $\bsigma$ in the truncated tensor algebra $\tTA[d+2]{N}$. 

We note that $X$ may explode in finite time, so the solution is understood as a process satisfying equation \eqref{eq:sdeX} on {$[[0,\tau^X_{\infty}[[$, where the explosion time $\tau^X_{\infty}$ is a stopping time defined by
\begin{equation}\label{eq:def_rau_inf_tau_n_X}
    \tau^X_{\infty} := \lim_{n \to\infty} \tau_n^X, \quad \tau_n^X:= \inf\{t \geq 0\colon\, |X_t| = n\},
\end{equation}
such that, on the event $\{\tau^X_{\infty} < \infty\}$, we have $\lim\limits_{t \to \tau^X_{\infty}} |X_t| = +\infty$ a.s.}

\begin{proposition} \label{prop:sigODE}
    The SDE \eqref{eq:sdeX} admits a unique strong solution. In addition, it holds that
    \begin{equation} \label{eq:explT}
        \tau^X_{\infty} = \inf \left\{ t \geq 0 , \;\; \int_0^t \langle\bsigma, \widehat{\mathbb{Y}}^X_s \rangle^2 ds = + \infty\right\}.
    \end{equation}
\end{proposition}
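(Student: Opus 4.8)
The strategy is to build the solution by a standard Picard/localization argument on the sequence of exit times $\tau_n^X$, and then identify the explosion time with the blow-up time of the quadratic variation via the equivalence in \eqref{eq:explT}. Let me think about what's actually needed here.

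For existence and uniqueness up to explosion: the drift coefficient in \eqref{eq:sdeX} is $b(s,\omega) = \langle \bsigma, \widehat{\mathbb{Y}}^X_s\rangle$, which, by the iterative description of signature coordinates in Remark~\ref{rmk:sig_iteration_def}, is itself a polynomial-type functional of the path $(s, X_s, Z_s^1,\dots,Z_s^d)$ built from iterated Stratonovich integrals. The key point is that on the set $\{|X_t| \le n, t \le T\}$ (and using that $Z$ has almost surely continuous paths, so $\sup_{t\le T}|Z_t| < \infty$ a.s.), Lemma~\ref{lemma:linear_from_upper_bound} (applied coordinatewise to the multidimensional extended path, or a trivial multidimensional analogue of it) gives a uniform bound on $|\langle \bsigma, \widehat{\mathbb{Y}}^X_s\rangle|$ in terms of $\sup_{s\le T}|X_s|$ and $\sup_{s\le T}|Z_s|$. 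Moreover the map $X \mapsto \widehat{\mathbb{Y}}^X$ is locally Lipschitz in supremum norm on paths bounded by $n$ — again because each signature coordinate is an iterated integral and one can estimate the difference of two such iterated integrals by a telescoping/Gronwall argument (the Stratonovich integrals converting to Itô ones plus a bounded-variation bracket correction, all of which are polynomial in the path with the $\word{2}$-slot structure controlled as in Lemma~\ref{lemma:linear_from_upper_bound}). So on each ball $\{|\cdot| \le n\}$ the SDE has coefficients that are locally Lipschitz and of linear growth in the relevant sense; classical existence-uniqueness (e.g. Picard iteration, or Theorem in Protter/Karatzas–Shreve for functional SDEs) yields a unique strong solution $X^{(n)}$ up to $\tau_n^X$; consistency of the $X^{(n)}$ on overlapping time intervals lets us paste them into a unique strong solution on $[[0,\tau^X_\infty[[$, with $\tau^X_\infty = \lim_n \tau_n^X$ and $|X_t|\to\infty$ as $t\uparrow\tau^X_\infty$ on $\{\tau^X_\infty<\infty\}$ by definition of the $\tau_n^X$.

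For the characterization \eqref{eq:explT}: write $\tau_Q := \inf\{t\ge 0: \int_0^t \langle\bsigma,\widehat{\mathbb{Y}}^X_s\rangle^2\,ds = +\infty\}$. The inclusion $\tau^X_\infty \le \tau_Q$ requires showing the quadratic-variation integral stays finite before explosion; this is immediate because on $[0,\tau_n^X]$ the integrand is bounded (by the Lemma~\ref{lemma:linear_from_upper_bound}-type estimate, using $|X_s|\le n$ and continuity of $Z$), so $\int_0^{\tau_n^X}\langle\bsigma,\widehat{\mathbb{Y}}^X_s\rangle^2\,ds<\infty$ for every $n$, hence $\tau_Q \ge \sup_n \tau_n^X = \tau^X_\infty$. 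The reverse inclusion $\tau_Q \le \tau^X_\infty$ is where the real work is: one must show that if the drift integral blows up then $X$ itself blows up, i.e.\ $X$ cannot oscillate wildly with finite range while accumulating infinite $\int \langle\bsigma,\widehat{\mathbb{Y}}^X\rangle^2\,ds$. Suppose $\tau^X_\infty > t_0$ with $\sup_{s\le t_0}|X_s| \le n < \infty$ on some event of positive probability; then again Lemma~\ref{lemma:linear_from_upper_bound} bounds $|\langle\bsigma,\widehat{\mathbb{Y}}^X_s\rangle|$ on $[0,t_0]$ by a finite (random) constant, forcing $\int_0^{t_0}\langle\bsigma,\widehat{\mathbb{Y}}^X_s\rangle^2\,ds < \infty$, so $\tau_Q \ge t_0$; taking $t_0 \uparrow \tau^X_\infty$ gives $\tau_Q \ge \tau^X_\infty$. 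Combining both inclusions yields \eqref{eq:explT}.

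The main obstacle is the local-Lipschitz / well-posedness part: one must carefully argue that the path functional $X \mapsto \langle \bsigma, \widehat{\mathbb{Y}}^X\rangle$ — a finite linear combination of iterated Stratonovich integrals in which $X$ appears in some slots — is locally Lipschitz (in a suitable pathwise-plus-$L^2$ sense uniform over bounded paths) so that Picard iteration converges, and that the Stratonovich correction terms (brackets $[X,\cdot]$, which involve $d\langle X\rangle_t = dt$ since $X = \int(\cdots)ds + B$, and vanishing brackets $[X,Z^i]$) do not spoil these estimates. This is somewhat technical but entirely standard once one exploits the triangular/iterated structure of the signature together with the bound of Lemma~\ref{lemma:linear_from_upper_bound}; no genuinely new idea is required beyond bookkeeping. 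The identification \eqref{eq:explT} is then a short consequence.
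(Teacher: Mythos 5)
There are two genuine gaps in your plan for the identity \eqref{eq:explT}, plus a questionable step in the well-posedness part.

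First, both of your finiteness arguments rest on a pathwise bound for $\langle\bsigma,\widehat{\mathbb{Y}}^X_s\rangle$ on $\{\sup_{s\le t}|X_s|\le n\}$ obtained from ``a trivial multidimensional analogue'' of Lemma~\ref{lemma:linear_from_upper_bound}. No such analogue exists: once the extended path contains the Brownian components $Z^1,\dots,Z^d$, the signature coordinates include iterated stochastic integrals (e.g.\ L\'evy areas such as the antisymmetric part of $\int_0^t Z^1_s\,dZ^2_s$) that are \emph{not} almost surely bounded by any function of $\sup_{s\le t}|X_s|$ and $\sup_{s\le t}|Z_s|$. This is exactly why the paper states Lemma~\ref{lem:expYV} as a bound in expectation rather than almost surely. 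The correct route for the inequality $\tau^X_\infty\le\tau_Q$ is to apply Lemma~\ref{lem:expYV} with the stopping time $\tau_n^X\wedge T$, which gives $\E\bigl[\int_0^{\tau_n^X\wedge T}\langle\word{w},\widehat{\mathbb{Y}}^X_t\rangle^2\,dt\bigr]\le C_{\word{w},T}\,n^{2|\word{w}|+1}<\infty$ and hence a.s.\ finiteness of the integral for all $t<\tau^X_\infty=\sup_n\tau_n^X$. Second, you never actually prove the reverse inequality $\tau_Q\le\tau^X_\infty$: the paragraph you devote to it (``Suppose $\tau^X_\infty>t_0$ with $\sup_{s\le t_0}|X_s|\le n$\dots'') again concludes $\tau_Q\ge\tau^X_\infty$, i.e.\ it re-derives the first inclusion. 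What is needed is that on $\{\tau^X_\infty<\infty\}$ the integral $\int_0^{\tau^X_\infty}\langle\bsigma,\widehat{\mathbb{Y}}^X_s\rangle^2\,ds$ is infinite; the paper gets this from $\int_0^{\tau^X_\infty}\langle\bsigma,\widehat{\mathbb{Y}}^X_s\rangle\,ds=X_{\tau^X_\infty}-B_{\tau^X_\infty}=\pm\infty$ (since $|X_t|\to\infty$ while $B$ stays bounded) together with the Cauchy--Schwarz inequality on the finite interval $[0,\tau^X_\infty]$.

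On well-posedness, your plan to run a Picard iteration on the functional $X\mapsto\langle\bsigma,\widehat{\mathbb{Y}}^X\rangle$ treated as a locally Lipschitz path functional in supremum norm is delicate for the same reason: the functional involves stochastic integrals against $dX$ and $dZ^i$ and is not continuous in the sup norm of the integrators. The paper sidesteps all of this by observing that the collection $(\widehat{\mathbb{Y}}^{X,\word{w}})_{|\word{w}|\le N}$, with $X=\widehat{\mathbb{Y}}^{X,\word{2}}$, solves a closed finite-dimensional Stratonovich SDE system with polynomial, hence locally Lipschitz, coefficients, for which existence and uniqueness up to explosion are classical. You should adopt that reduction rather than a functional-SDE argument.
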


\begin{proof}
    For the well-posedness of the SDE, we note that it can be equivalently written as a system of classical (Markovian) SDEs. Indeed, setting $\widehat{\mathbb{Y}}^{X, \word{w}}_t = \langle\word{w}, \widehat{\mathbb{Y}}^X_t \rangle$, the signature SDE \eqref{eq:sdeX} is equivalent to $X = \widehat{\mathbb{Y}}^{X, \word{2}}$, where the $(\widehat{\mathbb{Y}}^{X, \word{w}})_{\word{w}:\, |\word{w}| \leq N}$ satisfy $\widehat{\mathbb{Y}}^{X, \emptyword} \equiv 1$ and for $|\word{u}| < N$, 
    \begin{align*}
    \widehat{\mathbb{Y}}^{X, \word{u}\word{1}}_t =& \int_0^t \widehat{\mathbb{Y}}^{X, \word{u}}_s ds, \\  
    \widehat{\mathbb{Y}}^{X, \word{u}\word{2}}_t =& \int_0^t \widehat{\mathbb{Y}}^{X, \word{u}}_s \left( \sum_{|\word{w}|\leq N} \bsigma^{\word{w}} \widehat{\mathbb{Y}}^{X, \word{w}}_s \right) ds + \int_0^t \widehat{\mathbb{Y}}^{X, \word{u}}_s \circ dB_s, \\
    \widehat{\mathbb{Y}}^{X, \word{u}\word{i}}_t =& \int_0^t \widehat{\mathbb{Y}}^{X, \word{u}}_s {\circ} dZ^{i - 2}_s,  \;\; i=3,\ldots, d + 2.\\
    \end{align*}
    This is a Stratonovich SDE with locally Lipschitz coefficients, for which existence and uniqueness up to an explosion time are standard.

    It remains to prove \eqref{eq:explT}. 
    {Since on the event $\{\tau^X_{\infty} < \infty\}$, we have $\int_0^{\tau^X_{\infty}}\langle\bsigma, \widehat{\mathbb{Y}}^X_s \rangle\, ds = X_{\tau^X_{\infty}} - B_{\tau^X_{\infty}} = \pm\infty$, it follows that
    \begin{equation*}
        \int_0^{\tau^X_{\infty}}\langle\bsigma, \widehat{\mathbb{Y}}^X_s \rangle^2 ds \geq \left(\int_0^{\tau^X_{\infty}}\langle\bsigma, \widehat{\mathbb{Y}}^X_s \rangle\, ds\right)^2 = +\infty,
    \end{equation*}
    we have  $\int_0^{\tau^X_{\infty}}\langle\bsigma, \widehat{\mathbb{Y}}^X_s \rangle^2 ds = +\infty $ on $\{\tau^X_{\infty} < \infty\}$}. Conversely, by Lemma~\ref{lem:expYV} {(stated in Section~\ref{sect:bounds_proofs} below)}, we can show that, for each word~$\word{w}$,
    \[ \E\left[\int_0^{\tau_n^X\wedge T} \langle \word{w}, \widehat{\mathbb{Y}}^X_t \rangle^2 dt\right]\leq { C_{\word{w},T} n^{2|\word{w}|+1}} < \infty.  \]
    
    Since $\tau^X_{\infty} = \sup\limits_{n \geq 1} \tau^X_n$, this implies that $\int_0^t \langle\bsigma, \widehat{\mathbb{Y}}^X_s \rangle^2 ds < \infty$ for each $t<\tau^X_{\infty}$.
\end{proof}
{
}

\subsection{Martingality and explosion}
By the results of the previous subsection, equation \eqref{eq:sdeX} admits a unique strong solution up to an explosion time $\tau^X_{\infty}$, satisfying \eqref{eq:explT}.

\begin{theorem}\label{T:explosion_criterion}
For each $T>0$, for $S$ defined by \eqref{eq:price_model_multid}, it holds that
\[
S \mbox{ is a martingale on $[0,T]$} \; \Leftrightarrow \P(\tau^X_{\infty} < T ) = 0,
\]
where $\tau^X_{\infty}$ is the explosion time of $X$ defined by \eqref{eq:sdeX}.
\end{theorem}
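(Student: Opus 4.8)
The plan is to follow the classical recipe for verifying the martingale property of a nonnegative local martingale $S$ written as a stochastic exponential: show that $S$ is a true martingale on $[0,T]$ if and only if no mass is lost under the candidate change of measure, and then identify the mass-loss event with finite-time explosion of the SDE \eqref{eq:sdeX}. Recall that $S_t = S_0 \mathcal{E}\!\left(\int_0^\cdot \sigma_s\, dB_s\right)_t$ with $\sigma_t = \langle \bsigma, \widehat{\mathbb{Y}}_t\rangle$; since $S$ is a nonnegative local martingale it is a supermartingale, so $S$ is a martingale on $[0,T]$ iff $\E[S_T] = S_0$. The strategy is to localize along the stopping times $\tau_n := \inf\{t \geq 0 : \int_0^t \sigma_s^2\, ds \geq n\}$ (or along $\tau_n^X$ after the change of measure), use each stopped exponential $S^{\tau_n}$ as a genuine Radon--Nikodym density to define a measure $\Q_n$ on $\mathcal{F}_{\tau_n \wedge T}$, and then invoke the standard identity $\E_{\mathbb{P}}[S_{T}/S_0] = \lim_n \Q_n(\tau_n > T) = \Q(\tau_\infty \geq T)$ for a consistently defined limit measure $\Q$.

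First I would recall (or cite, e.g.\ along the lines of the argument in Sin, Lions--Musiela, or Mijatovi\'c--Urusov) the abstract principle: if $M$ is a continuous local martingale with $M_0 = 0$ and $dM_t = \sigma_t\, dB_t$, and $\tau_n$ reduces $\mathcal{E}(M)$, then $\E[\mathcal{E}(M)_T] = \mathbb{Q}(\tau_\infty > T)$ where $\mathbb{Q}$ is the (possibly sub-probability) limit of the measures with density $\mathcal{E}(M)_{\tau_n \wedge T}$, and under $\mathbb{Q}$ the process $\widetilde{B}_t := B_t - \int_0^{t}\sigma_s\, ds$ is a Brownian motion up to $\tau_\infty$. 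Hence $S$ is a martingale on $[0,T]$ iff $\mathbb{Q}(\tau_\infty \leq T) = 0$. The second step is to compute the $\mathbb{Q}$-dynamics of the signature-driving process. Under $\mathbb{Q}$, the pair $(B,Z)$ becomes $(\widetilde{B} + \int_0^\cdot \sigma_s\, ds,\ Z)$ with $\widetilde B$ a $\mathbb{Q}$-Brownian motion and $Z$ still an independent $\mathbb{Q}$-Brownian motion (since $Z \perp B$ the Girsanov drift does not touch $Z$, only its covariation structure is unchanged). Writing $X_t := B_t$ viewed under $\mathbb{Q}$, it satisfies $X_t = \widetilde{B}_t + \int_0^t \langle \bsigma, \widehat{\mathbb{Y}}^X_s\rangle\, ds$ on $[0,\tau_\infty)$, where $\widehat{\mathbb{Y}}^X$ is the signature of $(t, X_t, Z^1_t,\dots,Z^d_t)$ — which is exactly equation \eqref{eq:sdeX}. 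Moreover the blow-up time $\tau_\infty$ of the density process coincides with the time $\int_0^t \sigma_s^2\, ds$ first becomes infinite, and by Proposition~\ref{prop:sigODE} this is precisely the explosion time $\tau^X_\infty$ of \eqref{eq:sdeX}.

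Combining the two steps gives $\E_{\mathbb{P}}[S_T] = S_0\,\mathbb{P}_{\!X}(\tau^X_\infty \geq T)$ where $\mathbb{P}_{\!X}$ is the law of the (possibly explosive) solution of \eqref{eq:sdeX} — equivalently, $S$ is a martingale on $[0,T]$ iff $\mathbb{P}(\tau^X_\infty < T) = 0$, using that $\{\tau^X_\infty = T\}$ is a $\mathbb{P}$-null event for the diffusion $X$ (finite-time explosion exactly at a fixed $T$ has probability zero, or one absorbs the boundary case into $\leq$ versus $<$ harmlessly since we range over all $T$). I would take care to spell out the equivalence of filtrations and the fact that, because both $\widehat{\mathbb{Y}}$ under $\mathbb{Q}$ and $\widehat{\mathbb{Y}}^X$ under $\mathbb{P}$ solve the same Stratonovich system with the same initial condition and the same (independent) $Z$-components, weak uniqueness (guaranteed by the strong existence and uniqueness in Proposition~\ref{prop:sigODE}) identifies the two explosion times in law.

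The main obstacle I anticipate is the careful bookkeeping around the change of measure on a filtration that is *generated* by $(B,W)$ (hence by $(B,Z)$): one must make sure that $\mathbb{Q}$ is well-defined as a consistent projective limit on $\bigcup_n \mathcal{F}_{\tau_n \wedge T}$, that the Girsanov drift applies correctly up to — but not beyond — the explosion time, and that the signature process, which depends on the whole path of $(B,Z)$ through iterated Stratonovich integrals, transforms as claimed (the Stratonovich integrals are unaffected by the absolutely continuous drift change except through the path of $X$ itself, which is the content of how the drift enters \eqref{eq:sdeX}). A secondary technical point is justifying that $\int_0^{\tau_n^X} \sigma_s^2\, ds < \infty$ so that the localizing sequence genuinely reduces the exponential — but this is already delivered by Lemma~\ref{lem:expYV} as used in the proof of Proposition~\ref{prop:sigODE}. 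Once these measure-theoretic points are handled, the statement follows.
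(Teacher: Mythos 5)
Your overall strategy (localize along $\tau_n=\inf\{t:\int_0^t\sigma_s^2\,ds= n\}$, use Novikov on the stopped exponential to get genuine densities $\widehat{\P}_n$, apply Girsanov to recognize the signature SDE \eqref{eq:sdeX}, and identify the lost mass with finite-time explosion via Proposition~\ref{prop:sigODE}) is exactly the paper's. The one place where your route diverges is also the place where it has a genuine gap: you pass from $\lim_n \Q_n(\tau_n>T)$ to ``$\Q(\tau_\infty\geq T)$ for a consistently defined limit measure $\Q$'', i.e.\ you invoke a F\"ollmer-type extension of the family $(\widehat{\P}_n)_n$ to a single measure under which the whole argument is carried out. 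You correctly flag the projective-limit construction as your main obstacle, but you do not resolve it --- and it cannot be resolved on an arbitrary filtered probability space: this is precisely the point of \cite[Example 4.3]{Ruf_2015}, which the paper cites in the remark following the theorem as the reason for \emph{not} taking this route. The existence of $\Q$ requires structural assumptions on $(\Omega,\F)$ that the theorem does not impose.

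The fix, which is what the paper does, is to never construct $\Q$. For each fixed $n$, Girsanov gives that under $\widehat{\P}_n$ the process $B$ solves \eqref{eq:sdeX} on $[[0,\tau_n]]$; weak uniqueness for \eqref{eq:sdeX} (which follows from the strong well-posedness in Proposition~\ref{prop:sigODE}) then identifies the law of $(B_{t\wedge\tau_n})_t$ under $\widehat{\P}_n$ with the law of $(X_{t\wedge\tilde\tau_n})_t$ under the \emph{original} measure $\P$, where $\tilde\tau_n$ is the analogous stopping time for $X$. Hence $\widehat{\P}_n(\tau_n<T)=\P(\tilde\tau_n<T)$ as an equality of real numbers, and one concludes by letting $n\to\infty$ and using the characterization \eqref{eq:explT} of $\tau^X_\infty$, with no limit measure ever appearing. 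Everything else in your write-up (the supermartingale reduction to $\E[S_T]=S_0$, the monotone-convergence step $\E[S_T\indic{T\leq\tau_n}]\to\E[S_T]$, the observation that $Z$ is untouched by the drift change, and the use of Lemma~\ref{lem:expYV} to justify that the $\tau_n$ genuinely reduce the exponential) matches the paper's proof.
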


 \begin{proof}
Since $S$ is a nonnegative local martingale, it is a supermartingale by Fatou's Lemma, so that $\E[S_T] \leq S_0$ for all $T \geq 0$. Hence, for a given $T>0$, it is a martingale on $[0,T]$ if and only if $\E\left[S_T\right] = S_0$. Let
\[ 
\tau_n = \inf\left\{ t > 0\colon\, \int_0^t \left\langle \bsigma, \widehat{\mathbb{Y}}_s\right\rangle^2 ds =n \right\}.
\]
As the stopped process $\left(\int_0^{t\land\tau_n} \left\langle \bsigma, \widehat{\mathbb{Y}}_s\right\rangle^2 ds\right)_{t \geq 0}$ is bounded by $n$, the stopped price process $S^{\tau_n} = (S_{t \land \tau_n})_{t \geq 0}$ is an exponential martingale by Novikov's criterion and it follows that
$$
S_0 = \E\left[S^{\tau_n}_T \right] = \E \left[S_{T\land\tau_n}\right] = \E\left[S_T\indic{T \leq \tau_n}\right] + \E \left[S_{\tau_n}\indic{\tau_n < T}\right] = \E\left[S_T\indic{T \leq \tau_n}\right] + S_0\widehat{\P}_n\left({\tau_n < T}\right),
$$
where $\dfrac{d\widehat{\P}_n}{d\P} = \dfrac{S_{T\land \tau_n}}{S_0}$.

By the monotone convergence theorem, the term $\E\left[S_T\indic{T \leq \tau_n}\right]$ converges to $\E\left[S_T\right]$ as $n \to \infty$. Thus,
$$
S_0 - \E\left[S_T\right] = \lim_{n \to\infty} \E \left[S_{\tau_n}\indic{\tau_n < T}\right] = \lim_{n \to\infty} S_0\widehat{\P}_n\left({\tau_n < T}\right).
$$
On the other hand, by Girsanov theorem, for each $n$, it holds that for each $t \leq T$,
\[
B_t = \int_{0}^{t \wedge \tau_n} \left\langle \bsigma, \widehat{\mathbb{Y}}_s\right\rangle ds + \widehat{B}_t,
\]
where $\widehat{B}$ is a Brownian motion under $\widehat{\P}_n$. { In particular, the Girsanov change-of-measure argument implies the uniqueness of the weak solution to the signature SDE (satisfied by \( B \) under \( \widehat{\P}_n \)):
\begin{equation}\label{eq:sig-sde_stopped}
    X_t^n 
    = \int_0^t 
        \left\langle \bsigma, \widehat{\mathbb{Y}}^{X^n}_s \right\rangle 
        \indic{\int_0^s \left\langle \bsigma, \widehat{\mathbb{Y}}^{X^n}_r \right\rangle^2 dr \le n} 
        ds 
      + B_t.
\end{equation}
Indeed, by applying the Girsanov theorem as above, and noting that the corresponding Radon–Nikodym derivative is a true martingale, one verifies that the law of $X^n$ can be expressed in terms of Wiener measure and is therefore uniquely determined. It remains to observe that \eqref{eq:sig-sde_stopped} is also satisfied by
\[
X_t^n = X_{t \land \tilde{\tau}_n} + \indic{t \ge \tilde{\tau}_n}(B_t - B_{\tilde{\tau}_n}),
\]
where \( X \) denotes the strong solution to \eqref{eq:sdeX} given by Proposition~\ref{prop:sigODE}, so that
\[
\tilde{\tau}_n 
= \inf\left\{ t > 0 : \int_0^t \left\langle \bsigma, \widehat{\mathbb{Y}}^{X}_s \right\rangle^2 ds = n \right\} 
= \inf\left\{ t > 0 : \int_0^t \left\langle \bsigma, \widehat{\mathbb{Y}}^{X^n}_s \right\rangle^2 ds = n \right\}.
\]
Hence, weak uniqueness implies that
\[
\widehat{\P}_n\left(\tau_n < T\right) = \P\left(\tilde{\tau}_n < T\right),
\]
and therefore,
\[
\lim_{n \to \infty} \widehat{\P}_n\left(\tau_n < T\right)
= \lim_{n \to \infty} \P\left(\tilde{\tau}_n < T\right)
= \P\left(\tau^X_{\infty} < T\right),
\]
where we have used that \( \lim_{n \to \infty} \tilde{\tau}_n = \tau^X_{\infty} \).
}
 \end{proof}

\begin{corollary}\label{cor:one_d_criterion}
    The same proof applied to the one-dimensional model implies that $S$ given by \eqref{eq:price_model} is a martingale on $[0, T]$ if and only if $\P(\tau^X_{\infty} < T) { = 0}$, where $\tau^X_{\infty}$ is the explosion time of $X$ defined by
    \begin{equation}\label{eq:sig_SDE_1D}
        X_t =  \rho\int_0^t \left\langle \bsigma, \widehat{\mathbb{X}}_s \right\rangle ds + W_t,
    \end{equation}
    and $\widehat{\mathbb{X}}_t$ denotes the signature of $\widehat{X}_t = (t, X_t)$, for $t \in [0, T]$.
\end{corollary}
 
\begin{remark}
    Theorem \ref{T:explosion_criterion} is closely related to the result proved in \cite[Lemma 1.5]{kazamaki2006continuous} rewritten in terms of new probability measures. A more general criterion was proposed in \cite[Theorem 3.3]{Ruf_2015}. It states that the price process is a martingale if and only if
    \begin{equation}
        \widehat\P\left(\int_0^{T \land \vartheta}\sigma_s^2 ds < \infty\right) = 1, \quad T\geq 0,
    \end{equation}
    where $\vartheta$ stands for the explosion time and $\widehat{\P}$ denotes the extension of $(\widehat\P_n)_{n \geq 1}$ which is constructed in \cite[Proposition 3.1]{Ruf_2015}. However, this approach assumes a specific structure of the underlying probability space as highlighted in \cite[Example 4.3]{Ruf_2015}. For this reason, we prefer to work with Theorem~\ref{T:explosion_criterion}, ensuring that our results are independent of the choice of probability space.
\end{remark}

{
\section{Key lemmas for bounds on the signatures}\label{sect:bounds_proofs}

In this section, we establish bounds on signature elements derived from Lyndon words, which will be essential in the subsequent analysis. The first result states that terms in the signature of $\widehat X_t = (t,X_t)$ may be bounded in terms of the supremum norm of $X$.

\begin{lemma}\label{lemma:linear_from_upper_bound}
    Let $X\colon [0,T] \to \R$ be a continuous path, and let $\widehat{\mathbb{X}}_t$ denote the signature of  $\widehat{X}$.    
    Let $N_{\word{2}}(\word{w})$ denote the number of occurrences of letter $\word{2}$ in $\word{w}$. Then, for any word $\word{w}$, there exists a constant $C_{\word{w}} > 0$, such that
    \begin{equation}\label{eq:linear_form_as_bound}
     \sup_{t \in [0, T]}\left|\bracketsigX[t][\widehat{X}]{\word{w}} \right| \leq C_{\word{w}}T^{|\word{w}|-N_{\word{2}}(\word{w})}\max_{t \in [0, T]}\left|X_t\right|^{N_{\word{2}}(\word{w})}.
    \end{equation}
\end{lemma}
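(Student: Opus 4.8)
The plan is to prove the bound by induction on the length $|\word{w}|$, using the iterative definition of signature elements from Remark~\ref{rmk:sig_iteration_def}, namely $\sigX[t][\widehat{X}]^{\word{v}\word{i}} = \int_0^t \sigX[s][\widehat{X}]^{\word{v}} \circ dX_s^{\word{i}}$. The base case $\word{w} = \emptyword$ is trivial since $\sigX[t][\widehat{X}]^{\emptyword} = 1$. For the inductive step, write $\word{w} = \word{v}\word{i}$ with $\word{i} \in \{\word{1}, \word{2}\}$ and apply the inductive hypothesis to $\word{v}$.

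First I would handle the easier case $\word{i} = \word{1}$, i.e. integration against $dt$. Here $N_{\word{2}}(\word{v}\word{1}) = N_{\word{2}}(\word{v})$ and $|\word{v}\word{1}| = |\word{v}| + 1$, and since the Stratonovich integral against $dt$ coincides with the ordinary Lebesgue integral, we get $|\sigX[t][\widehat{X}]^{\word{v}\word{1}}| \leq \int_0^t |\sigX[s][\widehat{X}]^{\word{v}}|\, ds \leq t \sup_{s\in[0,T]}|\sigX[s][\widehat{X}]^{\word{v}}|$, and the inductive bound on $\word{v}$ closes this case with $C_{\word{v}\word{1}} = C_{\word{v}}$. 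The case $\word{i} = \word{2}$ is the one requiring care: here $N_{\word{2}}$ increases by one, $|\word{w}|$ increases by one, so the power of $T$ stays the same and the power of $\max|X|$ increases by one. The difficulty is that $X$ need not be of bounded variation, so we cannot naively bound the Stratonovich (or Itô) integral $\int_0^t \sigX[s][\widehat{X}]^{\word{v}} \circ dX_s$ pathwise. The standard device is integration by parts: write $\int_0^t \sigX[s][\widehat{X}]^{\word{v}} \circ dX_s = \sigX[t][\widehat{X}]^{\word{v}} X_t - \sigX[0][\widehat{X}]^{\word{v}} X_0 - \int_0^t X_s \circ d\sigX[s][\widehat{X}]^{\word{v}}$, and then expand $d\sigX[s][\widehat{X}]^{\word{v}}$ again via Remark~\ref{rmk:sig_iteration_def} to peel off the last letter of $\word{v}$, transferring all the irregularity onto factors of $X$ that are simply bounded by $\max|X|$. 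Equivalently, and more cleanly, one can argue via the shuffle property (Proposition~\ref{prop:shufflepropertyextended}): there is a universal identity expressing $\sigX[t][\widehat{X}]^{\word{w}}$ as $X_t^{N_{\word{2}}(\word{w})}$ times (a constant multiple of) a linear combination of signature terms each of which has the same $N_{\word{2}}$-count rearranged, but since that is circular, the induction-plus-integration-by-parts route is the honest one. The key bookkeeping is that each application of integration by parts replaces one stochastic integral by a boundary term $\sigX^{\word{v}'} X$ plus a new integral with one fewer letter, and after $N_{\word{2}}(\word{w})$ such steps all stochastic integration has been removed, leaving only iterated Lebesgue integrals (each contributing a factor bounded by $T$) multiplied by at most $N_{\word{2}}(\word{w})$ factors of $\sup|X|$ and numerical constants depending only on the combinatorics of $\word{w}$.

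The main obstacle is therefore controlling the Stratonovich integrals against $dX$ in a purely pathwise way; this is resolved by repeated integration by parts so that every such integral becomes a boundary term, at the cost of accumulating a combinatorial constant $C_{\word{w}}$ that depends on $\word{w}$ but not on $X$ or $T$. Taking the supremum over $t \in [0,T]$ at the end (monotonicity of the Lebesgue integrals in $t$, and boundedness of the boundary terms) and collecting the powers of $T$ — one for each letter $\word{1}$ and each letter $\word{2}$ that got integrated by parts into a Lebesgue integral, which in total is exactly $|\word{w}| - N_{\word{2}}(\word{w})$ — yields \eqref{eq:linear_form_as_bound}. I would keep the constant $C_{\word{w}}$ implicit throughout rather than tracking its exact value, since only its existence and independence of $X$ and $T$ matter for the applications.
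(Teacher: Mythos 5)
Your strategy is sound and would yield the lemma, but it is a genuinely different route from the paper's. The paper does exactly the ``shuffle'' argument you dismiss as circular: by Proposition~\ref{prop:nod} (a consequence of Radford's theorem on Lyndon words), any word $\word{w}$ is a fixed linear combination of shuffle monomials $\word{2}\shupow{k}\shuprod\word{u_1}\word{1}\shuprod\cdots\shuprod\word{u_m}\word{1}$ with $k+\sum_i N_{\word{2}}(\word{u_i})=N_{\word{2}}(\word{w})$; the shuffle property then turns each monomial into $X_t^k\prod_i\int_0^t\langle\word{u_i},\widehat{\mathbb{X}}_s\rangle\,ds$, i.e.\ a power of $X_t$ times Lebesgue integrals of strictly shorter words, and induction on $|\word{w}|$ closes immediately. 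This is not circular --- the decomposition is a purely algebraic identity proved independently of the bound --- and it disposes of all stochastic integrals in one step. Your integration-by-parts route is the ``by hand'' version of the same algebra (Stratonovich IBP \emph{is} the shuffle identity), and it works, but with two caveats you should be aware of. First, after one IBP the residual integral $\int_0^t X_s\circ d\,\sigX[s][\widehat X]^{\word{v}}=\int_0^t X_s\,\sigX[s][\widehat X]^{\word{v'}}\circ dX_s^{\word{j}}$ has an integrand of product form $X_s^k\sigX[s][\widehat X]^{\word{u}}$, which is not the signature term of a single shorter word; so the induction hypothesis must be strengthened to cover integrals $\int_0^t X_s^k\sigX[s][\widehat X]^{\word{u}}\circ dX_s$ (with an inner induction on $|\word{u}|$), rather than being a plain induction on $|\word{w}|$ over signature terms. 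Second, your count ``after $N_{\word{2}}(\word{w})$ steps all stochastic integration has been removed'' is not accurate --- each IBP peels one letter off the \emph{inner} word, so the number of steps is governed by $|\word{w}|$, though this is only a bookkeeping slip and does not affect the final exponents, which you identify correctly. In short: your approach is more elementary (no Lyndon/Radford input) at the cost of a heavier and more delicate induction; the paper's approach front-loads the combinatorics into Proposition~\ref{prop:nod} and makes the analytic step one line.
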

\begin{proof}
    The proof is given in Subsection~\ref{sect:first_lemma_proof}.
\end{proof}

We then consider the signature of the extended process $\widehat Y^X_t = (t, X_t, Z^1_t, \ldots, Z^d_t)$, where $Z_t = (Z^1_t, \ldots, Z^d_t)$ is a $d$-dimensional Brownian motion and $X$ is an arbitrary Itô process. While the terms of the signature no longer satisfy an almost sure bound as in the previous lemma, we show that a similar result still holds in a probabilistic sense.

\begin{lemma} \label{lem:expYV}
Let $Z = (Z^{1},\ldots,Z^d)$ be a $d$-dimensional Brownian motion, and let $X$ be an Itô process defined on the same probability space and such that $\langle Z^i, X\rangle \equiv 0$ for any $i = 1, \ldots, d$.
Let $\widehat{\mathbb{Y}}^X_t$ denote the signature of $\widehat Y_t^X$.  
Then, for any word $\word{w}$, any $q \geq 2$, $\epsilon > 0$, and any stopping time $\tau$ taking values in $[0,T]$, there exists a constant $C = C_{\word{w}, q, \epsilon, T}$ such that  
\begin{equation}\label{eq:second_bound}
\E\left[\int_0^{\tau} \left\langle \word{w}, \widehat{\mathbb{Y}}^X_t \right\rangle^{q} dt\right] \leq C \cdot \E \left[\int_0^{\tau} X_t^{q (N_{\word{2}}(\word{w}) + \epsilon)} dt\right],
\end{equation}
where $N_{\word{2}}(\word{w})$ denotes the number of occurrences of the letter $\word{2}$ in $\word{w}$.
\end{lemma}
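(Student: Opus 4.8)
The plan is to proceed by induction on the length $|\word{w}|$, proving the bound simultaneously for all $q \geq 2$ and all $\epsilon > 0$, peeling off one letter at a time from the right using the iterative representation \eqref{eq:signdef2}. Write $\word{w} = \word{v}\word{i}$ with $\word{i}$ a single letter and $\word{v}$ the prefix. The Stratonovich integral defining $\langle \word{w}, \widehat{\mathbb{Y}}^X_t\rangle = \int_0^t \langle \word{v}, \widehat{\mathbb{Y}}^X_s\rangle \circ d\widehat{Y}^{X,\word{i}}_s$ should first be rewritten as an Itô integral plus a bracket correction; the correction term is itself a lower-order linear combination of signature elements (a shuffle-type identity), so it will be absorbed by the inductive hypothesis. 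There are three cases for the letter $\word{i}$: (a) $\word{i} = \word{1}$, the time component, so the increment is $ds$ and $N_{\word{2}}$ is unchanged; (b) $\word{i} = \word{2}$, the $X$-component, which raises $N_{\word{2}}$ by one; (c) $\word{i} \in \{\word{3},\ldots,\word{d+2}\}$, one of the independent Brownian components $Z^j$, which leaves $N_{\word{2}}$ unchanged but introduces a genuine martingale term.

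For case (a), by Jensen (or Hölder) in the time integral, $\left|\int_0^t \langle \word{v}, \widehat{\mathbb{Y}}^X_s\rangle\, ds\right|^q \leq T^{q-1}\int_0^t |\langle \word{v}, \widehat{\mathbb{Y}}^X_s\rangle|^q\, ds$, and then $\E\big[\int_0^\tau |\cdots|^q dt\big] \leq T^q\, \E\big[\int_0^\tau |\langle \word{v},\widehat{\mathbb{Y}}^X_s\rangle|^q ds\big]$ after integrating in $t\le T$ and using Fubini with the optional stopping structure; the inductive hypothesis applied to $\word{v}$ with the same $q$, the same $\epsilon$ closes this case since $N_{\word{2}}(\word{v}) = N_{\word{2}}(\word{w})$. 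For case (c), the martingale term $M_t := \int_0^t \langle \word{v},\widehat{\mathbb{Y}}^X_s\rangle\, dZ^j_s$ is handled by the Burkholder–Davis–Gundy inequality: $\E\big[\sup_{t\le\tau}|M_t|^q\big] \lesssim \E\big[\big(\int_0^\tau \langle\word{v},\widehat{\mathbb{Y}}^X_s\rangle^2 ds\big)^{q/2}\big] \le T^{q/2-1}\,\E\big[\int_0^\tau |\langle\word{v},\widehat{\mathbb{Y}}^X_s\rangle|^q ds\big]$, again closing via the inductive hypothesis for $\word{v}$ at exponent $q$ (note $q\ge 2$ is exactly what makes $q/2\ge 1$ so Jensen applies inside the bracket), and then $\E[\int_0^\tau |M_t|^q dt] \le T\,\E[\sup_{t\le\tau}|M_t|^q]$.

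Case (b), the $\word{2}$-letter, is the crux and is where the $\epsilon$-loss is spent. Here $\langle\word{w},\widehat{\mathbb{Y}}^X_t\rangle = \int_0^t \langle\word{v},\widehat{\mathbb{Y}}^X_s\rangle\, dX_s$ (the Itô form, the Stratonovich correction being lower order and thus absorbed as above), and $X$ is a general Itô process with no explicit driver we control. The idea is: by integration by parts, $\int_0^t \langle\word{v},\widehat{\mathbb{Y}}^X_s\rangle\, dX_s = \langle\word{v},\widehat{\mathbb{Y}}^X_t\rangle X_t - X_0\langle\word{v},\widehat{\mathbb{Y}}^X_0\rangle - \int_0^t X_s\, d\langle\word{v},\widehat{\mathbb{Y}}^X_s\rangle - [\langle\word{v},\widehat{\mathbb{Y}}^X_\cdot\rangle, X]_t$, and each term is a product of a power of $X$ with a lower-order signature combination (or its differential). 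The differential $d\langle\word{v},\widehat{\mathbb{Y}}^X_s\rangle$ and the bracket expand, via \eqref{eq:signdef2}, into finitely many terms of the form (lower-order signature word) times ($ds$, $dX_s$, or $dZ^j_s$), so iterating reduces everything to bounding, for each lower-order word $\word{u}$ with $N_{\word{2}}(\word{u}) \le N_{\word{2}}(\word{w}) - 1$, quantities like $\E\big[\int_0^\tau X_t^{r}\langle\word{u},\widehat{\mathbb{Y}}^X_t\rangle^{q'} dt\big]$ for suitable $r, q'$. These are split by Hölder's inequality in $(\omega,t)$: pick conjugate exponents $p_1 = \frac{N_{\word{2}}(\word{w})+\epsilon}{N_{\word{2}}(\word{u})+\epsilon/2}$ (roughly) on the signature factor and its conjugate on the $X$-power, so that the inductive hypothesis applied to $\word{u}$ at the \emph{larger} integrability exponent $q\cdot p_1 \ge 2$ and a \emph{smaller} $\epsilon/2$ produces $\E[\int_0^\tau X^{q p_1(N_{\word{2}}(\word{u})+\epsilon/2)} dt]$, whose exponent is arranged (by the choice of $p_1$) to be $\le q(N_{\word{2}}(\word{w})+\epsilon)$ after recombining with the Hölder-conjugate factor, using that $\E[\int_0^\tau X^a dt]$ is increasing in $a\ge 0$ on the bounded interval only up to constants — more precisely one invokes $X^a \le 1 + X^{a'}$ for $a\le a'$ to compare powers. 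Bookkeeping the exponents so that the final power of $X$ is exactly $q(N_{\word{2}}(\word{w})+\epsilon)$ and no worse, while keeping all intermediate integrability exponents $\ge 2$, is the delicate part; the freedom to shrink $\epsilon$ at each induction step (starting from any target $\epsilon>0$ and halving) is what makes the bookkeeping feasible, and the base case $|\word{w}|=0$ is trivial since $\langle\emptyword,\widehat{\mathbb{Y}}^X_t\rangle \equiv 1$ and $N_{\word{2}}(\emptyword)=0$.
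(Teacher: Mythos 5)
Your overall architecture (induction on $|\word{w}|$, H\"older for the $\word{1}$-letter, BDG for the $Z$-letters, Young's inequality with $\epsilon$-splitting to balance exponents) matches the paper's proof, but your case (b) -- the trailing letter $\word{2}$ -- contains a genuine gap. You propose to first rewrite $\int_0^t \langle\word{v},\widehat{\mathbb{Y}}^X_s\rangle\circ dX_s$ in It\^o form, asserting that the bracket correction ``is itself a lower-order linear combination of signature elements.'' This is false whenever $\word{v}$ itself ends in $\word{2}$ (e.g.\ for $\word{w}=\word{122}$ or $\word{322}$): the correction is $\tfrac12\int_0^t\langle\word{v'},\widehat{\mathbb{Y}}^X_s\rangle\,d[X]_s$, and the lemma assumes nothing about $[X]$ -- only that $[X,Z^i]\equiv 0$. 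Since $X$ is an arbitrary It\^o process, $d[X]_s$ is not comparable to $ds$, so this term is not a signature element of $\widehat{Y}^X$ and cannot be absorbed by the inductive hypothesis. The same problem recurs inside your integration by parts: the bracket $[\langle\word{v},\widehat{\mathbb{Y}}^X_\cdot\rangle,X]_t$ and the residual ``(lower-order word) times $dX_s$'' terms that you say survive the expansion are exactly the objects that the right-hand side of \eqref{eq:second_bound} cannot control (BDG against $dX$ produces $d[X]$, and the drift of $X$ is also unconstrained). Your claim that ``iterating reduces everything'' to terms of the form $X_t^r\langle\word{u},\widehat{\mathbb{Y}}^X_t\rangle^{q'}$ integrated against $dt$ is precisely the nontrivial assertion that all $dX$-integrals and $d[X]$-brackets cancel or recombine into exact powers of $X_t$; you neither prove termination of the iteration nor that the residue has this form.

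This missing algebraic fact is exactly what the paper supplies \emph{before} doing any analysis: by Proposition~\ref{prop:nod} (Radford's theorem on Lyndon words), every word is a shuffle polynomial in $\word{2}$ and in words \emph{not ending} in $\word{2}$. Consequently $X$ only ever enters through the closed-form factors $\langle\word{2}\shupow{k},\widehat{\mathbb{Y}}^X_t\rangle=X_t^k$, and the only stochastic integrals that must be estimated are against $dt$ or $dZ^j$, whose Stratonovich corrections are either Riemann integrals or vanish by the hypothesis $[X,Z^j]\equiv0$. Your iterated integration by parts, if carried out consistently in Stratonovich form (where it is the plain product rule, $\int_0^t f\circ dX = f_tX_t-f_0X_0-\int_0^t X\circ df$, with no brackets), is in effect a hand-made proof of this special case of Proposition~\ref{prop:nod}; to make your case (b) rigorous you would need to either prove that the iteration terminates with no surviving $\circ\, dX$ integrals, or simply invoke the shuffle decomposition up front as the paper does. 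The remaining bookkeeping with Young's and H\"older's inequalities in your last paragraph then coincides with the paper's argument.
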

\begin{proof}
    The proof is given in Subsection~\ref{sect:second_lemma_proof}.
\end{proof}
{\begin{remark}
    The statement of Lemma~\ref{lem:expYV} remains true if, for some $C_0 > 0$ and for any $i = 1, \ldots, d$,  
    \begin{equation*}\label{eq:covariation_assumption}
        \left| d \langle Z^i, X \rangle_t \right| \leq C_0 \, dt.
    \end{equation*}  
    In this case, the same proof is applicable since the cross-variation term appearing in the bounds can be controlled in exactly the same way as the drift term.
\end{remark}
}
}
\subsection{Polynomial representations in the shuffle algebra}\label{sect:polynomial_repr}

In this subsection, we recall the notion of \cite{Lyndon1954}  words which will help us prove the bounds given by Lemma~\ref{lemma:linear_from_upper_bound} and Lemma~\ref{lem:expYV}.
The extended tensor algebra $\eTA$ can be considered as an algebra endowed with a shuffle product. We define a shuffle multivariate polynomial $P\shupow{}$ of $n$ arguments as
$$
P\shupow{}: \eTA^n \to \eTA, \quad (\word{v_1}, \ldots, \word{v_n}) \mapsto 
 \sum_{\substack{p_1, \ldots, p_n}} c_{p_1, \ldots, p_n} \, \word{v_1}\shupow{p_1}\shuprod\word{v_2}\shupow{p_2}\ldots\shuprod\word{v_n}\shupow{p_n},
$$
where the sum is finite and taken over some set of positive natural numbers $p_1, \ldots, p_n$.

The Lyndon words defined below are useful to construct the polynomial representations of the shuffle algebra elements.

\begin{definition}
    A Lyndon  word in an alphabet $\alphabet$ is a non-empty word that is strictly lexicographically greater than any of its non-trivial cyclic rotations. A set of all Lyndon words in the alphabet $\alphabet$ will be denoted by $\mathcal{L}(\alphabet)$, and $\mathcal{L}_N(\alphabet)$ will denote the Lyndon words of length not greater than $N$.
\end{definition}

\begin{example}
    The first Lyndon words in the alphabets $\alphabet[2]$ and $\alphabet[3]$ are given by
    \begin{itemize}
        \item $\mathcal{L}(\alphabet[2]) = \{\word{2}, \word{1}, \word{21}, \word{221}, \word{211}, \word{2221}, \word{2211}, \word{2111}, \ldots\}.$
        \item $\mathcal{L}(\alphabet[3]) = \{\word{3}, \word{2}, \word{1}, \word{32}, \word{31}, \word{21}, \word{332}, \word{331}, \word{322}, \word{321}, \word{312}, \word{311}, \word{221}, \word{211}, \ldots\}.$
    \end{itemize}
\end{example}

An important property of the Lyndon words is given by the  \cite{Radford1979ANR} Theorem. It states that the Lyndon words generate the shuffle algebra, i.e.~each element $\bell \in \TA$ can be represented by a shuffle polynomial $P_{\bell}\shupow{}$:
\begin{equation}
    \bell = P_{\bell}\shupow{}(\word{v_1}, \ldots, \word{v_{n_{\bell}}}), \quad \word{v_1}, \ldots, \word{v_{n_{\bell}}} \in \mathcal{L}.
\end{equation}
Moreover, if $\bell \in \tTA{N}$, then $\word{v_1}, \ldots, \word{v_{n_{\bell}}} \in \mathcal{L}_N$ and $\sum_{i = 1}^{n_{\bell}}p_i|\word{v_i}| \leq N$ {for all terms $\word{v_1}\shupow{p_1}\shuprod\ldots\shuprod\word{v_n}\shupow{p_n}$ in the polynomial $P_{\bell}\shupow{}$.}

The following example provides explicit polynomial representations for {the words over alphabet $\{\word{1},\word{2}\}$} of length up to $3$ that can be easily verified.

\begin{example}
    \begin{enumerate}
        \item $\mathcal{L}_1 = \{\word{1}, \word{2}\}$ contains all the words of length $1$.
        \item $\mathcal{L}_2 = \{\word{1}, \word{2}, \word{21}\}$. The remaining words of length $2$ can be obtained as the following polynomials:
        \begin{align}
            \word{11} &= \dfrac{1}{2}\cdot\word{1}\shupow{2}, \\
            \word{12} &= \word{1}\shuprod\word{2} - \word{21}, \\
            \word{22} &= \dfrac{1}{2}\cdot\word{2}\shupow{2}.
        \end{align}
        \item $\mathcal{L}_3 = \{\word{1}, \word{2}, \word{21}, \word{211}, \word{221}\}$. The words of length $2$ were constructed above.  The remaining words of length $3$ can be obtained as polynomials:
    \begin{equation}
    \begin{aligned}
        \word{111} &= \dfrac{1}{6} \cdot\word{1}\shupow{3},  
        &\word{122} &= \dfrac{1}{2}\cdot\word{1}\shuprod\word{2}\shupow{2} - \word{21}\shuprod\word{2} + \word{221}, \\
        \word{112} &= \word{11}\shuprod\word{2} - \word{21}\shuprod\word{1} + \word{211}, 
        &\word{212} &= \word{21}\shuprod\word{2} - 2\cdot\word{221}, \\
        \word{121} &= \word{21}\shuprod\word{1} - 2\cdot\word{211}, 
        &\word{222} &= \dfrac{1}{6}\cdot\word{2}\shupow{3}.
    \end{aligned}
    \end{equation}
    \end{enumerate}
\end{example}
We note that no Lyndon word ends with the lexicographically greatest letter $\word{d}$, except for the word $\word{d}$ itself. Together with Radford's theorem, and noting that the letters in the alphabet can be ordered arbitrarily, this implies the following result.

\begin{proposition} \label{prop:nod}
Let $\word{w}$ be a word in $A_d$, and fix $\word{k} \in A_d$. Then $\word{w}$ can be expressed as a shuffle polynomial  
\[
\word{w} = P^{\shuprod}_{\word{w}}(\word{v_1}, \ldots, \word{v_k}),
\]
where each $\word{v_i}$ either does not end with the letter $\word{k}$ or is equal to $\word{k}$.
\end{proposition}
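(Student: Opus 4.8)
The plan is to obtain Proposition~\ref{prop:nod} directly from Radford's Theorem combined with the elementary fact on Lyndon words recalled just above it, the key observation being that the total order on $\alphabet$ underlying both statements can be chosen freely.

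First I would fix $\word{k} \in \alphabet$ and equip $\alphabet$ with \emph{any} total order for which $\word{k}$ is the greatest letter — for example, relabel the canonical basis $\{e_1,\ldots,e_d\}$ so that $\word{k}$ comes last. All order-dependent notions (the lexicographic order on words, the Lyndon words $\mathcal{L}$ and $\mathcal{L}_N$, and the polynomial representation $P^{\shuprod}$) are henceforth understood with respect to this order. The point to stress is that Radford's Theorem, as recalled in Subsection~\ref{sect:polynomial_repr}, is invariant under relabelling the alphabet, hence applies verbatim: setting $N = |\word{w}|$ and viewing $\word{w} \in \tTA{N}$, there are Lyndon words $\word{v_1}, \ldots, \word{v_m} \in \mathcal{L}_N$ (for this order) with
\[
\word{w} = P^{\shuprod}_{\word{w}}(\word{v_1}, \ldots, \word{v_m}).
\]

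It then only remains to apply the fact stated immediately before the proposition: no Lyndon word ends with the lexicographically greatest letter, except for that letter itself. Since $\word{k}$ is the greatest letter for the chosen order, each $\word{v_i}$ either does not end with $\word{k}$ or equals $\word{k}$, which is precisely the claim. I do not anticipate a genuine obstacle here; the only subtlety is recognising that both Radford's Theorem and the Lyndon-word fact are order-agnostic, so that specialising to an order in which $\word{k}$ is maximal is legitimate — after that, the conclusion is a direct substitution.
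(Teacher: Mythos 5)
Your proposal is correct and follows exactly the paper's own argument: the proposition is deduced from Radford's theorem together with the fact that no Lyndon word of length at least two ends with the lexicographically greatest letter, after reordering the alphabet so that $\word{k}$ is maximal. The paper states this derivation in the paragraph preceding the proposition rather than in a displayed proof, and your write-up simply makes the order-invariance of both ingredients explicit, which is the only subtlety involved.
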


\begin{remark}
    {In the subsequent proofs, we will often deal with the signatures of \( \widehat{Y}^X_t = (t, X_t, Z_t^1, \ldots, Z_t^d) \), where \( X_t \) is an arbitrary Itô process and \( Z_t \) is a \( d \)-dimensional Brownian motion. It is then beneficial, in proofs using induction on word length, to represent an arbitrary word \( \word{w} \) as a polynomial of words that do not end with \( \word{2} \) (except for \( \word{2} \conpow{k} \)), corresponding to the process \( X \). This representation reduces the analysis to standard Riemann integrals and Stratonovich integrals with respect to the Brownian motion.}
\end{remark}


\subsection{Proof of Lemma~\ref{lemma:linear_from_upper_bound}}\label{sect:first_lemma_proof}

We will prove the result by induction on the length $ |\word{w}| $ of the word $\word w$.  
For $ |\word{w}| = 1 $, the result holds trivially with $ C_{\word{w}} = 1 $.  For $ |\word{w}| > 1 $, {we assume that \eqref{eq:linear_form_as_bound} holds for words of length strictly less than $|\word{w}|$.}  
By Proposition~\ref{prop:nod}, the word $ \word{w} $ can be written as a linear combination of monomials of the form  
\begin{equation}
    \word{2}\shupow{k}\shuprod\word{u_11}\shuprod\ldots\shuprod\word{u_m1},
\end{equation}
where 
\begin{equation}\label{eq:lengths_eq}
    k + \sum_{i=1}^mN_{\word{2}}(\word{u_i}) = N_{\word{2}}(\word{w}), \quad  k + m + \sum_{i=1}^m|\word{u_i}|= |\word{w}|.
\end{equation}
Hence, it suffices to bound each monomial:
\begin{align}
\left|\bracketsigX[t][\widehat{X}]{\word{2}\shupow{k}\shuprod\word{u_11}\shuprod\ldots\shuprod\word{u_m1}}\right| &= \left|X_t^k\prod_{i=1}^m\int_0^t\bracketsigX[s][\widehat{X}]{\word{u_i}}ds\right| \\
    &\leq \max_{t \in [0, T]}|X_t|^k\prod_{i=1}^m \left(T \max_{t \in [0, T]} \left|\bracketsigX[t][\widehat{X}]{\word{u_i}}\right|\right) \\
    &\leq T^m\max_{t \in [0, T]}|X_t|^k \prod_{i=1}^m\left(C_{\word{u_i}}T^{|\word{u_i}| - N_{\word{2}}(\word{u_i})}\max_{t \in [0, T]}|X_t|^{N_{\word{2}}(\word{u_i})}\right) \\
    &=  \left(\prod_{i=1}^mC_{\word{u_i}}\right) T^{m + \sum\limits_{i=1}^m\left(|\word{u_i}| - N_{\word{2}}(\word{u_i})\right)}\max_{t \in [0, T]}|X_t|^{k + \sum\limits_{i=1}^{m}N_{\word{2}}(\word{u_i})},
\end{align}
where we have used for the second inequality the induction hypothesis \eqref{eq:linear_form_as_bound} on the words $\word{u_i}$,  since $|\word{u_i}|<|\word w|$. It remains to observe that, by
\eqref{eq:lengths_eq},
\begin{equation}
    k + \sum\limits_{i=1}^{m}N_{\word{2}}(\word{u_i}) = N_{\word{2}}(\word{w}), \quad
    m + \sum\limits_{i=1}^m\left(|\word{u_i}| - N_{\word{2}}(\word{u_i})\right) = |\word{w}| - N_{\word{2}}(\word{w}),
\end{equation}
which yields \eqref{eq:linear_form_as_bound} for the word $\word w$ and ends the proof.

\subsection{Proof of Lemma~\ref{lem:expYV}}\label{sect:second_lemma_proof}
We proceed by induction on the length $n$ of the word $\word{w}$. For $n=1$, \eqref{eq:second_bound} holds trivially. For $n\geq 2$, we assume the  bound to hold for all words of length strictly less than $n$.

If $\word{w} = \word{2}^{\otimes n}$, the claim is obvious, as equality holds in \eqref{eq:second_bound} for {$ C = (n!)^{-q}$} and $\epsilon=0$.

Otherwise, by Proposition \ref{prop:nod}, one can write $\word{w}$ as
\[
\word{w} = \sum_{k=1}^{m} \alpha_k \word{w_{k,1}} \shuprod \ldots \shuprod \word{w_{k,i_k}}
\]
where the words $\word{w_{k,j}}$ contain at least one letter and satisfy
\[
|\word{w_{k,1}}|+ \ldots + |\word{w_{k,i_k}}| = |\word{w}| = n,
\]
\[
N_{\word{2}}(\word{w_{k,1}})+ \ldots + N_{\word{2}}(\word{w_{k,i_k}}) = N_{\word{2}}(\word{w}),
\]
and for each $k,j$, $\word{w_{k,j}}$ is either of length not greater than $n-1$ or ending with a letter different from $\word{2}$.

By the elementary inequality  
$$
(x_1 + \ldots + x_m)^q \leq C_{q,m} \left(|x_1|^q + \ldots + |x_m|^q\right),
$$  
it suffices to bound each term corresponding to the summands separately.  

We then distinguish two cases. First, assume that $i_k \geq 2$, meaning that each of the words has length less than $n$. In this case, we apply Young's inequality to write  
$$
  \prod_{j=1}^{i_k} \left|\left\langle\word{w_{k,j}}, \widehat{\mathbb{Y}}^X_t\right\rangle\right|^{q}  \lesssim \sum_{j=1}^k   \left|\left\langle\word{w_{k,j}}, \widehat{\mathbb{Y}}^X_t\right\rangle\right|^{q \frac{N_{\word{2}}(\word{w}) + \epsilon}{N_{\word{2}}(\word{w_{k,j}})+\epsilon_j}},
$$
where $\epsilon_j >0$ are arbitrary such that $\sum_j \epsilon_j = \epsilon$. We then conclude by the induction hypothesis applied to each $\word{w_{k,j}}$ since $|\word{w_{k,j}}| < n$.

In the case where $i_k=1$, we know that $\word{w_{k,1}}$ is of the form $\word{u}\word{1}$ or $\word{u}\word{j}$ for $3\leq j \leq d + 2$ . In the first case, this means that
\begin{align*}
    \E \left[ \int_0^{\tau} \left|\left\langle\word{w_{k,1}}, \widehat{\mathbb{Y}}^X_t\right\rangle\right|^{q} dt\right] = 
    \E\left[ \int_0^{\tau}\left|\int_0^t \left\langle\word{u}, \widehat{\mathbb{Y}}^X_s\right\rangle ds\right|^{q} dt \right] 
    \leq T^{ q - 1} \E \left[\int_0^{\tau} \left| \left\langle\word{u}, \widehat{\mathbb{Y}}^X_s\right\rangle\right|^{q} ds  \right],
\end{align*}
where we have used H\"older's inequality, and we conclude by the induction hypothesis applied to $\word{u}$ since $|\word{u}| = n - 1$.

For the case $\word{w_{k, 1}}=\word{u}\word{j} = \word{u'u_nj}$, $j \geq 3$, this means that
\begin{align*}
\left\langle\word{w_{k,1}}, \widehat{\mathbb{Y}}^X_t\right\rangle &= \int_0^t \left\langle\word{u}, \widehat{\mathbb{Y}}^X_s\right\rangle \circ dZ^{j - 2}_s \\
&= \int_0^t \left\langle\word{u}, \widehat{\mathbb{Y}}^X_s\right\rangle dZ^{j - 2}_s + \indic{\word{u_n} = \word{j + 2}}\int_0^t \left\langle\word{u'}, \widehat{\mathbb{Y}}^X_s\right\rangle ds + \indic{\word{u_n} = \word{2}}\int_0^t \left\langle\word{u'}, \widehat{\mathbb{Y}}^X_s\right\rangle d\langle X_s, Z^{j - 2}_s\rangle.
\end{align*}
Again, it is enough to bound separately the integrals of each of the three terms. The second one follows exactly as in the case $\word{w_{k,1}}=\word{u}\word{1}$, and the third one vanishes, as follows {from the assumption $\langle Z^i, X\rangle \equiv 0$}. For the first term, using the Burkholder--Davis--Gundy inequality, we obtain
\begin{align*}
     \E \left[\int_0^{\tau} \left|\int_0^t \left\langle\word{u}, \widehat{\mathbb{Y}}^X_s\right\rangle dZ^{j - 2}_s\right|^{q} dt \right] &\leq T  \E \left[\sup_{0\leq t \leq \tau}  \left|\int_0^t \left\langle\word{u}, \widehat{\mathbb{Y}}^X_s\right\rangle dZ^{j - 2}_s\right|^{q}\right] \\
     &\leq C_{T,q} \E\left[\left(\int_0^{\tau} \left\langle\word{u}, \widehat{\mathbb{Y}}^X_s\right\rangle^2 ds \right)^{\frac{q}{2}} \right]\\
    & \leq C'_{T,q} \E \left[ \int_0^{\tau} \left|\left\langle\word{u}, 
     \widehat{\mathbb{Y}}^X_s\right\rangle\right|^q ds  \right]
\end{align*}
Using Hölder's inequality, recalling that $q \geq 2$, and applying the induction hypothesis to $\word{u}$ since $|\word{u}| = n - 1$, we establish \eqref{eq:second_bound} for the word $\word{w}$, thereby completing the proof.

\section{Proof of martingality in Theorem~\ref{T:main_martingality} and Theorem~\ref{T:main_martingality_multid}}\label{sect:martingality_expec}

As noted in Remark~\ref{rmk:multid_model}, Theorem~\ref{T:main_martingality} follows immediately from Theorem~\ref{T:main_martingality_multid} when $\rho \neq 0$.
The case $\rho = 0$ is trivial, as $X$, given by \eqref{eq:sig_SDE_1D}, is a Brownian motion. Consequently, we have $\tau_\infty^X = +\infty$ almost surely, and $S$ is a martingale by Corollary~\ref{cor:one_d_criterion}.
Hence, we turn directly to the more general model \eqref{eq:price_model_multid} and proceed with the proof of Theorem~\ref{T:main_martingality_multid}.

If $N = 1$, the equation \eqref{eq:sdeX} is a linear SDE, which admits a unique strong solution on $\mathbb{R}^+$. This ensures that $\tau_\infty^X = +\infty$ almost surely. We conclude that $S$ is a martingale by Theorem~\ref{T:explosion_criterion}. This establishes \textit{(i)} of Theorem~\ref{T:main_martingality_multid}.

We now consider the non-trivial case $N \geq 2$, corresponding to point \textit{(ii)} of Theorem~\ref{T:main_martingality_multid}.
By Theorem~\ref{T:explosion_criterion}, we need to prove that the solution $X$ to \eqref{eq:sdeX} does not explode in finite time. Specifically, defining
\[
\tau^X_n = \inf \left\{ t \geq 0{ \colon}\, |X_t| = n \right\},
\]
we will show that 
\[
\lim\limits_{n \to \infty}\mathbb{P}(\tau^X_n < T) = \P(\tau^X_\infty < T)  =  0.
\]

The probabilities on the left-hand side are controlled by
\[
\P(\tau^X_n \leq T) \leq \dfrac{1}{n^{2N}}\E\left[X_{T\wedge \tau_n^X}^{2N}\right],
\]
since
$$
\E\left[X_{T\wedge \tau_n^X}^{2N}\right] \geq \E\left[n^{2N}\indic{\tau_n^X \leq T}\right] = n^{2N}\P(\tau^X_n \leq T),
$$
so that it is enough to obtain a uniform bound (in $n$)  for $\E\left[X_{T\wedge \tau_n^X}^{2N}\right]$.

By Itô's formula, we have on $t \in [[0, \tau_n^X]]$
\begin{equation}
    dX_t^{2N} = 2NX_t^{2N-1}dX_t + N(2N-1)X_t^{2N-2}dt = 2NX_t^{2N-1}\bracketsigX[t][\widehat{Y}^X]{\bsigma}dt + 2NX_t^{2N-1}dB_t+ N(2N-1)X_t^{2N-2}dt
\end{equation}
Integration from $0$ to $T\wedge \tau_n^X$, taking the expectation and denoting $\tilde\bsigma = \bsigma - \sigma^{\word{2}\conpow{N}} \word{2}\conpow{N}$, we obtain
\begin{align}
    \E\left[X_{T\wedge \tau_n^X}^{2N}\right] &=  \E \left[\int_0^{T\wedge \tau_n^X} \left(2NX_s^{2N-1}\bracketsigX[s][\widehat{Y}^X]{\bsigma} + N(2N-1)X_s^{2N-2}\right)\,ds \right]\\
    &= \E \left[\int_0^{T\wedge \tau_n^X}\left(\dfrac{2N}{N!} \sigma^{\word{2}\conpow{N}} X_s^{3N-1} + \dfrac{2N}{N!} X_s^{2N-1} \bracketsigX[s][\widehat{Y}^X]{\tilde\bsigma}+N(2N-1)X_s^{2N-2}\right)\,ds \right] \\
    &\leq \E \left[\int_0^{T\wedge \tau_n^X} \left( \dfrac{2N}{N!} \sigma^{\word{2}\conpow{N}} X_s^{3N-1} + \delta X_s^{3N-1} + C(N, \delta) \bracketsigX[s][\widehat{Y}^X]{\tilde\bsigma}^{\frac{3N - 1}{N}}+N(2N-1)X_s^{2N-2}\right]\right)\,ds \\
    &\leq \E \left[\int_0^{T\wedge \tau_n^X} \left({ \left(\dfrac{2N}{N!}\sigma^{\word{2}\conpow{N}} + \delta\right)} X_s^{3N-1} + \tilde C(N, \delta, T) \left(|X_s|^{(3N - 1)\frac{N - 1}{N} + \epsilon} + 1\right)+N(2N-1)X_s^{2N-2}\right)\,ds\right],
\end{align}
where we have applied Young's inequality to obtain the first inequality and Lemma \ref{lem:expYV} for the second one. Choosing $\delta \in \left(0,  -\dfrac{2N}{N!}\sigma^{\word{2}\conpow{N}}\right)$ and $\epsilon$ such that ${(3N - 1)\frac{N - 1}{N} + \epsilon} < 3N - 1$, we ensure that the integrand is bounded from above uniformly in $n$ since $3N - 1$ is even and $\left(\dfrac{2N}{N!}+ \delta\right)\sigma^{\word{2}\conpow{N}} < 0$. This finishes the proof.

\section{Proofs of strict local martingality}\label{section:proof_local_martingality}

In this section, we prove the necessity in Theorem~\ref{T:main_martingality} \textit{(ii)} and Theorem~\ref{T:main_martingality_multid} \textit{(ii)}. Unlike the ``one-shot'' proof of the martingale property in Section~\ref{sect:martingality_expec}, we first prove strict local martingality in the simpler one-dimensional case and then extend it to the multidimensional setting.

\subsection{Proof of necessity in Theorem~\ref{T:main_martingality} \textit{(ii)}}\label{section:proof_local_mart_1D}

The proof of the theorem follows the ideas used by \cite{Lions2007} in the Markovian setting. 

    Fix $T > 0$. According to Corollary~\ref{cor:one_d_criterion}, it is enough to prove than $\P(\tau_\infty^X < T) > 0$, where
    \begin{equation}
        \tau_\infty^X = \lim_{n\to\infty}\tau_n^X, \quad \tau_n^X := \inf\{t \geq 0\colon  |X_t| = n\},
    \end{equation}
    and $X = (X_t)_{t\geq 0}$ satisfies
    \begin{equation}
        dX_t = d W_t +\bracketsig[t][X]{\rho \bsigma}dt, \quad t \in [[0, \tau_\infty^X]], \quad X_0 = 0.
    \end{equation}
    
    For the linear form of signature, the following inequality is a direct consequence of Lemma~\ref{lemma:linear_from_upper_bound}:
    \begin{equation}\label{eq:vol_remainder_ineq}
        \sup_{t \in [0,\, T]}\left| \bracketsigX[t][\widehat{X}]{\bsigma - \sigma^{\word{2}\conpow{N}}\word{2}\conpow{N}} \right| \leq C_{\bsigma, T}\left(1 + \sup_{t \in [0,\, T]}\left|X_t\right|^{N-1}\right) \quad a.s.
    \end{equation}
    for some positive $C_{\bsigma, T} > 0$ depending only on $T$ and the coefficients $\bsigma$. This implies
    \begin{equation}\label{eq:vol_domination_ineq}
        \dfrac{\sigma^{\word{2}\conpow{N}}}{N!}X_t^N - C_{\bsigma, T}\left(1 + \sup_{s \in [0,\, t]}\left|X_s\right|^{N-1}\right) \leq \bracketsigX[t][\widehat{X}]{\bsigma} \leq \dfrac{\sigma^{\word{2}\conpow{N}}}{N!}X_t^N + C_{\bsigma, T}\left(1 + \sup_{s \in [0,\, t]}\left|X_s\right|^{N-1}\right) \quad a.s.
    \end{equation}
    for all $t \in [0,\, \tau^X_\infty]$.

    In particular, it follows that { $\lim\limits_{t \to \tau_{\infty}^X}\bracketsig[t][X]{\bsigma}^2 = +\infty$ on $\{\tau_{\infty}^X < \infty\}$}, so it is sufficient to show that $\P(\tau_\infty^X < T) > 0$.
    
    For an arbitrary $\lambda \in \R$, define the event
    \begin{equation}
    {
    \mathcal{A}_\lambda = \left\{\sup_{t \in [0, T]}\left| W_t - \lambda t \right| \leq 1\right\}, \quad \P(\mathcal{A}_\lambda) > 0.
    }
    \end{equation}
    If $\bracketsig[t][X]{\bsigma_t}$ was a deterministic function of $X_t$, so that $X_t$ was Markovian, we would write on $\mathcal{A}_\lambda$
    \begin{equation}
        X_t \geq \lambda t - 1 + \rho \int_0^t \bracketsig[s][X]{\bsigma}\, ds
    \end{equation}
    and apply the comparison theorem for ODEs to establish the explosion of $X_t$.
    However, in general, it is not true, and the equation becomes path-dependent. Our strategy consists of isolating the term \(\frac{\sigma^{\word{2}\conpow{N}}}{N!}X_t^N\)
    \begin{equation}
        \bracketsig[t][X]{\bsigma} = \dfrac{\sigma^{\word{2}\conpow{N}}}{N!}X_t^N + \bracketsig[t][X]{\bsigma - \sigma^{\word{2}\conpow{N}}\word{2}\conpow{N}}
    \end{equation}
    and bounding the remainder using \eqref{eq:vol_domination_ineq}. Then, we compare the solutions on $[ \tau_n^X, \tau_{n+1}^X )$, bounding $\max\limits_{s \in [0, t]} |X_s|^{N-1}$ by $(n + 1)^{N-1}$, and apply the comparison principle for ODEs on this interval. This leads to a bound on $(\tau_{n+1}^X - \tau_n^X)$, which is used in its turn to bound $\tau_\infty^X$ on $\mathcal{A}_\lambda$.

    Assume first that $\rho\sigma^{\word{2}\conpow{N}} > 0$. 
    By \eqref{eq:vol_domination_ineq}, for $t \in [\tau_n^X, \tau_{n+1}^X)$, we have on $\mathcal{A}_\lambda$
    \begin{align}
        X_{t} &\geq X_{\tau_n^X} + (W_t - W_{\tau_n^X}) + \dfrac{\rho\sigma^{\word{2}\conpow{N}}}{N!}\int_{\tau_n^X}^tX_s^N\,ds - |\rho|C_{\bsigma, T}\int_{\tau_n^X}^t \left(1 + \sup_{u \in [0,\, s]} \left|X_u\right|^{N-1}\right)\,ds \\
        &\geq n + \lambda (t - \tau_{n}^X) - 2 + \dfrac{\rho\sigma^{\word{2}\conpow{N}}}{N!}\int_{\tau_n^X}^t X_s^N \,ds - |\rho|C_{\bsigma, T} \left(1 + (n+1)^{N-1}\right)(t - \tau_n^X).
    \end{align}
    This allows us to apply the comparison theorem for ODEs, so that $X_t \geq Y_n(t)$ on $[\tau_n^X, \tau_{n+1}^X)$, where $Y_n(t)$ satisfies
    {
    \begin{equation}
        \dot Y_n(t) = F_n(Y_n(t)), \quad Y_n(\tau_n^X) = n - 2,
    \end{equation}
    where
    $$
    F_n(x) := \lambda - |\rho|C_{\bsigma, T} \left(1 +  (n+1)^{N-1}\right) + \dfrac{\rho\sigma^{\word{2}\conpow{N}}}{N!} x^N 
    $$
    At the left endpoints of the intervals $[\tau_n^X, \tau_{n+1}^X)$, the derivatives  
    $$
    \dot Y_n(\tau_n^X) = F_n(n-2) := \lambda - |\rho|C_{\bsigma, T} \left(1 +  (n+1)^{N-1}\right) + \dfrac{\rho\sigma^{\word{2}\conpow{N}}}{N!} (n-2)^N
    $$
    are uniformly bounded from below in \( n \), since \( \rho\sigma^{\word{2}\conpow{N}} > 0 \).
    Hence, one can choose \(\lambda\) sufficiently large so that \(\dot Y_n(\tau_n^X) \ge 0\) for all \(n \ge 0\).
    
    Moreover, as each \(F_n(\cdot)\) is increasing and \(F_n(Y_n(\tau_n^X)) > 0\), it follows that \(F_n(Y_n(t)) > 0\) for all \(t \ge \tau_n^X\).
    Consequently, the solutions \((Y_n(t))_{t \ge \tau_n^X}\) remain increasing and satisfy \(Y_n(t) \ge n - 2\) for all \(n \in \N\) and \(t \in [\tau_n^X, \tau_{n+1}^X)\).
    Therefore,}
    \begin{equation}
        Y_n(t) \geq n - 2 + \left(\lambda + \dfrac{\rho\sigma^{\word{2}\conpow{N}}}{N!}(n-2)_+^N - |\rho|C_{\bsigma, T} \left(1 + (n+1)^{N-1}\right)\right)(t - \tau_n^X).
    \end{equation}
    Hence, on $\mathcal{A}_\lambda$, the following estimate holds
    \begin{equation}
        \tau_{n+1}^X - \tau_{n}^X \leq \tau_{n+1}^Y - \tau_{n}^X \leq \dfrac{3}{\lambda + \frac{\rho\sigma^{\word{2}\conpow{N}}}{N!}(n-2)_+^N - |\rho|C_{\bsigma, T} \left(1 + (n+1)^{N-1}\right)}
    \end{equation}
    and
    \begin{equation}
        \tau_{\infty}^X = \sum_{n \geq 0}(\tau_{n+1}^X - \tau_{n}^X) \leq
         \sum_{n \geq 0} \dfrac{3}{\lambda + \frac{\rho\sigma^{\word{2}\conpow{N}}}{N!}(n-2)_+^N - |\rho|C_{\bsigma, T} \left(1 + (n+1)^{N-1}\right)}.
    \end{equation}
    The series converges since $N \geq 2$ and the its sum can be made arbitrary small by choosing $\lambda$ large enough.

    For the remaining case of even $N$ and $\rho\sigma^{\word{2}\conpow{N}} < 0$, we take $\lambda < 0$ large enough, so that $X_{\tau_{n}^X} = -n$ on $\mathcal{A}_\lambda$, and we have on $[\tau_{n}^X, \tau_{n+1}^X)$
    \begin{align}
        X_{t} &\leq X_{\tau_n^X} + (W_t - W_{\tau_n^X}) + \dfrac{\rho\sigma^{\word{2}\conpow{N}}}{N!}\int_{\tau_n^X}^tX_s^N\,ds + |\rho|C_{\bsigma, T}\int_{\tau_n^X}^t \left(1 + \sup_{u \in [0,\, s]} \left|X_u\right|^{N-1}\right)\,ds \\
        &\leq -n + \lambda (t - \tau_{n}^X) + 2 + \dfrac{\rho\sigma^{\word{2}\conpow{N}}}{N!}\int_{\tau_n^X}^t X_s^N \,ds + |\rho|C_{\bsigma, T} \left(1 + (n+1)^{N-1}\right)(t - \tau_n^X).
    \end{align}
    As $\dfrac{\rho\sigma^{\word{2}\conpow{N}}}{N!}X_s^N < 0$, using the same argument as before, we can choose $\lambda(T)$ in order to ensure that $X_t$ blows up to $-\infty$ on $\mathcal{A}_\lambda$ before $T$ for any $T > 0$.

\subsection{Proof of necessity in Theorem~\ref{T:main_martingality_multid} \textit{(ii)}}\label{section:proof_local_mart_mulitiD}

We prove the result for the case where $ \sigma^{\word{2}\conpow{N}} > 0$ showing that $ X $ blows up to $ +\infty $ with positive probability. The case where $ \sigma^{\word{2}\conpow{N}} < 0 $ and $ N $ is even is treated similarly by showing that $ X $ blows up to $ -\infty $.

{The idea of this proof is to extend the proof of strict local martingality in Section~\ref{section:proof_local_martingality} by finding a suitable event $\mathcal{A}$ of positive probability such that the solution of \eqref{eq:sdeX} blows up before $T > 0$ on $\mathcal{A}$. We achieve this by keeping the elements of the signature containing $Z$ (i.e., those corresponding to words containing the letters $\word{3}, \ldots, \word{d + 2}$) small, so that the solution of the SDE remains close to the one considered in the one-dimensional case. More precisely, we define  
$$
\tau_n^X = \inf\{t \geq t_0\colon X_t = n\},
$$
and we fix $t_0 > 0$, $M \in \mathbb{N}$, and $\epsilon$ such that $N\epsilon < 1$. We require that on $\mathcal{A}$, the following inequalities hold:
\begin{align}
X_{t_0} &\geq M, \label{eq:cond1}\\
| \langle \word{w},\widehat{\mathbb{Y}}^X_{t_0} \rangle | &\leq 1, \quad \text{for all $\word{w}$ containing at least one of the letters $\word{3}, \ldots, \word{d + 2}$;} \label{eq:cond2}\\
\sup_{s \in [\tau_n^X, \tau_{n+1}^X{ \land\tilde\tau_{n+1}^X]}} | \langle \word{w} , \widehat{\mathbb{Y}}^X_{s} \rangle | &\leq n^{N_{\word{2}}(\word{w})+\epsilon |\word{w}|}, \quad \text{for $n \geq M$ and for all words $\word{w}$ such that $|\word{w}| \leq N$;} \label{eq:cond3}\\
\sup_{s \in [\tau_n^X, \tau_{n+1}^X{ \land\tilde\tau_{n+1}^X]}} |B_s - B_{\tau_n}| &\leq n^{-N/2+\epsilon}, \quad \text{for $n \geq M$,} \label{eq:cond4}
\end{align}
{ where $\tilde\tau_{n+1}^X := \tau_n^X + Cn^{-N}$ for some positive constant $C = C_{\bsigma, N} > 0$ to be determined later.}
\paragraph{Step 1: Proving the explosion on $\mathcal{A}$.} 
Conditional on this event, we can use ODE comparison on the dynamics of $X$ { to choose $C > 0$ such that}
\begin{equation}\label{eq:d_tau_bound}
    \tau_{n+1}^X-\tau_n^X \leq C n^{-N}, \quad n \geq M,
\end{equation}
{ so that $\tau_{n+1}^X\land\tilde\tau_{n+1}^X = \tau_{n+1}^X$ on $\mathcal{A}$ for $n \geq M$.}
Indeed, for $t \in [\tau_n^X, \tau_n^X + C n^{-N}]$, we have 
\begin{align*}
X_t &= X_{\tau_n^X} + \int_{\tau_n^X}^t \bracketsigX[s][\widehat{Y}^X]{\bsigma}\,ds + B_t - B_{\tau_n^X}\\ 
&\geq X_{\tau_n^X} + \frac{\sigma^{\word{2}^{\otimes N}}}{N!} \int_{\tau_n^X}^t X_s^N ds - C_{\bsigma} n^{(N - 1) + \epsilon N} (t-\tau_n^X) - n^{-N/2 + \epsilon} \\
&\geq n - \left(C_{\bsigma} C n^{N\epsilon - 1} + n^{-N/2 + \epsilon}\right) + \frac{\sigma^{\word{2}^{\otimes N}}}{N !} \int_{\tau_n^X}^t X_s^N ds,
\end{align*}
where we used conditions \eqref{eq:cond3} and \eqref{eq:cond4} in the first inequality. For $n$ large enough, we have 
$$
C_{\bsigma} C n^{N\epsilon - 1} + n^{-N/2 + \epsilon} < 1,
$$
since $N\epsilon < 1$, so that
$$
X_t \geq (n-1) + \frac{\sigma^{\word{2}^{\otimes N}}}{N !} \int_{\tau_n^X}^t X_s^N ds.
$$
By the comparison theorem for ODEs, this implies that $\tau_{n+1}^X - \tau_n^X\leq {\tau}_{(n-1) \to (n+1)}^x$, where ${\tau}_{(n-1) \to (n+1)}^x$ is the time needed for the ODE 
$$\dot{x}(t) =\frac{\sigma^{\word{2}^{\otimes N}}}{N !} x^N(t),$$ 
to go from $(n-1)$ to $(n+1)$. Since the coefficient $\sigma^{\word{2}^{\otimes N}}$ is positive, a straightforward computation gives
\begin{equation} \label{eq:tildetauandc}
{\tau}_{(n-1) \to (n+1)}^x =  \left( \frac{N !}{(N-1) \sigma^{\word{2}^{\otimes N}}}\right) \left((n-1)^{1-N} - (n+1)^{1-N} \right)\leq C n^{-N}
\end{equation}
for a suitable choice of $C > 0$. It then follows that, on the event $\mathcal{A}$, we have 
$$
\tau_\infty \leq t_0 + C \sum\limits_{n\geq M} n^{-N}.
$$ 
Since $t_0$ and $M$ in \eqref{eq:cond1}--\eqref{eq:cond2} are arbitrary, they can be chosen so that $\tau_\infty < T$ on $\mathcal{A}$.
It remains to show that $\P(\mathcal{A}) > 0$.
\paragraph{Step 2: Showing that $\P(\mathcal{A}) > 0$.} 
First, we note that 
\begin{equation*}
    \mathcal{A} = A_M \cap \bigcap_{n\geq M}\left(A_n^B \cap \bigcap_{\word{w}:\, |\word{w}| \leq N}A_n^{\word{w}}\right), 
\end{equation*}
where the event $A_M$ corresponds to the inequalities \eqref{eq:cond1}--\eqref{eq:cond2}, $A_n^{\word{w}}$ corresponds to \eqref{eq:cond3} for fixed $n$ and $\word{w}$, and $A_n^B$ corresponds to \eqref{eq:cond4} for a given $n$.

{We first claim that $\P\left(A_M\right) > 0$. By the support theorem of \citet[Theorem 5.1]{SV72}, it suffices to show that, for a suitable choice of $(w,z^1,\ldots, z^d)$ (deterministic smooth functions on $[0,T]$), the corresponding deterministic dynamics (i.e. where $W$, $Z^i$ are replaced by $w, z^i$ in \eqref{eq:sdeX}) satisfy the conditions. Taking $z^1 \equiv \cdots \equiv z^d \equiv 0$ clearly leads to \eqref{eq:cond2}. Further taking $w_t = \lambda t$ leads to \eqref{eq:cond1} by the analysis of Section \ref{section:proof_local_mart_1D}. It follows that   $\P\left(A_M\right) > 0$.
}

We define 
$$
A_{n + 1} := A_n \cap \left(\bigcap_{|\word{w}| \leq N} A^{\word{w}}_{n + 1}\right) \cap A^B_{n + 1}, \quad n \geq M.
$$
It is enough to prove that, if $M$ is large enough, for each $n \geq M$, we have
\begin{equation} \label{eq:cond}
    \P(A_{n+1}^c|A_n) \leq c e^{-c n^{\delta}},
\end{equation}
for some $c,\delta>0$. Indeed, since $A_{n+1} \subset A_n$ and $\mathcal{A} = \bigcap\limits_{n \geq M} A_n$, we write
\begin{equation}
    \P(\mathcal{A}) = \P(A_M)\prod_{n \geq M}\P(A_{n+1} | A_n),
\end{equation}
so that, if \eqref{eq:cond} is verified, we have
$$
    \log(\P(\mathcal{A})) = \log(\P(A_M)) + \sum_{n \geq M}\log(\P(A_{n+1} | A_n)) \geq  \log(\P(A_M)) + \sum_{n \geq M}\log(1 - c e^{-c n^{\delta}}).
$$
As $n \to \infty$, $\log(1 - c e^{-c n^{\delta}}) \sim c e^{-c n^{\delta}}$, and the series converges. Thus, we obtain $\P(\mathcal{A}) > 0$.

Note that since
\begin{equation*}
    \P(A_{n+1}^c|A_n) = \P\left(\bigcup_{\word{w}:\, |\word{w}| \leq N} (A_{n+1}^{\word{w}})^c \cup (A_{n+1}^B)^c  \Bigg|\, A_n\right)
    \leq \P((A_{n+1}^B)^c | A_n) + \sum_{\word{w}:\, |\word{w}| \leq N} \P((A_{n+1}^{\word{w}})^c |A_n),
\end{equation*}
the condition \eqref{eq:cond} is stable under finite intersections, and it suffices to prove \eqref{eq:cond} for $\P((A_{n+1}^B)^c|A_n)$ and for each of the $\P((A_{n+1}^{\word{w}})^c|A_n)$.

The bound \eqref{eq:cond} for $\P(A_{n+1}^B|A_n)$ follows from Gaussian scaling and Gaussian tail estimates for the Brownian motion $B$.
Thus, it remains to establish \eqref{eq:cond} for the events $A_{n+1}^{\word{w}}$, which we proceed to do via induction on the length of the word $|\word{w}|$. 

For $|\word{w}| = 1$, we consider three cases. If $\word{w} = \word{1}$, \eqref{eq:cond3} holds for sufficiently large $n$ due to the bound \eqref{eq:d_tau_bound} on $\tau_n^X$. If $\word{w} = \word{2}$, \eqref{eq:cond3} follows directly from the definition of $\tau_n^X$ for large $n$. In the remaining case, where $\word{w} \in \{\word{3}, \ldots, \word{d + 2}\}$, \eqref{eq:cond} follows from Gaussian tail estimates, as was previously done for the probability $\P(A_{n+1}^B|A_n)$. 

Now, suppose that $|\word{w}| > 1$ and that \eqref{eq:cond} holds for all words $\word{u}$ such that $|\word{u}| < |\word{w}|$.

By Proposition~\ref{prop:nod}, we can write $\word{w}$ as a linear combination of shuffle products and we consider three cases.

The first case consists of words of the form $\word{w} = \word{w_{1}} \shuprod \ldots \shuprod \word{w_k}$ with $k \geq 2$. The result then follows from the induction hypothesis for $\word{w_j}$, since on $\bigcap\limits_{j = 1}^kA_{n+1}^{\word{w_j}}$ we have
\begin{equation*}
    \sup_{s \in [\tau_n^X, \tau_{n+1}^X{ \land\tilde \tau_{n+1}^X}]} | \langle \word{w} , \widehat{\mathbb{Y}}^X_{s} \rangle| \leq \prod_{j=1}^k \sup_{s \in [\tau_n^X, \tau_{n+1}^X { \land\tilde \tau_{n+1}^X}]} | \langle \word{w_j} , \widehat{\mathbb{Y}}^X_{s} \rangle| \leq \prod_{j=1}^k n^{N_{\word{2}}(\word{w_j})+\epsilon |\word{w_j}|} = n^{N_{\word{2}}(\word{w})+\epsilon |\word{w}|}.
\end{equation*}
This implies that $A_{n+1}^{\word{w}} \subset \bigcap\limits_{j = 1}^kA_{n+1}^{\word{w_j}}$. 

The second case concerns words of the form $\word{u} \word{1}$, which is again handled by the induction hypothesis for $\word{u}$. Indeed, conditional on $A_n$ and $A_{n+1}^{\word{u}}$, we have
\begin{align*}
    \sup_{s \in [\tau_n^X, \tau_{n+1}^X{ \land\tilde \tau_{n+1}^X}]} | \langle \word{w} , \widehat{\mathbb{Y}}^X_{s} \rangle| &= \sup_{s \in [\tau_n^X, \tau_{n+1}^X{ \land\tilde \tau_{n+1}^X}]} \left| \langle \word{w} , \widehat{\mathbb{Y}}^X_{\tau_n^X}\rangle + \int_{\tau_n^X}^s\langle \word{u} , \widehat{\mathbb{Y}}^X_{s} \rangle ds\right| \\
    &\leq (n - 1)^{N_{\word{2}}(\word{w})+\epsilon |\word{w}|} + ({ \tilde\tau_{n+1}^X} - \tau_n^X)n^{N_{\word{2}}(\word{u})+\epsilon |\word{u}|} \\
    &\leq (n - 1)^{N_{\word{2}}(\word{w})+\epsilon |\word{w}|} + Cn^{-N + N_{\word{2}}(\word{u})+\epsilon |\word{u}|} \\
    &\leq (n - 1)^{N_{\word{2}}(\word{w})+\epsilon |\word{w}|} + Cn^{-1+\epsilon (N - 1)},
\end{align*}
which is bounded by $n^{N_{\word{2}}(\word{w})+\epsilon |\word{w}|}$ for sufficiently large $n$ (recall that $M$ can be chosen large enough).

The last case corresponds to $\word{w} = \word{u} \word{i}$ for $i \in\{3, \ldots, d + 2\}$. 
We write
\begin{align*}
    \P((A_{n+1}^{\word{w}})^c|A_n) &= \P((A_{n+1}^{\word{w}})^c, (A_{n+1}^{\word{u}})^c|A_n) + \P((A_{n+1}^{\word{w}})^c, A_{n+1}^{\word{u}}|A_n) \\
    &\leq \P((A_{n+1}^{\word{u}})^c|A_n) + \P((A_{n+1}^{\word{w}})^c, A_{n+1}^{\word{u}}|A_n),
\end{align*}
and note that \eqref{eq:cond} holds for the first term by the induction hypothesis for $\word{u}$.
For $s \geq \tau_n$, we have
\[
\langle \word{w} , \widehat{\mathbb{Y}}^X_{s} \rangle = \langle \word{w} , \widehat{\mathbb{Y}}^X_{\tau_n^X} \rangle  + \int_{\tau_n}^s \langle \word{u} , \widehat{\mathbb{Y}}^X_{s}\rangle {\circ} dZ^{i - 2}_s.
\]
The first term and the Stratonovich correction are bounded on $A_n$ and $A_{n+1}^{\word{u}}$ as in the previous case. Thus, it suffices to prove the bound for the conditional probability of the event
\[
\left\{\sup_{t \in [\tau_n^X, \tau_{n+1}^X{ \land \tilde \tau_{n+1}^X}]}  \int_{\tau_n^X}^{t} \langle \word{u} , \widehat{\mathbb{Y}}^X_{s}\rangle dZ^{i - 2}_s \geq n^{N_{\word{2}}(\word{w}) + \epsilon |\word{w}|-1}\right\}.
\]
{
We use the exponential inequality
\begin{equation}\label{eq:exp_ineq}
    \P\left( \sup_{t \in [\tau_1, \tau_2]}\left|\int_{\tau_1}^{t} \gamma_s dW_s \right|\geq \lambda, \; \sup_{t \in [\tau_1, \tau_2]} |\gamma_t| \leq \eta\right) \leq \exp\left(- \frac{\lambda^2}{2 \eta^2 h}\right),
\end{equation}
which holds for all stopping times \(\tau_1 \leq \tau_2\leq \tau_1 + h \), any \(h > 0\), and any progressively measurable \(\gamma\). Indeed, it follows from 
\begin{align*}
    &\P\left( \sup_{t \in [\tau_1, \tau_2]}\left|\int_{\tau_1}^{t} \gamma_s dW_s \right|\geq \lambda, \; \sup_{t \in [\tau_1, \tau_2]} |\gamma_t| \leq \eta\right) \\
    &\leq \P\left( \sup_{t \in [\tau_1, \tau_2]}\left|\int_{\tau_1}^{t} \gamma_s dW_s \right|\geq \lambda, \; \int_{\tau_1}^{\tau_2} |\gamma_s|^2\,ds \leq \eta^2h\right)
    \\& = \P\left( \sup_{t \geq 0}M_t \geq \lambda, \; \langle M \rangle_{\infty} \leq \eta^2h \right) \leq \exp\left(- \frac{\lambda^2}{2 \eta^2 h}\right),
\end{align*}
where we applied the exponential Bernstein inequality (cf.~\citet[IV, 3.16]{RevuzYor1999}) to the martingale
$M_t := \int_{\tau_1}^{(\tau_1 + t)\land\tau_2} \gamma_s \, dW_s,\  t \geq 0.
$
Applying the exponential inequality \eqref{eq:exp_ineq} with $h = Cn^{-N}$, we then obtain
}
\begin{align*}
    \P &\left(\sup_{t \in [\tau_n^X, \tau_{n+1}^X{ \land \tilde \tau_{n+1}^X}]}  \int_{\tau_n}^{t} \langle \word{u} , \widehat{\mathbb{Y}}^X_{s}\rangle dZ^i_s \geq n^{N_{\word{2}}(\word{w}) + \epsilon |\word{w}|-1}, A^{\word{u}}_{n+1} \Big| A_n\right) \\
         &\leq \P\left( \sup_{t \in [\tau_n^X,  \tau_{n}^X{ \land\tilde \tau_{n+1}^X}]}  \int_{\tau_n}^{t} \langle \word{u}, \widehat{\mathbb{Y}}^X_{s}\rangle {\circ} 
     dZ^i_s \geq n^{N_{\word{2}}(\word{w}) + \epsilon |\word{w}|-1}, \; \sup_{t \in [\tau_n^X, \tau_{n}^X{ \land\tilde \tau_{n+1}^X]}} \left| \langle \word{u} , \widehat{\mathbb{Y}}^X_{t}\rangle \right| \leq n^{N_{\word{2}}(\word{w}) + \epsilon (|\word{w}|-1)} \Big| A_n \right) \\
     &\leq \exp\left(- \frac{n^{N-2(1-\epsilon)}}{2{ C}}\right).
\end{align*}
Thus, \eqref{eq:cond} holds for $\word{w}$. This completes the proof.
}

\section{Proof of Theorem \ref{thm:moments}: Moments explosion}\label{sect:moments}

Without loss of generality, we will assume that $N$ is odd and
\[
\sigma^{\word{2}\conpow{N}}  > 0 \mbox{ and }\rho < 0.
\]
The proof for the remaining case $ \sigma^{\word{2}\conpow{N}} < 0 $ and $ \rho > 0 $ is analogous.  

By It\^o's formula,
\begin{align}
    S_T^m &= S_0^m \exp\left( m\int_0^T\sigma_s\,dB_s - \dfrac{m}{2}\int_0^T\sigma_s^2\,ds \right) \\ &= S_0^m \exp\left( \rho m\int_0^T\sigma_s\,dW_s + \bar\rho m\int_0^T\sigma_s\,dW_s^\perp - \dfrac{m}{2}\int_0^T\sigma_s^2\,ds \right),
\end{align}
As $\sigma_t = \bracketsig{\bsigma}$ is adapted to the filtration generated by $W$, conditioning with respect to $W$ implies
\begin{align*}
S_0^{-m} \E\left[S_T^m\right] &=   \;\E \left[ \exp\left(\rho m \int_0^T \bracketsig[s]{\bsigma} \,d W_s + \frac{\bar{\rho}^2 m^2 - m}{2} \int_0^T \bracketsig[s]{\bsigma}^2 \, ds\right)\right].
\end{align*}
{We then proceed as in \cite{Gassiat2018OnTM} and apply the \cite{BD98} formula {(more precisely, a variant allowing unbounded functionals, see Appendix \ref{sec:BD})}
to obtain}
\begin{equation}\label{eq:BD}
\begin{aligned} 
 \ln {\E}\left[S_T^m/S^m_0\right]  &= \sup_{(u_t)_{t\geq0} \in \mathcal{U}_q} {\E} \left[ \int_0^T  \left( - \frac{u_s^2}{2} + \rho m  \left\langle \bsigma, \widehat{\mathbb{X}}^u_s \right\rangle u_s +   \frac{\bar{\rho}^2 m^2 - m}{2} \left\langle \bsigma, \widehat{\mathbb{X}}^u_s \right\rangle^2 \right)ds\right] \\
 & = \sup_{(u_t)_{t\geq0} \in \mathcal{U}_q} {\E} \left[ \int_0^T   \left( \frac{ m^2 - m}{2} \left\langle \bsigma, \widehat{\mathbb{X}}^u_s \right\rangle^2 - \frac{\left(u_s - \rho m \left\langle \bsigma, \widehat{\mathbb{X}}^u_s \right\rangle\right)^2}{2} \right)ds\right]  \\
 & =: \sup_{u \in \mathcal{U}_q}\mathcal{J}(u;\rho,m,T) \; =: \mathcal{V}(\rho,m,T),
\end{aligned}
\end{equation}
where 
\begin{align}\label{eq:Uq}
\mathcal{U}_{q}=\left\{ (u_t)_{t \geq 0} \mbox{ progressively measurable with }{\E}\left[ \int_0^T |u_t|^{q} dt \right] < +\infty \right\},
    \end{align}
for some fixed  $q \geq 2N$, 
and $\widehat{\mathbb{X}}^u_t$ denotes the signature of $\widehat{X}^u_t = (t, X^u_t)$ with $X^u_t$ defined by
\[
X^u_t = W_t + \int_0^{t} u_s ds.
\]

The finiteness of the moments is therefore equivalent to the control problem having a finite value. We now proceed to prove that, depending on the position of $ \rho $ relative to $ -\sqrt{1 - \frac{1}{m}} $, the value will either be finite or infinite.

\subsection{Proof of Theorem \ref{thm:moments} (i): negative result}

In this subsection we assume that
\[
- \sqrt{1-\frac{1}{m}} < \rho < 0,
\]
and we will show that $\mathcal{V}(\rho,m,T) = + \infty$. 
The idea is simple: we take $u$ of feedback form
\[u_t = \lambda \left\langle \bsigma, \widehat{\mathbb{X}}^u_t \right\rangle \]
for $\lambda \in \R$.

In that case, we denote by $X^\lambda$ the solution of the { signature SDE}
\[
dX^\lambda_t = dW_t +  \lambda \left\langle \bsigma, \widehat{\mathbb{X}}^\lambda_t \right\rangle dt,
\]
and we denote its signature by $\widehat{\mathbb{X}}^\lambda_t$. This solution exists (up to a possible explosion time) by Proposition \ref{prop:sigODE},
the value \eqref{eq:BD} corresponding to the control is simply
\[
 P (\lambda) \E \left[ \int_0^T \left\langle \bsigma, \widehat{\mathbb{X}}^\lambda_s \right\rangle^2 ds \right], 
\]
with
\[ P(\lambda) = \frac{m^2 - m}{2} - \frac{(\lambda - \rho m)^2}{2}.
\]
The assumption on $\rho$ is exactly equivalent to $P(0)>0$, which guarantees that there exists $\lambda$ s.t. 
\begin{equation} \label{eq:lambda}
\lambda > 0 \mbox{ and } P(\lambda) >0. 
\end{equation}
As shown in the proof in Section \ref{section:proof_local_martingality}, the first inequality $\lambda > 0$ will imply that $X^\lambda$ blows up with positive probability, which with the help of the second one implies that the problem value \eqref{eq:BD} is infinite.

The above argument is not completely rigorous, since the feedback control is only defined up to the explosion time and clearly not in $\mathcal{U}_q$. This can be dealt with by a simple truncation argument : for $R>0$, let 
{\[
\tau_{R,\lambda} = \inf\{ t \geq 0 :  |X_t^\lambda|= R \}
\]
and take 
\[
u^{\lambda,R}_t =  \begin{cases} \lambda \left\langle \bsigma, \widehat{\mathbb{X}}^\lambda_t \right\rangle, & t < \tau_{R,\lambda} \\
0, & t \geq \tau_{R,\lambda} \end{cases} 
\]

Then, by Lemma \ref{lem:expYV}, $u^{\lambda,R}$ is in $\mathcal{U}_q$ (for any $q \geq 2$), and 
\begin{align*}
   \mathcal{V}(\rho,m,T) \geq \mathcal{J}(u^{\lambda,R};\rho,m,T) &= P(\lambda) \E\left[\int_0^{T \wedge \tau_{R,\lambda}} \left\langle \bsigma, \widehat{\mathbb{X}}^\lambda_s \right\rangle^2 ds \right] + P(0) \E \left[ \int^T_{T \wedge \tau_{R,\lambda}} \left\langle \bsigma, \widehat{\mathbb{X}}_s^{u^{\lambda,R}} \right\rangle^2 ds\right]. 
\end{align*}
Taking the supremum over $R$, we obtain
\[\mathcal{V}(\rho,m,T) \geq P(\lambda) { \E}\left[\int_0^{T \wedge \tau_{\infty,\lambda}} \left\langle \bsigma, \widehat{\mathbb{X}}^\lambda_s \right\rangle^2 ds \right] \]
where $\tau_{\infty,\lambda}$ is the explosion time of $X^{\lambda}$. However, the r.h.s is infinite since the event $\{\tau_{\infty,\lambda} < T \}$ has positive probability (by the results of Section~\ref{section:proof_local_mart_1D}), and on this event $\int_0^{\tau_{\infty,\lambda}}  \left\langle \bsigma, \widehat{\mathbb{X}}^\lambda_s \right\rangle^2 ds = +\infty$ (by Proposition \ref{prop:sigODE}).
}
\subsection{Proof of Theorem \ref{thm:moments} (ii): positive result}

We assume now that the condition
\begin{equation}\label{eq:corr_condition_pos}
    \rho < - \sqrt{1-\frac{1}{m}},
\end{equation}
is verified and fix $(u_t)_{t\geq0} \in \mathcal{U}_q$. We want to bound
\begin{align}\label{eq:value_to_bound}
{\E} \left[ \int_0^T  \left( - \frac{u_s^2}{2} + \rho m \left\langle \bsigma, \widehat{\mathbb{X}}^u_s \right\rangle u_s +   \frac{\bar{\rho}^2 m^2 - m}{2} \left\langle \bsigma, \widehat{\mathbb{X}}^u_s \right\rangle^2 \right)ds\right].
\end{align}
Define 
\begin{equation}
    U_t := \int_0^{t} u_s\, ds.
\end{equation}
We denote by $\widehat Y^U_t=(t, U_t, W_t)$ and we denote its signature by $\widehat{\mathbb{Y}}^U_t$. Since this is a $3$-dimensional process, this signature is in the dual of the tensor algebra over words with letters $\word{1},\word{2},\word{3}$. Since $X = U + W$, we can rewrite the linear forms in terms of $\widehat{\mathbb{Y}}^U_t$. Namely, there exists $\tilde{\bsigma} \in \tTA[3]{N}$, such that $\left\langle \bsigma, \widehat{\mathbb{X}}^u_t \right\rangle = \left\langle \tilde\bsigma,\widehat{\mathbb{Y}}^U_t \right\rangle$.

In the expectation \eqref{eq:value_to_bound}, the term with the highest degree in $U$ is  
\[
\frac{\bar{\rho}^2 m^2 - m}{2} \left(\frac{\bsigma^{\word{2}\conpow{N}}}{N!}\right)^2 \mathbb{E} \left[\int_0^T U_t^{2N} dt\right].
\]  
Note that ${\bar{\rho}^2 m^2 - m}$ is strictly negative by assumption.  

We return to the value in \eqref{eq:value_to_bound} and bound the three terms in the integrand separately. The first one is non-positive and can therefore be ignored.

\paragraph{Step 1. Bounding the third term.} Using the same techniques as in the proof of Theorem~\ref{T:main_martingality_multid}, we can bound the third term $\frac{\bar{\rho}^2 m^2 - m}{2} \E\left[\int_0^T\left\langle\tilde\bsigma, \widehat{\mathbb{Y}}^U_s \right\rangle^2ds\right]$ from above by
\begin{equation}\label{eq:third_term_bound}
    - c_1 \E\left[\int_0^T  U_t^{2N} dt\right] + \sum_{\substack{|\word{w}| \leq N,\\ N_{\word{2}}(\word{w})<N}} c_{\word{w}} \E\left[\int_0^T\left\langle\word{w}, \widehat{\mathbb{Y}}^U_t\right\rangle^{2}\, dt\right]
\end{equation}
for some positive $c_1$ and $c_{\word{w}}$. By Lemma \ref{lem:expYV}, it is further bounded from above by
\[
\E \left[\int_0^T \left(- c_1 U_t^{2N} + c_2 U_t^{2N-1}\right)dt\right] + c_3 < +\infty.
\]

\paragraph{Step 2. Bounding the second term.} For the second term $\rho m{\E} \left[ \int_0^T \left\langle \tilde\bsigma, \widehat{\mathbb{Y}}^U_s \right\rangle u_s ds\right]$, we note that, for suitable coefficients $c_{\word{w}}$,
\begin{equation}\label{eq:second_term_bound}
    \rho m{\E} \left[ \int_0^T \left\langle \bsigma, \widehat{\mathbb{X}}^u_s \right\rangle u_s ds\right] = \rho m{\E} \left[ \int_0^T \left\langle \bsigma, \widehat{\mathbb{X}}^u_s \right\rangle dU_s\right] = -c_4 \E[U_T^{N+1}] + \sum_{\substack{|\word{w}| \leq N,\\ N_{\word{2}}(\word{w})<N}} c_{\word{w}} \E\left[ \left\langle  \word{w} \word{2},\widehat{\mathbb{Y}}^U_T\right\rangle \right],
\end{equation}
where $c_4 = -\dfrac{\rho m\bsigma^{\word{2}\conpow{N}}}{(N + 1)!}$.
We note that the Itô and Stratonovich integrals with respect to $ U $ coincide since $ U $ has bounded variation.  
By the assumption of the theorem, $c_4 > 0$, so that the first term in \eqref{eq:second_term_bound} is negative.

For the second term, we claim that it holds, for any $\word{w}$ with $|\word{w}| \leq N+1$ and $N_{\word{2}}(\word{w}) \leq N$, for arbitrary $\epsilon>0$,
\begin{equation} \label{eq:term2}
    \E\left[ \left\langle  \word{w},\widehat{\mathbb{Y}}^U_T\right\rangle \right]  \lesssim 1+ \E[ |U_T|^{N+\epsilon}] + \int_0^T \E[|U_t|^{2(N-1)+\epsilon}] \, dt
\end{equation}
We first consider separately the case where $|\word{w}| = N + 1$ and $N_{\word{2}}(\word{w}) = N $, i.e., only one letter differs from $ \word{2} $, say $ \word{j} $. By Proposition~\ref{prop:nod}, we can express $ \word{w} $ in the form
\begin{equation}\label{eq:w_decomp_1}
    \word{w} = \sum_{k=0}^N \alpha_k \word{2}\shupow{k}\shuprod \word{2}\conpow{N - k}\word{j}
\end{equation}

If $\word{j} = \word{3}$, then \eqref{eq:w_decomp_1} takes the form
\[
\E\left[ \left\langle  \word{w},\widehat{\mathbb{Y}}^U_T\right\rangle \right] =\sum_{k=1}^N \alpha_k \E \left[U_T^k \left(\int_0^T U_t^{N-k} dW_t\right)\right]
\]
Note that we omitted the term for $ k = 0 $, as it has zero expectation.  
Applying Young's inequality and the Burkholder--Davis--Gundy inequality to each term, we obtain   
{\begin{align}
    \E \left[U_T^k \left(\int_0^T U_t^{N-k} dW_t\right)\right] &\lesssim \E\left[ |U_T|^{N+\epsilon}\right] + \E \left[\left(\int_0^T U_t^{(N-k)}dW_t\right)^{\frac{N+\epsilon}{(N -k + \epsilon) }}\right]
    \\ 
    &\lesssim \E\left[ |U_T|^{N+\epsilon}\right] + \E \left[\left(\int_0^T U_t^{2(N-k)}dt\right)^{\frac{N+\epsilon}{2(N -k + \epsilon) }}\right]
    \\ 
    &\lesssim 1 + \E\left[ |U_T|^{N+\epsilon}\right] + \E \left[\left(\int_0^T U_t^{2(N-k)}dt\right)^{\max\left\{\frac{N+\epsilon}{2(N -k + \epsilon) }, 1\right\}}\right]
    \\
    &\lesssim 1 + \E \left[|U_T|^{N+\epsilon}\right] + \E \left[\int_0^T |U_t|^{\max\left\{ N+\epsilon, 2(N-k)\right\}} dt\right],
    \\
    &\lesssim 1 + \E \left[|U_T|^{N+\epsilon}\right] + \E \left[\int_0^T |U_t|^{2(N - 1) + \epsilon
} dt\right],
\end{align}
}
which proves the claim \eqref{eq:term2} in this case.

If $\word{j} = \word{1}$, a similar but simpler computation yields
\[
\E \left[U_T^k \left(\int_0^T U_t^{N-k} dt\right)\right] \lesssim \E\left[ |U_T|^{N}\right] +  \E\left[ \int_0^T |U_t|^{N} dt\right],
\]
We then assume that $N_{\word{2}}(\word{w}) \leq N-1$. In this case, by Proposition~\ref{prop:nod}, we can write $\word{w}$ as shuffle polynomial with monomials of the form
\begin{equation}\label{eq:w_decomp_2}
    \word{2}\shupow{k}\shuprod \word{u_11} \shuprod \ldots  \shuprod \word{u_m1} \shuprod \word{v_13}  \shuprod \ldots  \shuprod \word{v_n3},
\end{equation}
so that it remains to prove estimates on
\[
\E\left[U_T^{k} \prod_{i=1}^{m} \left(\int_0^T  \left\langle  \word{u_i},\widehat{\mathbb{Y}}^U_t\right\rangle dt\right)\prod_{j=1}^{n} \left(\int_0^T  \left\langle  \word{v_j},\widehat{\mathbb{Y}}^U_t\right\rangle \circ dW_t\right)\right].
\]
Again, an application of Young's inequality yields that this expectation is bounded by
 \[
 \E[|U_T|^{N_{\word{2}}(\word{w})+\epsilon}] + \sum_{i=1}^m \E\left[\left|\int_0^T \left\langle  \word{u_i},\widehat{\mathbb{Y}}^U_t\right\rangle  dt\right|^{\frac{N_{\word{2}}(\word{w})+\epsilon}{N_{\word{2}}(\word{u_i})+\epsilon_{1,i}}}\right] + \sum_{j=1}^{n} \E\left[\left|\int_0^T \left\langle  \word{v_j},\widehat{\mathbb{Y}}^U_t \right\rangle  \circ dW_t\right|^{\frac{N_{\word{2}}(\word{w})+\epsilon}{N_{\word{2}}(\word{v_j})+\epsilon_{2,j}}}\right],
 \]
where $\epsilon_{1,i}, \epsilon_{2,j} > 0$ are arbitrary, with $\sum\limits_{i=1}^m \epsilon_{1,i} + \sum\limits_{j=1}^{n} \epsilon_{2,j} = \epsilon$. The first term is of the correct order since $N_{\word{2}}(\word{w}) \leq N$. For the second one, we use H\"older's inequality and Lemma~\ref{lem:expYV} to obtain
{\begin{align*}
\E\left[\left|\int_0^T \left\langle  \word{u_i},\widehat{\mathbb{Y}}^U_t\right\rangle  dt\right|^{\frac{N_{\word{2}}(\word{w})+\epsilon}{N_{\word{2}}(\word{u_i})+\epsilon_{1,i}}}\right] 
    &\lesssim 
    \int_0^T \E \left|\left\langle  \word{u_i},\widehat{\mathbb{Y}}^U_t\right\rangle\right|^{\frac{N_{\word{2}}(\word{w})+\epsilon}{N_{\word{2}}(\word{u_i})+\epsilon_{1,i}}} dt
    \\ &\lesssim
    1 + \int_0^T \E \left|\left\langle  \word{u_i},\widehat{\mathbb{Y}}^U_t\right\rangle\right|^{\max\left\{\frac{N_{\word{2}}(\word{w})+\epsilon}{N_{\word{2}}(\word{u_i})+\epsilon_{1,i}}, 2\right\}} dt
    \\ & \lesssim 
    1 + \int_0^T \E \left[|U_t|^{\max\left\{N_{\word{2}}(\word{w})+\epsilon, 2(N_{\word{2}}(\word{u_i})+\epsilon_{1,i})\right\}}\right]dt
    \\ & \lesssim 
    1 + \int_0^T \E \left[|U_t|^{2(N - 1) + \epsilon}\right]dt.
\end{align*}
}
For the second term, the It\^o--Stratonovich correction is treated as in the inequality above, and for the It\^o integral, we use the Burkholder--Davis--Gundy inequality and Lemma~\ref{lem:expYV} to obtain that
{\begin{align*}
\E\left[\left|\int_0^T \left\langle  \word{v_j},\widehat{\mathbb{Y}}^U_t \right\rangle   dW_t\right|^{\frac{N_{\word{2}}(\word{w})+\epsilon}{N_{\word{2}}(\word{v_j})+\epsilon_{2,j}}}\right] &\lesssim \E\left[\left|\int_0^T \left\langle  \word{v_j},\widehat{\mathbb{Y}}^U_t \right\rangle^2   dt\right|^{\frac{N_{\word{2}}(\word{w})+\epsilon}{2(N_{\word{2}}(\word{v_j})+\epsilon_{2,j})}}\right] \\
&\lesssim 1 + \E\left[\left|\int_0^T \left\langle  \word{v_j},\widehat{\mathbb{Y}}^U_t \right\rangle^2   dt\right|^{\max\left\{\frac{N_{\word{2}}(\word{w})+\epsilon}{2(N_{\word{2}}(\word{v_j})+\epsilon_{2,j})}, 1\right\}}\right] \\
&\lesssim 1 + \E\left[ \int_0^T \left|\left\langle  \word{v_j},\widehat{\mathbb{Y}}^U_t \right\rangle\right|^{\max\left\{\frac{N_{\word{2}}(\word{w})+\epsilon}{N_{\word{2}}(\word{v_j})+\epsilon_{2,j}}, 2\right\}} dt\right]
\\
&\lesssim 1 + \E\left[ \int_0^T \left|U_t\right|^{\max\left\{{N_{\word{2}}(\word{w})+\epsilon}, 2(N_{\word{2}}(\word{v_j})+\epsilon_{2,j})\right\}} dt\right]
\\ & \lesssim 
1 + \int_0^T \E \left[|U_t|^{2(N - 1) + \epsilon}\right]dt.
\end{align*}
}
This proves \eqref{eq:term2}. 

{To conclude, combining \eqref{eq:third_term_bound}, \eqref{eq:second_term_bound}, and \eqref{eq:term2}, we observe that we have proven the existence of polynomial functions $P$ and $Q$ of orders strictly less than $2N$ and $N + 1$, respectively, such that, for any $u \in \mathcal{U}_q$,  
\begin{align*}
    {\E} \left[ \int_0^T  \left(\frac{\bar{\rho}^2 m^2 - m}{2} \left\langle \bsigma, \widehat{\mathbb{X}}^u_s \right\rangle^2 +  \rho m  \left\langle \bsigma, \widehat{\mathbb{X}}^u_s \right\rangle u_s - \frac{u_s^2}{2} \right)ds\right] \leq \E\left[\int_0^T \left(-c_1 U_t^{2N} + P(|U_t|)\right) dt +  \left(- c_4U_T^{N+1} + Q(|U_T|)\right)\right].
\end{align*}
Note that the polynomials under the expectation are bounded from above uniformly in $u$. This implies the finiteness of $\mathcal{V}(\rho,m,T)$, and by \eqref{eq:BD}, this proves the finiteness of $\E[S_T^m]$.
}

{

\subsection{Proof of Proposition \ref{prop:crit}}\label{sect:proof_moments_prop}
Without loss of generality, we assume that $\alpha > 0$ and  
$\rho = -\sqrt{1 - \frac{1}{m}} < 0.$
In this case, equation~\eqref{eq:BD} becomes  
\begin{align*}
 \ln \E\left[\frac{S_T^m}{S_0^m}\right]  
 &= \sup_{(u_t)_{t \ge 0} \in \mathcal{U}_q}  
 \E \left[ \int_0^T \left( - \frac{u_t^2}{2} - \kappa \left\langle \bsigma, \widehat{\mathbb{X}}^u_t \right\rangle u_t \right) dt \right],
\end{align*}
where $\kappa := \rho m = \sqrt{m^2 - m}$.

In the case $\bsigma = \alpha\word{222} + \beta\word{221}$, which we consider here, the objective functional can be expressed more explicitly as  
\begin{equation}
\E \left[ 
- \frac{1}{2} \int_0^T u_t^2 \, dt
- \frac{\alpha \kappa}{6} \int_0^T (B_t + U_t)^3 \, dU_t
- \frac{\beta \kappa}{2} \int_0^T \int_0^t (B_s + U_s)^2 \, ds \, dU_t 
\right].
\end{equation}

By Itô’s formula, the expectation of the middle term equals  
\[
\E\left[-\frac{\alpha\kappa}{6} \int_0^T (B_t + U_t)^3 \, dU_t\right]
= \E\left[-\frac{\alpha\kappa}{24}(B_T + U_T)^4 + \frac{\alpha\kappa}{4} \int_0^T (B_t + U_t)^2 \, dt \right],
\]
so that the objective functional becomes
\begin{equation}\label{eq:obj_fct_mom}
\E \left[
- \frac{1}{2} \int_0^T u_t^2 \, dt
- \frac{\alpha \kappa}{24}(B_T + U_T)^4
+ \frac{\alpha \kappa}{4} \int_0^T (B_t + U_t)^2 \, dt
- \frac{\beta \kappa}{2} \int_0^T \int_0^t (B_s + U_s)^2 \, ds \, dU_t
\right].
\end{equation}

\paragraph{Case $\beta = 0$.} 
To show the finiteness of moments, we take $u \in \mathcal{U}_q$ and decompose it in the following way
$$
u_t = \dfrac{1}{T}U_T + \tilde u_t, \quad U_t = \dfrac{t}{T}U_T + \tilde U_t,
$$
where $\tilde u_t := u_t - \dfrac{1}{T}U_T$ verifies $\tilde U_T = \int_0^T\tilde u_t\, dt = 0$. Using the inequality 
\begin{equation}\label{eq:ineq_quad}
    (a + b)^2 \leq (1+\epsilon) a^2 + (1+\epsilon^{-1})b^2, \quad \epsilon > 0,
\end{equation}
we can bound the third term in \eqref{eq:obj_fct_mom} as follows:
\begin{align*}
    \E\left[\frac{\alpha\kappa}{4}\int_0^T (B_t+U_t)^2 \, dt\right] 
    &= \E\left[\frac{\alpha\kappa}{4}\int_0^T \left(B_t+\dfrac{t}{T}U_T + \tilde U_t\right)^2 \, dt\right] \\
&\leq \E\left[\frac{\alpha\kappa}{4}\int_0^T \left((1 + \epsilon^{-1})\left(B_t + \dfrac{t}{T}U_T\right)^2+(1 + \epsilon)\tilde U_t^2\right) \, dt\right] \\
&\leq (1 + \epsilon^{-1})\frac{\alpha\kappa}{4}\E\left[{T^2} + \dfrac{2}{3}TU_T^2\right] + (1 + \epsilon)\frac{\alpha\kappa}{4}\E\left[\int_0^T \tilde U_t^2 \, dt\right].
\end{align*}
Since $\tilde u_t = \tilde U_t'$ and $\tilde U_0 = \tilde U_T = 0$, we can apply the Wirtinger's inequality~\citep[Chapter 7.7, 257]{hardy1952inequalities}:
\[
\int_0^T \tilde U_t^2 \, dt
\leq \frac{T^2}{\pi^2}\int_0^T \tilde u_t^2 \, dt,
\]
so that the third term is controlled by
\begin{equation}\label{eq:third_term_control}
    \E\left[\frac{\alpha\kappa}{4}\int_0^T (B_t+U_t)^2 \, dt\right] \leq C + C_{\epsilon}\E[U_T^2] + (1 + \epsilon)\frac{\alpha\kappa T^2}{4\pi^2}\E\left[\int_0^T \tilde u_t^2 \, dt\right],
\end{equation}
for some $C, C_\epsilon > 0$.

Similarly, one can control the first term in \eqref{eq:obj_fct_mom} by
\begin{equation}\label{eq:first_term_control}
    - \frac{1}{2}\E \left[\int_0^T u_t^2 \, dt  \right] \leq \left(-1 + \epsilon\right)\frac{1}{2}\E \left[\int_0^T \tilde u_t^2 \, dt  \right] + \dfrac{1}{2}\left(-1 + \epsilon^{-1}\right)\E\left[\dfrac{1}{T}U_T^2\right].
\end{equation}
Combining \eqref{eq:first_term_control} and \eqref{eq:third_term_control}, and noting that 
$\E[U_T^2]$ is dominated by the second term in \eqref{eq:obj_fct_mom}, we conclude that 
the objective functional \eqref{eq:obj_fct_mom} is bounded by  
\[
C_{0,\epsilon} + \E\!\left[\left(\!(-1+\epsilon)\frac{1}{2} + (1+\epsilon)\frac{\alpha\kappa T^2}{4\pi^2}\!\right)
\int_0^T \tilde{u}_t^2 \, dt\right],
\]
for some constant $C_{0,\epsilon} > 0$.  
The coefficient in front of $\int_0^T \tilde{u}_t^2 \, dt$ is negative if  
\(
T^2 < \left(\dfrac{1-\epsilon}{1+\epsilon}\right)\dfrac{2\pi^2}{\alpha\kappa},
\)
which holds for sufficiently small $\epsilon$ whenever  
\(
T^2 < \dfrac{2\pi^2}{\alpha\kappa}.
\)
This establishes the finiteness of the moments.

To show the explosion, consider a deterministic control 
\(
U^\lambda_t = \lambda \psi(t),
\)
where $\lambda \in \R$ and $\psi \in C^1([0,\, T])$ satisfies 
$\psi(0) = \psi(T) = 0$ and 
\(
\int_0^T \psi(t)^2 \, dt = \frac{T^2}{\pi^2}\int_0^T \psi'(t)^2 \, dt > 0.
\)
These conditions are verified, for instance, by $\psi(t) = \sin\!\left(\frac{\pi t}{T}\right)$. Hence, the objective functional \eqref{eq:obj_fct_mom} becomes
\[
\E \left[
- \frac{\lambda^2}{2}\int_0^T \psi'(t)^2 \, dt
- \frac{\alpha\kappa}{24} B_T^4
+ \frac{\alpha\kappa}{4} \int_0^T (B_t + \lambda \psi(t))^2 \, dt
\right].
\]
This expression is quadratic in $\lambda$, with leading coefficient 
\(
\left(\dfrac{\alpha\kappa T^2}{4\pi^2} - \dfrac{1}{2}\right)
\int_0^T \psi'(t)^2 \, dt.
\)
Letting $\lambda \to \infty$, we conclude that the value function diverges to infinity whenever 
\(
T^2 > \dfrac{2\pi^2}{\alpha\kappa}.
\)

\paragraph{Case $\beta \neq 0$.} 
It remains to show that the value function is infinite when $\beta \neq 0$. 
Consider the deterministic control 
\(
U^\lambda_t = \lambda \phi(t),
\)
where $\lambda \in \R$ and $\phi \in C^1([0,\, T])$ satisfies $\phi(0) = \phi(T) = 0$ and 
\(
\int_0^T \left(\int_0^t \phi(s)^2 \, ds\right) \phi'(t) \, dt \neq 0.
\)
By integration by parts, the latter condition is equivalent to
\[
\int_0^T \left(\int_0^t \phi(s)^2 \, ds\right) \phi'(t) \, dt
= \phi(T)\int_0^T \phi(s)^2 \, ds
- \int_0^T \phi(t) \, d\!\left(\int_0^t \phi(s)^2 \, ds\right)
= -\int_0^T \phi(s)^3 \, ds \neq 0.
\]

Substituting this control into \eqref{eq:obj_fct_mom}, we see that the objective functional becomes
\[
\E \left[
- \frac{\lambda^2}{2}\int_0^T \phi'(t)^2 \, dt
- \frac{\alpha\kappa}{24} B_T^4
+ \frac{\alpha\kappa}{4} \int_0^T (B_t + \lambda \phi(t))^2 \, dt
- \frac{\beta\kappa}{2} \int_0^T \left(\int_0^t (B_s + \lambda \phi(s))^2 \, ds\right)
\lambda \phi'(t) \, dt
\right].
\]
This expression is a cubic polynomial in $\lambda$, whose leading coefficient is
\(
-\dfrac{\beta\kappa}{2}
\int_0^T \left(\int_0^t \phi(s)^2 \, ds \right) \phi'(t) \, dt \neq 0,
\)
and is therefore unbounded from above as $\lambda \to \pm\infty$. 
This completes the proof.
}

\appendix
{\section{Boué-Dupuis formula and proof of (\ref{eq:BD})} \label{sec:BD}

We will use the following variant of the Boué-Dupuis formula, which is a minor extension on recent results of \cite{HW22}.

In this section, $W=(W_t)_{0\leq t \leq T}$ is a ($d$-dimensional) Brownian motion under the measure $\P$, $(\mathcal{F}_t)_{t \geq 0}$ is its augmented natural filtration, and for any $q \geq 2$, we let
\[
\mathcal{U}_q = \left\{ (u_t)_{t \geq 0} (\mathcal{F}_t)-\mbox{progressively measurable with }{\E}\left[ \int_0^T |u_t|^q dt \right] < +\infty \right\}.
\]
\begin{proposition} \label{prop:BDext}
    Let $q\geq 2$ be fixed, and $F : C([0,T],\R^d) \to \R$ be (Borel-)measurable and such that 
   \begin{equation} \label{eq:asnBDq}
        \E \left[\left| F\left(W+\int_0^{\cdot} u\right) \right|\right] < \infty, \quad  u \in \mathcal{U}_q.
   \end{equation}
Then it holds that
\[
\ln \E\left[e^{F(W)}\right] = \sup_{u \in \mathcal{U}_q} \E \left[F\left(W+\int_0^{\cdot}u\right) - \frac{1}{2} \int_0^T u_s^2 ds\right].
\]
\end{proposition}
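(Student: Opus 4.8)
The plan is to deduce Proposition~\ref{prop:BDext} from the version of the Boué--Dupuis formula proved in \cite{HW22} --- valid for functionals that are bounded above and satisfy an integrability condition of the type \eqref{eq:asnBDq} --- by truncating $F$ from above and passing to the limit; the role of \eqref{eq:asnBDq} is to supply, control by control, an integrable majorant for the truncated functionals, which is what makes the limit interchangeable with the supremum. (The underlying bounded-functional result is itself proved via martingale representation, a Girsanov change of measure, and a Gibbs-type variational inequality between relative entropy and exponential moments; I would invoke it as a black box.)

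First I would set $F_n := F \wedge n$ for $n \in \N$, so that $F_n$ is bounded above, $F_n \uparrow F$ and $e^{F_n} \uparrow e^{F}$. Since $|F_n(W+\int_0^{\cdot} u)| \le |F(W+\int_0^{\cdot} u)|$ for every $u \in \mathcal{U}_q$, the functional $F_n$ inherits \eqref{eq:asnBDq} and so falls within the scope of \cite{HW22}, which yields
\[
R_n := \sup_{u\in\mathcal{U}_q}\E\left[F_n\left(W+\int_0^{\cdot} u\right) - \tfrac{1}{2}\int_0^T |u_s|^2\,ds\right] \;=\; \ln\E\left[e^{F_n(W)}\right].
\]
By monotone convergence $\ln\E[e^{F_n(W)}] \uparrow \ln\E[e^{F(W)}]$ in $(-\infty,+\infty]$, so $R_n \to \ln\E[e^{F(W)}]$.

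It would then remain to identify $\lim_n R_n$ with $R := \sup_{u\in\mathcal{U}_q}\E[F(W+\int_0^{\cdot} u) - \tfrac{1}{2}\int_0^T|u_s|^2\,ds]$, the right-hand side of the claimed identity. On one hand, $F_n \le F$ gives $R \ge R_n$ for every $n$ (each term of $R_n$ is dominated by the corresponding term of $R$), hence $R \ge \lim_n R_n = \ln\E[e^{F(W)}]$. On the other hand, fix $u \in \mathcal{U}_q$: by \eqref{eq:asnBDq} the variables $F_n(W+\int_0^{\cdot} u)$ are dominated in modulus by $|F(W+\int_0^{\cdot} u)| \in L^1$ uniformly in $n$ and converge pointwise to $F(W+\int_0^{\cdot} u)$, so dominated convergence gives
\[
\E\left[F\left(W+\int_0^{\cdot} u\right) - \tfrac{1}{2}\int_0^T|u_s|^2\,ds\right] = \lim_{n\to\infty} \E\left[F_n\left(W+\int_0^{\cdot} u\right) - \tfrac{1}{2}\int_0^T|u_s|^2\,ds\right] \le \lim_{n\to\infty} R_n = \ln\E\left[e^{F(W)}\right];
\]
taking the supremum over $u$ gives $R \le \ln\E[e^{F(W)}]$. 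Combining the two inequalities yields $R = \ln\E[e^{F(W)}]$, both sides understood as $+\infty$ in the degenerate case $\E[e^{F(W)}]=\infty$.

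The one genuinely delicate point is this last interchange of $\sup_u$ and $\lim_n$: it goes through precisely because \eqref{eq:asnBDq} furnishes, for each fixed control $u$, the integrable majorant $|F(W+\int_0^{\cdot} u)|$ needed to push $\E[F_n(W+\int_0^{\cdot} u)]$ to $\E[F(W+\int_0^{\cdot} u)]$, and because the truncation $F_n=F\wedge n$ is monotone, which is what forces $\E[e^{F_n(W)}]\to\E[e^{F(W)}]$ --- and hence keeps the identity meaningful --- even when the latter is infinite.
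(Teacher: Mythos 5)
Your proof is correct and follows essentially the same route as the paper's: truncate from above via $F_n = F\wedge n$, apply the Boué--Dupuis formula of \cite{HW22} to each $F_n$, and pass to the limit using monotone convergence on $\ln\E[e^{F_n(W)}]$ and dominated convergence (with the integrable majorant supplied by \eqref{eq:asnBDq}) on the control side. The only detail the paper makes explicit that you gloss over is why the cited result applies with the supremum taken over $\mathcal{U}_q$ (namely, \cite{HW22} prove the formula for both $\mathcal{U}_2$ and the bounded controls $\mathcal{U}_b$, and $\mathcal{U}_b \subset \mathcal{U}_q \subset \mathcal{U}_2$), a minor point of bookkeeping rather than a gap.
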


\begin{proof}
We proceed as in \cite[Remark 1.2 (1)]{HW22} by a truncation argument. For $n \geq 0$, let $F_n = F \wedge n$, then it holds that $\E[|F_n(W)|]$ and $\E[e^{F_n(W)}]$ are finite, so that we can apply the results of \cite{HW22} to obtain
\[
\ln \E\left[e^{F_n(W)}\right] = \sup_{u \in \mathcal{U}_q} \E \left[F_n\left(W+\int_0^{\cdot}u\right) - \frac{1}{2} \int_0^T u_s^2 ds\right]
\]
(They prove that the above formula holds for the supremum taken over both $\mathcal{U}_2$ and the set of bounded controls $\mathcal{U}_b$, see Theorem 1.1 and Corollary 2.11 therein. This implies the case of arbitrary $q \geq 2$ since $\mathcal{U}_b \subset \mathcal{U}_q \subset \mathcal{U}_2$).

We then have by monotone convergence that
\begin{align*}
 \ln  \E\left[e^{F(W)}\right] &= \sup_{n} \ln \E\left[e^{F_n(W)}\right] \\
   &= \sup_{n \geq 0} \sup_{u \in \mathcal{U}_q} \E \left[F_n\left(W+\int_0^{\cdot}u\right) - \frac{1}{2} \int_0^T u_s^2 ds\right] \\
   &= \sup_{u \in \mathcal{U}_q} \sup_{n \geq 0} \E \left[F_n\left(W+\int_0^{\cdot}u\right) - \frac{1}{2} \int_0^T u_s^2 ds\right]\\
   &= \sup_{u \in \mathcal{U}_q} \E \left[F\left(W+\int_0^{\cdot}u\right) - \frac{1}{2} \int_0^T u_s^2 ds\right]
\end{align*}
where the last equality follows from \eqref{eq:asnBDq} and dominated convergence.
\end{proof}

\begin{proof}[Proof of \eqref{eq:BD}]
    We let 
    \[
    F(W)=\rho m \int_0^T \bracketsig[s]{\bsigma} \,d W_s + \frac{\bar{\rho}^2 m^2 - m}{2} \int_0^T \bracketsig[s]{\bsigma}^2 \, ds
    \]
    and check that it satisfies \eqref{eq:asnBDq} for $q \geq 2N$.
    For $u \in \mathcal{U}_q$, letting $\widehat{\mathbb{X}}^u_t$ denote the signature of $\widehat{X}^u_t = (t, X^u_t)$ with $X^u_t$ defined by
\[
X^u_t = W_t + \int_0^{t} u_s ds,
\]
one has that
\[
F\left(W+ \int_0^{\cdot} u\right) =\rho m \int_0^T \left\langle \bsigma, \widehat{\mathbb{X}}^u_s \right\rangle\,(d W_s+u_s ds) + \frac{\bar{\rho}^2 m^2 - m}{2} \int_0^T \left\langle \bsigma, \widehat{\mathbb{X}}^u_s \right\rangle^2 \, ds
\]
As in Section 10.2, there exists $\tilde{\bsigma} \in \tTA[3]{N}$, such that $\left\langle \bsigma, \widehat{\mathbb{X}}^u_t \right\rangle = \left\langle \tilde\bsigma,\widehat{\mathbb{Y}}^U_t \right\rangle$ where $\widehat{\mathbb{Y}}^U_t$ is the signature of $Y^U_t=(t, \int_0^{t} u_s ds, W_t)$. By Lemma \ref{lem:expYV} and Hölder's inequality, this implies that
\begin{equation} \label{eq:boundBDq}
  \E \left[\int_0^T \left\langle \bsigma, \widehat{\mathbb{X}}^u_s \right\rangle^2 \, ds\right] \leq C \left( 1 + \E\left[\int_0^T |u_s|^{2N} ds\right] \right) < \infty  
\end{equation}
since $q \geq 2N$. Since it holds that
\[\E \left[ \left| F\left(W+ \int_0^{\cdot} u\right) \right| \right] \leq C \left( 1 + \E\left[ \int_0^T u_s^2 ds\right] + \E\left[ \int_0^T \left\langle \bsigma, \widehat{\mathbb{X}}^u_s \right\rangle^2  ds\right]\right)
\]
(using It\^o isometry for the term in $dW$ and Young's inequality for the term in $u_s ds$), we see that \eqref{eq:asnBDq} holds.

We then use Proposition \ref{prop:BDext} to obtain that
\begin{align*}
   \ln \E\left[e^{F(W)}\right] &=  \sup_{u \in \mathcal{U}_q} {\E} \left[- \int_0^T\frac{u_s^2}{2} ds + \int_0^T \rho m  \left\langle \bsigma, \widehat{\mathbb{X}}^u_s \right\rangle (dW_s + u_sds) +   \frac{\bar{\rho}^2 m^2 - m}{2}\int_0^T \left\langle \bsigma, \widehat{\mathbb{X}}^u_s \right\rangle^2ds\right] \\
&=  \sup_{u \in \mathcal{U}_q} {\E} \left[- \int_0^T\frac{u_s^2}{2} ds + \int_0^T \rho m  \left\langle \bsigma, \widehat{\mathbb{X}}^u_s \right\rangle  u_s ds +   \frac{\bar{\rho}^2 m^2 - m}{2}\int_0^T \left\langle \bsigma, \widehat{\mathbb{X}}^u_s \right\rangle^2ds\right],
\end{align*}
where we have used that the stochastic integral has zero expectation by \eqref{eq:boundBDq}.
\end{proof}
}

\bibliographystyle{plainnat}
\bibliography{main}

\end{document}